\spnewtheorem{heuristic}{Heuristic}{\bfseries}{\itshape}
\spnewtheorem{expfact}{Experimental fact}{\bfseries}{\itshape}
\spnewtheorem{unproven}{Unproven assertion}{\itshape}{\upshape}
\let\phi\varphi
\let\epsilon\varepsilon
\let\subset\subseteq
\let\supset\supseteq
\let\subsetneq\varsubsetneq
\DeclareMathOperator{\im}{im}
\DeclareMathOperator{\rk}{rk}
\DeclareMathOperator{\coker}{coker}
\DeclareMathOperator{\Sym}{Sym}
\newcommand{\PP}{\mathbf{P}}
\newcommand{\corps}{\mathbb}
\newcommand{\F}{\corps{F}}
\newcommand{\K}{\corps{K}}
\newcommand{\longto}{\longrightarrow}
\newcommand{\surj}{\twoheadrightarrow}
\newcommand{\tens}{\otimes}
\newcommand{\card}{\#}
\newcommand{\linspan}[2][]{\langle #2\rangle_{#1}}
\newcommand{\deux}[1][2]{^{\langle #1\rangle}}
\newcommand{\scal}[2]{\langle #1,#2\rangle}
\newcommand{\id}{\operatorname{id}}
\newcommand{\ev}{\operatorname{ev}}
\newcommand{\ind}{\operatorname{ind}}
\newcommand{\defect}{\operatorname{def}}
\newcommand{\w}{\operatorname{w}}
\newcommand{\dist}{\operatorname{d}}
\newcommand{\dmin}{\dist_{\min}}
\newcommand{\rmax}{r_{\max}}
\newcommand{\rmin}{r_{\min}}
\newcommand{\rcrit}{r_{\operatorname{crit}}}
\newcommand{\code}[1][C]{{\fontfamily{lmss}\selectfont\mathsf{#1}}}
\newcommand{\Alt}{\operatorname{\code[Alt]}}
\newcommand{\Gop}{\operatorname{\code[Gop]}}
\newcommand{\GRS}{\operatorname{\code[GRS]}}
\newcommand{\Classic}{\texttt{Classic McEliece}\xspace}
\newcommand{\mot}{\mathbf}
\newcommand{\mat}{\mathbf}
\newcommand{\beq}{\begin{equation}}
\newcommand{\eeq}{\end{equation}}
\begin{document}

\title{The syzygy distinguisher\thanks{This is an updated and expanded version of the work with the same title presented at Eurocrypt 2025 \cite{EC:Randriambololona25}. Details about the new material can be found in the Introduction.}}

\subtitle{\scriptsize{\normalfont(version to be submitted to Journal of Cryptology)\\ \vspace{-.9\baselineskip}}}

\author{Hugues Randriambololona}
\institute{ANSSI, Laboratoire de cryptographie\\ \& Télécom Paris, C\textsuperscript{2}\\ \email{hugues.randriam@ssi.gouv.fr}}

\maketitle

\begin{abstract}
We present a new distinguisher for alternant and Goppa codes,
whose complexity is subexponential in the error-correcting capability,
hence better than that of generic decoding algorithms. Moreover it does not suffer
from the strong regime limitations of the previous distinguishers or structure recovery
algorithms: in particular, it applies to the codes used in the \Classic candidate
for postquantum cryptography standardization.
The invariants that allow us to distinguish are graded Betti numbers of the homogeneous
coordinate ring of a shortening of the dual code.

Since its introduction in 1978, this is the first time an analysis (in the CPA model)
of the McEliece cryptosystem breaks the exponential barrier.
\end{abstract}

\section{Introduction}

In the McEliece cryptosystem \cite{perso:McEliece78}, a private message is encoded
as a codeword in a public binary Goppa code \cite{Goppa70}, with some noise added. Knowing the secret algebraic
data that served to construct the public code, the legitimate recipient has an efficient
decoding algorithm and can recover the message. However, to an attacker, the public code
looks like a random code, and removing the noise is untractable.

At the time of publication of \cite{perso:McEliece78} it was not customary to fully formalize the
expected security properties of the scheme. In modern language we would say that it aims
at one-wayness against chosen plaintext attacks (OW-CPA). There are now variants of the McEliece scheme
that claim stronger security properties (e.g. IND-CCA2), but for this they rely on the hypothesis
that the original McEliece scheme is OW-CPA. This makes it even more relevant today to scrutinize whether
this hypothesis can be reduced to more fundamental theoretical assumptions.

And indeed, as the informal description of the system naturally suggests, a traditional security proof
reduces this hypothesis to the following two assumptions:
\begin{enumerate}
\item\label{indistinguishability} Goppa codes are computationally indistinguishable from generic linear codes
(say, when described by generator matrices in reduced row echelon form).
\item\label{decoding} Decoding a generic linear code is difficult.
\end{enumerate}

Cryptanalytic attempts can be classified depending on whether they target assumption \ref{indistinguishability}
or \ref{decoding}.

First, those aiming at assumption~\ref{indistinguishability} themselves come in two flavours:
\begin{itemize}
\item \emph{Distinguishers} address the decisional version of the problem: given a generator matrix,
decide whether it is that of a Goppa code or a generic code.
\item \emph{Key recovery attacks} address the computational version:
recover the Goppa structure of the code, or at least an equivalent one \cite{EC:Gibson91}, if it exists.
\end{itemize}
Assumption~\ref{indistinguishability} was first introduced formally in \cite{AC:CouFinSen01}.
Although certain arguments, such as the fact that the class of Goppa codes is very large,
make this assumption plausible, it remains quite ad hoc from a theoretical
point of view. Its sole virtue is that it passed the test of time.
There seems to be something special with Goppa codes happening there: indeed, variants of the
McEliece system were proposed, with Goppa codes replaced with other types of codes allowing
more manageable parameters; however, most of these propositions were eventually broken, as the
hidden structure of the codes could be recovered.
Also, although the McEliece cryptosystem marked the birth of code-based cryptography,
the idea of having an object (such as a Goppa code) constructed from data defined over an extension
field and then masked by considering it over a small subfield, was then found in other branches
of cryptography: for instance it is at the basis of the HFE cryptosystem \cite{EC:Patarin96}
in multivariate cryptography.
Such systems can often be attacked by algebraic methods
(including, but not limited to, the use of Gröbner basis algorithms).
This suggests that if a weakness in assumption~\ref{indistinguishability} were to be uncovered one day,
then algebraic methods should be a tool of choice.
However, up to now, the best distinguishers and key recovery algorithms such as those
in \cite{FGOPT13}\cite{EC:CouOtmTil14}\cite{AC:CouMorTil23}\cite{BarMorTil24}\cite{EPRINT:LemMorTil25}\cite{tangent}
only apply to alternant or Goppa codes with very degraded parameters.
Against McEliece with cryptographically relevant parameters, either they have exponential complexity with large
constants, which makes them useless, or worse: they simply cease to work.

Assumption~\ref{decoding}, on the other hand, stands on a firm theoretical ground:
the decoding problem for generic linear codes is known to be NP-hard \cite{BMT78}.
As such, it is believed to resist the advent of quantum computers, which made
the McEliece system a good candidate for postquantum cryptography.
However, starting with Prange's information set decoding algorithm \cite{Prange62},
generic decoding methods saw continuous incremental improvements over time.
Joint with technological progress in computational power, this eventually led
to practical \emph{message recovery attacks} against the McEliece system with its
initially proposed parameter set.
However, it appears that this weakness was only the result of a too optimistic
choice of parameters.
With the need for new standards for postquantum cryptography,
an updated version named \Classic was proposed, still relying on binary Goppa codes,
but with more conservative parameter sets
adapted to resist the best generic decoding attacks with some margin of safety.
In the design rationale of this new system \cite{rationale22} one can find
an impressive list of several dozens of papers on generic decoding algorithms,
ranging over the last five decades. It is then observed that all these algorithms
have complexity exponential in the error-correcting capability of the code. Better, the constant in this
exponential is still the same as in Prange's original result: all improvements
remain confined in terms of lesser order!
This might give a feeling that we could possibly have reached the intrinsic
complexity of cryptanalysis of this system.

Preventing the spread of this belief, we present a new distinguisher for alternant and Goppa codes,
i.e. a basic structural analysis of the McEliece cryptosystem,
whose asymptotic complexity is subexponential in the error-correcting capability,
hence better than that of generic decoding algorithms.
Moreover, for given finite parameters, it is more efficient than state-of-the-art distinguishers or key recovery
methods, while not suffering from their strong regime limitations:
in particular, it applies to the codes used in \Classic.

\subsection*{Principles and organization}

A natural strategy to build a distinguisher is to design code invariants --- quantities
that intrinsically depend only on the code, not on the choice of a generator matrix --- that
behave differently for the classes of codes we want to distinguish.
The invariants we use here are graded Betti numbers of the homogeneous
coordinate ring of a shortening of the dual code.
Generators of the dual of an alternant or Goppa code, after extension of scalars,
satisfy quadratic relations of a special form: they can be expressed as $2\times 2$ minors
of a matrix. As such, we can find relations between these quadratic relations, called \emph{syzygies}.
Then these syzygies also satisfy relations, and iterating this process we get higher syzygies up to
some order that we can estimate. However, for generic codes, we do not expect this to
happen in the same magnitude. This directly gives a distinguisher, at least in theory.
In practice, computing these spaces of syzygies only involves basic linear algebra:
they can be constructed iteratively, as the kernel of some generalized Macaulay matrices.
This can be done efficiently, except for the fact that the dimension of the spaces involved grows
exponentially.
Our last ingredient is then \emph{shortening}, which allows us to work with syzygies of a lesser order,
and keep these dimensions more under control.

Let us quickly illustrate our result with two basic examples.

First, our distinguisher can be seen as a generalization of the so-called \emph{square code} distinguisher,
first presented in \cite{FGOPT13}, reinterpreted in \cite{MarPel12}, and fully analyzed in \cite{MorTil23}.
Let $\code$ be a $[n,k]$-code, and $S_2$ the space of quadratic forms in $k$ indeterminates.
Let
\beq
\ev_2: S_2\longto\F^n
\eeq
be the evaluation map at the columns of a given generator matrix of $\code$.
Then the image of $\ev_2$ is the square code $\code\deux$, and its kernel is the space of quadratic relations $I_2(\code)$.
The dimensions of these spaces are related, and can be expressed as a Betti number:
\beq
\beta_{1,2}(\code)=\dim(I_2(\code))=\binom{k+1}{2}-\dim(\code\deux).
\eeq
Now \cite{FGOPT13} gives a lower bound on this $\beta_{1,2}(\code)$ when $\code$ is the dual of an alternant or (binary) Goppa code.
On the other hand, \cite{CCMZ15} shows $\beta_{1,2}(\code)=\left(\binom{k+1}{2}-n\right)^+$ with high probability when $\code$ is random.
If this quantity is smaller than the said lower bound, we can distinguish.

Likewise we claim that Theorem~2.8 of \cite{Eisenbud05} provides a $\beta_{2,3}$-based distinguisher for $\GRS$ codes
among MDS $[7,4]$-codes.
Indeed, for such codes, the square always fills the whole space --- so the square distinguisher does not apply ---
and we have $\dim(I_2(\code))=\binom{4+1}{2}-7=3$.
Let $Q_1,Q_2,Q_3$ be a basis of $I_2(\code)$.
By definition, $Q_1,Q_2,Q_3$ do not satisfy linear relations with coefficients in $\F$, however they can satisfy
relations whose coefficients are forms of degree $1$.
Such relations are also called degree~$3$ syzygies.
They live in the kernel of the degree~$3$ Macaulay matrix
\beq
\mat{M}_3: I_2(\code)\tens S_1\longto S_3
\eeq
where $S_1$ (resp. $S_3$) is the space of homogeneous linear (resp. cubic) forms in $k=4$ variables.
Now Theorem~2.8 of \cite{Eisenbud05} shows that for a non-$\GRS$ code the map $\mat{M}_3$ is injective.
On the other hand, if $\code$ is $\GRS$ with standard basis $\mot{y},\mot{y}\mot{x},\mot{y}\mot{x}^2,\mot{y}\mot{x}^3$
then we can take
\beq
Q_1=X_1X_3-X_2^2,\quad Q_2=X_1X_4-X_2X_3,\quad Q_3=X_2X_4-X_3^2
\eeq
and these satisfy the syzygies
\beq
X_1Q_3-X_2Q_2+X_3Q_1 \;=\; X_2Q_3-X_3Q_2+X_4Q_1 \;=\; 0.
\eeq
Thus we can distinguish by computing $\beta_{2,3}(\code)=\dim\ker(\mat{M}_3)$,
which will yield~$2$ for $\GRS$ and~$0$ for non-$\GRS$ MDS $[7,4]$-codes.

We generalize these examples to higher Betti numbers, following the exact same pattern:
\begin{itemize}
\item On one hand, we give lower bounds on the Betti numbers of algebraic codes (dual alternant, dual Goppa, and their shortened subcodes).
This is done in section~\ref{secEN}, using the Eagon-Northcott complex, a tool precisely crafted to detect long conjugate $\GRS$ subcodes.
\item On the other hand, we estimate the Betti numbers of random codes in terms of raw code parameters (length, dimension, distance).
This is done in section~\ref{secdefect}, partially relying on a natural heuristic: random codes are not expected to admit more syzygies than those forced by these parameters.
We do not have full proofs for this fact, but we provide experimental evidence and partial theoretical arguments that support it.
\end{itemize}
Prior to that, section~\ref{secBetti} explains how these invariants can be effectively computed.
Last, section~\ref{paramdist} combines everything and chooses parameters in order to optimize asymptotic complexity.
Of special importance in this regard is Proposition~\ref{ENrac} from section~\ref{secEN},
which shows that the property that the ideal of a code contains minors of a matrix of linear forms passes to its shortened subcodes.

\subsection*{Related (and unrelated) works}

As already noted, our distinguisher can be seen as a generalization of the square distinguisher of \cite{FGOPT13}.
Using an approach similar to ours, the work \cite{AC:CouMorTil23} also extends this square distinguisher
by exploiting special properties of the space of quadratic relations, but in a different direction:
the authors detect the existence of short relations by computing the Hilbert series of a convenient Pfaffian modeling
(and in some cases they even manage to recover the key).
Last, the key recovery attack in \cite{BarMorTil24} combines shortening of the dual code
with ideas from the square distinguisher to reduce parameters, and then performs a careful algebraic modeling exploiting
the symmetries of a system, that can finally be solved with Gröbner basis algorithms.
All these results have limited range of applicability, but they introduced numerous techniques
that influenced the present work.

After the first version of this text appeared, other works continued this series of results.
A refined analysis of the Pfaffian distinguisher can now be found in~\cite{EPRINT:LemMorTil25}.
Using purely geometric methods, \cite{tangent} presents a new key recovery attack; but quite
surprisingly, this attack works precisely in the regimes where those of \cite{AC:CouMorTil23} and \cite{BarMorTil24} apply!
Last, the recent work \cite{multiplicites} proposes a distinguisher based on higher degree relations vanishing with
higher multiplicities.

From a geometric point of view, the Betti numbers and the syzygies we consider are those
of a set of points in projective space (namely, defined by the columns of a parity check matrix of the code).
As such, they have been already extensively studied. Of notable importance to us are the works
\cite{GreLaz88}\cite{Lorenzini93}\cite{HirSim96}\cite{EisPop99},
in relation with the so-called \emph{minimal resolution conjecture} --- regardless of it being false in general:
we just request it being ``true enough''.
Initially, syzygies of sets of points were considered as a mere tool in the study of syzygies of curves.
They were then studied for themselves, but the focus was mostly on points in generic position,
over an algebraically closed field.
Keeping applications to coding theory and cryptography in mind, we will have more interest
in finite field effects.

Syzygies sometimes appear as a tool in cryptanalytic works, or in the study of Gröbner basis algorithms;
however in general only the first module of syzygies is considered, not those of higher order.
Likewise, a few works in coding theory (such as \cite{RenTap96} or \cite{Hansen03}) use homological
properties of finite sets of points; but the applications differ from ours.

Last, note that our approach is apparently unrelated to the series of works initiated with \cite{JohVer14}:
while these authors also define Betti numbers for codes, these are constructed
from the Stanley-Reisner ring of the code matroid, not the homogeneous coordinate ring.
This leads to different theories, although seeking links between the two could be an interesting project.

\subsection*{Additional material and new results since the Eurocypt 2025 version}

There are significative differences between this text and its prior version~\cite{EC:Randriambololona25}.
First, the supplementary material to~\cite{EC:Randriambololona25} is now included in the main text,
sometimes after some reworking.
Moreover, several new results were introduced.

\begin{enumerate}[(a)]
\item The discussion covering possible further improvements in the description of
the minimal resolution of algebraic codes has been incorporated in section~\ref{secEN}.
After some reworking, it now includes Corollary~\ref{corV2} that provides an explicit
family of linearly independent quadratic relations for dual alternant codes that matches
the lower bound of~\cite[Th.~20]{MorTil23} (thus giving an alternative
proof for this result while sticking to the approach of~\cite{FGOPT13}).
\item\label{newrmax} A tight lower bound on the number of columns of a matrix whose $2\times 2$ minors
give quadratic relations (and determine the $\rmax$) of Goppa codes is now stated
in Remark~\ref{improvedcorlowerboundGop} in section~\ref{secEN} (although its proof
is deferred to the forthcoming work~\cite{stability}).
\item The computation of the Betti diagram of $[k,k+1]$ MDS codes and of $[k,2k-1]$ $\GRS$ codes
has been incorporated in section~\ref{secdefect}.
\item The presentation of experimental data on defects of random codes also has been incorporated in section~\ref{secdefect}.
\item The new Proposition~\ref{rmax>k+1-d} in section~\ref{secdefect} proves a (very small) part of Experimental fact~\ref{Betti_dmin_dduale}, linking the $\rmax$ of a code to its minimum distance.
\item The two examples of application of the distinguisher to some Goppa codes with small parameters have been incorporated in section~\ref{paramdist}.
\item\label{newKW} A new variant of the distinguisher, based on Koszul cohomology and the (block) Wiedemann algorithm, has been introduced also in section~\ref{paramdist}.
\end{enumerate}
Some of these new results allow a noticeable improvement on the complexity of the distinguisher.
Point~\eqref{newrmax} allows to predict the optimal number $s$ of coordinates to shorten.
Point~\eqref{newKW} allows to use linear algebra with exponent $2$ instead of $\omega$
(and also to relax the dependency of the complexity estimate from some of the heuristics).
When applied e.g. to the \Classic 348864 parameters, each allows to gain roughly 60 bits of complexity,
and combined, they reduce the complexity estimate from $2^{528}$ to $2^{401}$.

\subsection*{Notation and conventions}

We use row vector convention.
We try to consistently use
lowercase bold font for codewords and vectors: $\mot{c}$, $\mot{x}$, $\mot{y}$;
uppercase bold for matrices: $\mat{G}$, $\mat{H}$, $\mat{M}$;
sans-serif for codes: $\code$, $\GRS$, $\Alt$, $\Gop$.

The book \cite{Eisenbud05} will be our main source on syzygies.
For codes, especially the link between powers of codes and the geometric view on coding theory, we will follow \cite{HR15}.

Given a field $\F$, we see $\F^n$ as the standard product algebra of dimension~$n$.
Thus $\F^n$ is not a mere vector space, it comes canonically equipped with componentwise
multiplication: for $\mot{x}=(x_1,\dots,x_n)$ and $\mot{y}=(y_1,\dots,y_n)$ in $\F^n$,
\beq\mot{x}\mot{y}=(x_1y_1,\dots,x_ny_n).\eeq
(Some authors call this the Schur product of $\mot{x}$ and $\mot{y}$;
how the name of this great mathematician ended associated with this trivial operation
is quite \emph{convoluted}.)

In any algebra, we can define a trace bilinear form. In the case of $\F^n$, this trace
bilinear form is the standard scalar product:
\beq\scal{\mot{x}}{\mot{y}}=x_1y_1+\cdots+x_ny_n.\eeq

A $k$-dimensional subspace $\code\subset\F^n$ is called a $[n,k]$-code (and a $[n,k]_q$-code in case $\F=\F_q$).
The orthogonal space $\code^\perp$ is called the \emph{dual code} of $\code$. 

Componentwise multiplication extends to codes, taking the linear span:
for $\code,\code'\subset\F^n$,
\beq\code\code'=\linspan[\F]{\mot{c}\mot{c'}:\:\mot{c}\in\code,\,\mot{c'}\in\code'}.\eeq
Powers $\code\deux[r]$ of a code are defined inductively: $\code\deux[0]=\F\cdot\mot{1}$
is the $1$-dimensional repetition code, and $\code\deux[r+1]=\code\deux[r]\code$.


If $\code$ is a $[n,k]$-code, a generator matrix for $\code$ is a $k\times n$ matrix $\mat{G}$
whose rows $\mot{c}_1,\dots,\mot{c}_k$ form a basis of $\code$.
A parity check matrix $\mat{H}$ for $\code$ is a generator matrix for $\code^\perp$.

Thanks to the algebra structure, polynomials in one or several variables can be
evaluated in $\F^n$. In particular, let
\beq S=\F[X_1,\dots,X_k] \eeq
be the algebra of polynomials in $k$ variables over $\F$, graded by total degree.
Evaluation at the rows $\mot{c}_1,\dots,\mot{c}_k$ of $\mat{G}$ then gives linear map
\beq \ev_{\mat{G}}: S \longto \F^n. \eeq
Observe that if $\mot{p}_1,\dots,\mot{p}_n$ are the columns of $\mat{G}$,
then for $f(X_1,\dots,X_k)\in S$ we have
\beq
\begin{split}
\ev_{\mat{G}}(f)&=f(\mot{c}_1,\dots,\mot{c}_k)\\
&=(f(\mot{p}_1),\dots,f(\mot{p}_n))
\end{split}
\eeq
where in the first line we have one evaluation of $f$ at a $k$-tuple of vectors,
while in the second line we have a vector of evaluations of $f$ at $k$-tuples of scalars.

A code is projective if it has dual minimum distance $\dmin(\code^\perp)\geq3$,
or equivalently if no two of the $\mot{p}_i$ are proportional. Any code can be ``projectivized''
by discarding (puncturing) coordinates, keeping only one $\mot{p}_i$ in each nonzero proportionality class.

Restricting to homogeneous polynomials of degree $r$, we have a \emph{surjective} map $S_r\longto\code\deux[r]$,
whose kernel we denote $I_r(\code)$.
We then define the homogeneous coordinate ring of $\code$ as the formal direct sum
\beq
\code\deux[\cdot]=\bigoplus_{r\geq0}\code\deux[r]
\eeq
and likewise its homogeneous ideal
\beq
I(\code)=\bigoplus_{r\geq0}I_r(\code).
\eeq
It turns out these are also the homogeneous coordinate ring and the homogeneous ideal of the finite set of points
\beq
\{\mot{p}_1,\dots,\mot{p}_n\}\subset\PP^{k-1}.
\eeq
The short exact sequence
\beq
0\longto I(\code)\longto S\longto \code\deux[\cdot]\longto 0
\eeq
makes $\code\deux[\cdot]$ a homogeneous quotient ring of $S$.
In this work we will use coordinates, but identifying $S$ with the symmetric algebra of $\code$
would allow to make all this coordinatefree.

Given $\mot{x},\mot{y}\in\F^n$,
all entries of $\mot{x}$ distinct, all entries of $\mot{y}$ nonzero,
the \emph{generalized Reed-Solomon code} of order $k$ with support vector $\mot{x}$
and multiplier $\mot{y}$ is
\beq\GRS_k(\mot{x},\mot{y})=\linspan[\F]{\mot{y},\mot{y}\mot{x},\dots,\mot{y}\mot{x}^{k-1}}=\{\mot{y}f(\mot{x}):\:f(X)\in\F[X]_{<k}\}\subset\F^n.\eeq
It is a $[n,k]$-code if $k\leq n$.

Now let $\F_q\subset\F_{q^m}$ be an extension of finite fields.
Given $\mot{x},\mot{y}\in(\F_{q^m})^n$ satisfying the same conditions as above,
the \emph{alternant code} of order $t$ and extension degree $m$ over $\F_q$,
with support $\mot{x}$ and multiplier $\mot{y}$, is
\beq\begin{split}
\Alt_t(\mot{x},\mot{y})&=\GRS_t(\mot{x},\mot{y})^\perp\cap(\F_q)^n\\
&=\{\mot{c}\in(\F_q)^n:\:c_1y_1x_1^j+\cdots+c_ny_nx_n^j=0\quad(0\leq j<t)\},
\end{split}\eeq
with parameters $[n,(\geq)n-mt]_q$.

Last, given a polynomial $g(X)\in\F_{q^m}[X]$ that does not vanish on any entry of $\mot{x}$,
the $q$-ary \emph{Goppa code} with support $\mot{x}$ and Goppa polynomial $g$ is
\beq\Gop(\mot{x},g)=\Alt_{\deg(g)}(\mot{x},g(\mot{x})^{-1}).\eeq

We will work mostly in the class
\beq\Alt^\perp_{q,m,n,t}\eeq
of \emph{dual} $q$-ary alternant codes of extension degree $m$, length $n$, and order $t$.
If $q$ is unspecified we take $q=2$. If $n$ is unspecified we take $n=q^m$.
We say a code $\code\in\Alt^\perp_{q,m,n,t}$ is \emph{proper} if it has dimension
\beq k=mt.\eeq
In this case, after extension of scalars, we have
\beq\label{decompose_alternant}
\code_{\F_{q^m}}=\GRS_t(\mot{x},\mot{y})\oplus\GRS_t(\mot{x}^q,\mot{y}^q)\oplus\cdots\oplus\GRS_t(\mot{x}^{q^{m-1}},\mot{y}^{q^{m-1}}).
\eeq
Likewise we define the corresponding class
\beq\Gop^\perp_{q,m,n,t}\eeq
of dual Goppa codes, and $\code\in\Gop^\perp_{q,m,n,t}$ is said proper if it is when seen in $\Alt^\perp_{q,m,n,t}$.
Also we define subclasses $\Gop^{\operatorname{irr},\perp}_{q,m,n,t}\subset\Gop^{\operatorname{sqfr},\perp}_{q,m,n,t}\subset\Gop^\perp_{q,m,n,t}$,
in which the Goppa polynomial is irreducible or squarefree, respectively.

\section{Minimal resolutions and graded Betti numbers}\label{secBetti}

\subsection*{Generalities}

We freely borrow results and terminology from \cite{Eisenbud05}, and then elaborate on the parts of the theory that we will need.

Let $S=\F[X_1,\dots,X_k]$ be the $k$-dimensional polynomial ring over $\F$, graded by total degree.
If $M_0$ is a finitely generated graded $S$-module, and $F_0$ is the free module on a minimal system
of homogeneous generators of $M_0$, then the (first) syzygy module of $M_0$ is $M_1=\ker(\;F_0\longto M_0\;)$.
Concretely, if $g_1,\dots,g_N$ are minimal generators of $M_0$, elements of $F_0$ can be seen as formal
sums $\sum_uf_u[g_u]$ with $f_u\in S$, where the $[g_u]$ are just formal symbols;
such a formal sum then lies in $M_1$ when the actual sum evaluated in $M_0$ satisfies $\sum_uf_ug_u=0$.

Iterating this construction, we obtain a minimal resolution
\beq
\cdots \longto F_2 \longto F_1 \longto F_0
\eeq
of $M_0$,
where the graded free modules $F_i$, together with the iterated syzygy modules $M_i$, are constructed inductively:
\begin{itemize}
\item $F_i$ is the free module on a minimal system of homogeneous generators of $M_i$
\item $M_{i+1}=\ker(\;F_i\longto M_i\;)$.
\end{itemize}
The number $\beta_{i,j}$ of degree $j$ elements in a minimal system of generators of $M_i$ does not depend on
the choices made, and is called the $(i,j)$-th graded Betti number of $M_0$.
Keeping track of the grading, we have
\beq
F_i=\bigoplus_{j\geq0}S(-j)^{\beta_{i,j}}
\eeq
where $S(-j)$ is the free rank $1$ module generated in degree $j$, so $S(-j)_d=S_{d-j}$.

It is customary to display the graded Betti numbers in the form of a Betti diagram, as follows:
\begin{equation*}
\arraycolsep=8pt
\begin{array}{c|ccccc}
& 0 & 1 & \dots & i & \dots\\
\hline
\vdots & \vdots & \vdots & & \vdots & \\
r & \beta_{0,r} & \beta_{1,r+1} & \dots & \beta_{i,r+i} & \dots \\ 
r+1 & \beta_{0,r+1} & \beta_{1,r+2} & \dots & \beta_{i,r+i+1} & \dots \\
\vdots & \vdots & \vdots & & \vdots & \\
\end{array}
\end{equation*}
with null entries marked as ``$-$'' for readability.

\begin{lemma}\label{quadrant}
Assume $M_i$ is generated in degrees $\geq D$, i.e. $\beta_{i,j}=0$ for all $j\leq D-1$.
Then $M_{i+1}$ is generated in degrees $\geq D+1$,
and by induction all the upper-right quadrant of the Betti diagram defined by $\beta_{i,D-1}$ vanishes.
\end{lemma}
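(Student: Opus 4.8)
The statement is essentially immediate from the *minimality* of the resolution, so the plan is to unwind the definitions and extract the one structural fact about minimal free resolutions that does the work: in a minimal resolution, the entries of the differentials have no constant terms, i.e. the image of $F_{i+1}\longto F_i$ lies in $\mathfrak{m}F_i$ where $\mathfrak{m}=(X_1,\dots,X_k)$ is the irrelevant maximal ideal. Equivalently, a minimal generator of $M_{i+1}$, written in terms of a minimal generating set $[g_1],\dots,[g_N]$ of $M_i$ as a formal sum $\sum_u f_u[g_u]$ with $f_u\in S$, must have all $f_u\in\mathfrak{m}$ — otherwise, if some $f_u$ had a nonzero constant term, one could use that syzygy to eliminate $g_u$ from the generating set, contradicting minimality. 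This is the only nontrivial input, and it is already part of the theory of \cite{Eisenbud05} that the excerpt says we may borrow freely.

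First I would record the degree bookkeeping. Since $M_i$ is generated in degrees $\geq D$, we have $F_i=\bigoplus_{j\geq D}S(-j)^{\beta_{i,j}}$, so every homogeneous element of $F_i$ of degree $d$ is a combination $\sum_u f_u[g_u]$ with $\deg g_u\geq D$ and $\deg f_u=d-\deg g_u$; in particular such an element can be nonzero only for $d\geq D$. Now take a minimal homogeneous generator $\sigma=\sum_u f_u[g_u]\in M_{i+1}\subset F_i$, of some degree $d$. By the minimality fact above, each nonzero $f_u$ lies in $\mathfrak{m}$, hence $\deg f_u\geq 1$, hence $d=\deg f_u+\deg g_u\geq 1+D=D+1$. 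Therefore every minimal generator of $M_{i+1}$ has degree $\geq D+1$, i.e. $\beta_{i+1,j}=0$ for all $j\leq D$. This is exactly the statement "$M_{i+1}$ is generated in degrees $\geq D+1$".

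Finally I would phrase the iteration. Applying the single step just proved with $(i,D)$ replaced by $(i+1,D+1)$, $(i+2,D+2)$, and so on, we get by induction $\beta_{i+\ell,j}=0$ for all $\ell\geq 0$ and all $j\leq D-1+\ell$. In terms of the Betti diagram — whose $(i,j)$ slot displays $\beta_{i,i+\mathrm{(row)}}$ — the vanishing entry $\beta_{i,D-1}$ forces all entries $\beta_{i+\ell,\,(D-1)+\ell}$ (the same "row" or higher, one column to the right each time) to vanish as well, which is precisely the upper-right quadrant anchored at $\beta_{i,D-1}$. I would close with a one-line remark that one should also note $M_i$ itself being generated in degrees $\geq D$ already gives $\beta_{i,j}=0$ for $j\leq D-1$, so the anchor entry and everything weakly above-and-right of it in that sense vanishes.

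The only real obstacle — and it is a conceptual point rather than a computational one — is making sure the minimality hypothesis is used correctly: one must work with a *minimal* system of generators at each stage and invoke that the differentials of a minimal free resolution have entries in $\mathfrak{m}$. If one forgets minimality the conclusion is false (a non-minimal resolution can have constant entries and thus a syzygy in the same degree as a generator). Everything else is the degree count above, which is routine.
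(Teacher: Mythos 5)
Your proof is correct and follows essentially the same route as the paper: both arguments hinge on the observation that minimality forbids a syzygy $\sum_u f_u[g_u]$ from having a nonzero constant coefficient $f_u$ (else $g_u$ could be eliminated), and then do the degree bookkeeping $\deg f_u\geq 1$ and $\deg g_u\geq D$ to conclude that $M_{i+1}$ is generated in degrees $\geq D+1$. The only cosmetic difference is that you package the minimality input as the standard fact that the differentials of a minimal free resolution have image in $\mathfrak{m}F_i$, whereas the paper re-derives that fact inline; the content is identical.
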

\begin{proof}
Let $g_1,\dots,g_N$ form a minimal system of generators of $M_i$.
By hypothesis all $g_u$ have degree $\geq D$.
Then, in a homogeneous relation $\sum_uf_ug_u=0$, no $f_u$ can be a nonzero constant:
otherwise the corresponding $g_u$ could be expressed as a linear combination of the others,
hence could be removed from the system of generators, contradicting minimality.
Thus all nonzero $f_u$ have degree $\geq1$,
and the element of $M_{i+1}$ defined by this homogeneous relation has degree $\geq D+1$.
\end{proof}

As $M_i$ and $F_i$ are graded $S$-modules, we will also write $M_{i,j}$ and $F_{i,j}$
for their degree $j$ homogeneous components.

From now on let $\code$ be a $[n,k]$-code, with generator matrix $\mat{G}$,
and $M_0=\code\deux[\cdot]$ its homogeneous coordinate ring.
Then $M_0$ is a dimension~$1$ Cohen-Macaulay quotient of $S$,
and by the Auslander-Buchsbaum formula its minimal resolution has length $k-1$.
So we have an exact sequence
\beq
0\longto F_{k-1}\longto \cdots \longto F_2 \longto F_1 \longto F_0=S \longto \code\deux[\cdot] \longto 0
\eeq
and the Betti diagram has $k$ columns, indexed from $0$ to $k-1$.

As $M_0=\code\deux[\cdot]$ and $F_0=S$, we have $\beta_{0,0}=1$ and $\beta_{0,j}=0$ for $j\neq0$,
i.e. the $0$-th column of the Betti diagram is always $(1,-,-,\dots)^\top$.

Likewise $M_1=I(\code)$, 
and $\beta_{1,j}$ is the number of homogeneous polynomials of degree $j$ in a minimal system of generators of $I(\code)$.
As the evaluation map $S_1\longto C$ is an isomorphism, we see that $I(\code)$ is generated
in degrees $\geq2$. Lemma~\ref{quadrant} then implies:
\begin{lemma}\label{geni+1}
For any $i\geq1$, the $i$-th syzygy module $M_i$ of $\code$ is generated in degrees $\geq i+1$.
Thus we have $F_{i,j}=M_{i,j}=0$ for $j\leq i$, and
\beq\label{F=M}
F_{i,i+1}=M_{i,i+1}=\F^{\beta_{i,i+1}}.
\eeq
\end{lemma}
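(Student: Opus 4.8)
The plan is to bootstrap everything from Lemma~\ref{quadrant}, so the only genuine work is to nail down the base case of the induction and then to read off the degree-$(i+1)$ part. For the base case, recall that $M_1=I(\code)$ and that both $I_0(\code)$ and $I_1(\code)$ vanish: $I_0(\code)=\ker(S_0\longto\code\deux[0])$ is zero since $S_0=\F$ maps isomorphically onto the repetition code $\code\deux[0]=\F\cdot\mot1$, and $I_1(\code)=\ker(S_1\longto\code)$ is zero since the evaluation map $S_1\longto\code$ is an isomorphism. Hence $\beta_{1,0}=\beta_{1,1}=0$, i.e. $M_1$ is generated in degrees $\geq2$. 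This is precisely the hypothesis of Lemma~\ref{quadrant} with $i=1$ and $D=2$, and its conclusion — the vanishing of the upper-right quadrant anchored at $\beta_{1,1}$ — is exactly the statement that $\beta_{i,j}=0$ for all $i\geq1$ and all $j\leq i$, equivalently that $M_i$ is generated in degrees $\geq i+1$ for every $i\geq1$.

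Next I would deduce the vanishing $F_{i,j}=M_{i,j}=0$ for $j\leq i$ purely by bookkeeping with the grading. Since $F_i=\bigoplus_{j\geq0}S(-j)^{\beta_{i,j}}$ and only the summands with $j\geq i+1$ actually occur, and since $S(-j)_d=S_{d-j}=0$ whenever $d<j$, the component $F_{i,d}$ is zero for all $d\leq i$. For $M_i$ the same conclusion holds because a graded $S$-module generated in degrees $\geq i+1$ has all homogeneous components in degrees $<i+1$ equal to zero: any element of $M_{i,d}$ with $d\leq i$ would be an $S$-combination of minimal generators of degree $\leq d\leq i$, of which there are none.

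Finally, for the identity $F_{i,i+1}=M_{i,i+1}=\F^{\beta_{i,i+1}}$, I would pass to degree $i+1$ in the minimal presentation $F_i\surj M_i$ with kernel $M_{i+1}$. On one hand, evaluating $F_i=\bigoplus_{j\geq i+1}S(-j)^{\beta_{i,j}}$ in degree $i+1$ leaves only the $j=i+1$ summand, contributing $S_0=\F$ in each copy, so $F_{i,i+1}=\F^{\beta_{i,i+1}}$. On the other hand, the surjection $F_{i,i+1}\longto M_{i,i+1}$ has kernel $M_{i+1,i+1}$, which is zero because $M_{i+1}$ is generated in degrees $\geq i+2$ (the first part of the statement, now applied at index $i+1$); hence $F_{i,i+1}\cong M_{i,i+1}$ and the two spaces coincide. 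I do not anticipate any real obstacle: the argument is entirely elementary grading bookkeeping, and the only point demanding mild care is to invoke Lemma~\ref{quadrant} at the right index — using its conclusion for $M_{i+1}$ while treating $M_i$ — which is harmless since that lemma yields the vanishing for all indices simultaneously.
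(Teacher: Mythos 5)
Your proposal is correct and follows the same route as the paper: establish the base case that $M_1=I(\code)$ is generated in degrees $\geq 2$ (since $S_1\longto\code$ is an isomorphism), then invoke Lemma~\ref{quadrant}, with the explicit identities in degrees $\leq i$ and $i+1$ falling out of the grading. The paper merely compresses this into a one-line reference to Lemma~\ref{quadrant}, leaving the bookkeeping implicit, whereas you spell it out; the content is the same.
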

Hence the $0$-th row of the Betti diagram is always $(1,-,-,\dots)$.

Figures \ref{figureHamming}-\ref{figurebinaryGolay} offer a few such diagrams for contemplation.

\begin{figure}[!h]
\vspace{-.5\baselineskip}
\begin{minipage}[b]{0.49\textwidth}
\begin{equation*}
\arraycolsep=5pt
\begin{array}{c|cccc}
& 0 & 1 & 2 & 3\\
\hline
0 & 1 & - & - & -\\
1 & - & 3 & - & -\\
2 & - & 1 & 6 & 3\\
\end{array}
\end{equation*}
\vspace{-\baselineskip}\caption{the $[7,4]_2$ Hamming code}\label{figureHamming}
\end{minipage}
\begin{minipage}[b]{0.5\textwidth}
\begin{equation*}
\arraycolsep=5pt
\begin{array}{c|cccccc}
& 0 & 1 & 2 & 3 & 4 & 5\\
\hline
0 & 1 & - & - & - & - & -\\
1 & - & 10 & 16 & - & - & -\\
2 & - & 1 & 5 & 26 & 20 & 5\\
\end{array}
\end{equation*}
\vspace{-\baselineskip}\caption{the $[11,6]_3$ Golay code}\label{figureternaryGolay}
\end{minipage}
\vspace{.5\baselineskip}
\begin{equation*}
\arraycolsep=5pt
\begin{array}{c|cccccccccccc}
& 0 & 1 & 2 & 3 & 4 & 5 & 6 & 7 & 8 & 9 & 10 & 11\\
\hline
0 & 1 & - & - & - & - & - & - & - & - & - & - & -\\
1 & - & 55 & 320 & 891 & 1408 & 1210 & 320 & 55 & - & - & - & -\\
2 & - & 1 & 11 & 55 & 220 & 650 & 1672 & 1870 & 1221 & 485 & 110 & 11\\
\end{array}
\end{equation*}
\vspace{-\baselineskip}\caption{the $[23,12]_2$ Golay code}\label{figurebinaryGolay}
\vspace{-1.5\baselineskip}
\end{figure}

Observe that replacing $\mat{G}$ with $\mat{S}\mat{G}$, for $\mat{S}\in\F^{k\times k}$ an invertible matrix,
corresponds to a linear change of coordinates for the variables $X_1,\dots,X_k$.
This is an automorphism of $S$, hence will not affect the Betti numbers.
Thus these Betti numbers really are invariants of $\code$, independently of the choice of $\mat{G}$.

More generally recall that two $[n,k]$ codes $\code_1,\code_2$, defined by generator matrices $\mat{G}_1,\mat{G}_2$,
are linearly isometric, or monomially equivalent, if $\mat{G}_1=\mat{S}\mat{G}_2\mat{P}\mat{D}$
for $\mat{S}\in\F^{k\times k}$ an invertible matrix, and $\mat{P},\mat{D}\in\F^{n\times n}$ a permutation matrix
and an invertible diagonal matrix.
\begin{lemma}\label{invBetti}
Monomially equivalent codes have the same Betti numbers.
\end{lemma}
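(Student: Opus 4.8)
The statement is that monomially equivalent codes have the same graded Betti numbers. The plan is to reduce to the already-observed case of an invertible matrix $\mat{S}$ acting on the left, and then handle the permutation $\mat{P}$ and diagonal $\mat{D}$ acting on the right. We have already noted in the text preceding the lemma that replacing $\mat{G}$ by $\mat{S}\mat{G}$ is a linear change of the variables $X_1,\dots,X_k$, i.e.\ an automorphism of the graded ring $S$, which therefore carries any minimal free resolution of $\code\deux[\cdot]$ to a minimal free resolution of the same module up to this automorphism, preserving all $\beta_{i,j}$. So the only thing left is to show that the Betti numbers do not change when we replace the columns $\mot{p}_1,\dots,\mot{p}_n$ of $\mat{G}$ by a permutation of them, each rescaled by a nonzero scalar.

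First I would observe that multiplying $\mat{G}$ on the right by a permutation matrix $\mat{P}$ just reorders the columns, hence reorders the points $\{\mot{p}_1,\dots,\mot{p}_n\}\subset\PP^{k-1}$ without changing the set; equivalently, it only permutes the coordinates of $\F^n$, which is an algebra automorphism of $\F^n$ commuting with componentwise multiplication, so it sends $\code\deux[r]$ to $(\code\mat{P})\deux[r]$ and identifies $I_r(\code)$ with $I_r(\code\mat{P})$ compatibly with the $S$-module structure. Hence the homogeneous coordinate rings are isomorphic as graded $S$-modules, and the Betti numbers agree. Next I would treat the diagonal matrix $\mat{D}=\operatorname{diag}(\lambda_1,\dots,\lambda_n)$ with all $\lambda_i\neq0$: replacing $\mot{p}_i$ by $\lambda_i\mot{p}_i$ multiplies the evaluation of a degree-$r$ form $f$ at $\mot{p}_i$ by $\lambda_i^{r}$. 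In terms of the evaluation map, if $\ev_{\mat{G}\mat{D}}(f)=(\lambda_1^{r}f(\mot{p}_1),\dots,\lambda_n^{r}f(\mot{p}_n))$ for $f\in S_r$, then $\ev_{\mat{G}\mat{D}}=\Delta_r\circ\ev_{\mat{G}}$ on $S_r$, where $\Delta_r$ is the invertible diagonal map with entries $\lambda_i^{r}$. Since each $\Delta_r$ is a linear isomorphism of $\F^n$, the kernels satisfy $I_r(\code\mat{D})=I_r(\code)$ as subspaces of $S_r$, and in fact the identity map on $S$ induces an isomorphism $I(\code)\xrightarrow{\sim}I(\code\mat{D})$ of graded $S$-modules (one checks $S$-linearity degree by degree: scaling a relation $\sum_u f_u g_u=0$ by the $\lambda_i$ is consistent with scaling each monomial appearing). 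Therefore $\code\deux[\cdot]$ and $(\code\mat{D})\deux[\cdot]$ have isomorphic minimal resolutions, and the same Betti numbers.

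Finally I would assemble the three cases: given $\mat{G}_1=\mat{S}\mat{G}_2\mat{P}\mat{D}$, write this as the composite of right-multiplication by $\mat{P}$, then by $\mat{D}$ (now applied to the permuted columns), then left-multiplication by $\mat{S}$. Each step preserves all graded Betti numbers by the three observations above, so $\beta_{i,j}(\code_1)=\beta_{i,j}(\code_2)$ for all $i,j$. I do not expect a serious obstacle here: the only point requiring a little care is checking that the identifications of ideals in the $\mat{P}$ and $\mat{D}$ cases are genuinely $S$-module isomorphisms and not merely degreewise vector space isomorphisms; but this follows because both right-multiplications correspond to $\F$-algebra automorphisms of $\F^n$ (a coordinate permutation for $\mat{P}$, and — slightly more subtly — the fact that rescaling does not change which subspace of $S_r$ is the kernel, for each $r$, together with the obvious compatibility of these identifications with multiplication by linear forms), so the whole graded structure is respected.
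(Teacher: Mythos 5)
Your proof is correct and takes essentially the same route as the paper: the action of $\mat{S}$ is a linear change of variables, and the action of $\mat{P}\mat{D}$ (which you split into the two factors) preserves the kernel $I(\code)$ of the graded evaluation map, so the quotient $S/I(\code)$ is literally unchanged as a graded $S$-module. One small remark: the $S$-linearity worry in your last paragraph is vacuous — once you know $I_r(\code)=I_r(\code\mat{P}\mat{D})$ for every $r$, the ideals are equal as subsets of $S$ and hence equal as ideals; no separate compatibility check with multiplication by linear forms is needed.
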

\begin{proof}
We already discussed the action of $\mat{S}$. As for the action of $\mat{P}\mat{D}$,
it preserves the kernel of the graded evaluation map $S\longto\code\deux[\cdot]$.
\end{proof}

\subsection*{Effective computation of linear syzygies}

There are several algorithms to compute minimal resolutions and graded Betti numbers in general.
Many of them rely first on a Gröbner basis computation.
However in this work we will only be interested in computing the first (nontrivial) row of the Betti diagram,
or equivalently, the so-called \emph{linear strand} of the resolution.
This easier computation can be described in elementary terms.
First, for any $r\geq3$, consider the natural multiplication map
\beq\label{defphir}
\phi_r: M_{r-2,r-1}\tens S_1 \longto M_{r-2,r}.
\eeq
\begin{lemma}\label{kercokerphir}
We have:
\begin{align}
\label{kerphir} \ker(\phi_r)=M_{r-1,r}&\simeq\F^{\beta_{r-1,r}}\\
\label{cokerphir} \coker(\phi_r)&\simeq\F^{\beta_{r-2,r}}.
\end{align}
\end{lemma}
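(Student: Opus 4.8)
The plan is to compute both sides directly from the degree-$r$ part of the minimal resolution near the spot $F_{r-2}$, with no recourse to Gröbner bases. Writing $F_{r-2}=\bigoplus_j S(-j)^{\beta_{r-2,j}}$ and invoking Lemma~\ref{geni+1} (which gives $\beta_{r-2,j}=0$ for $j\leq r-2$), the degree-$r$ component collapses to $F_{r-2,r}=\bigoplus_j (S_{r-j})^{\beta_{r-2,j}}=S_1^{\beta_{r-2,r-1}}\oplus\F^{\beta_{r-2,r}}$, the first summand being the $S_1$-multiples of the degree-$(r-1)$ minimal generators $g_u$ of $M_{r-2}$, the second the degree-$r$ generators themselves. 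By \eqref{F=M} we have $M_{r-2,r-1}=F_{r-2,r-1}=\F^{\beta_{r-2,r-1}}$, spanned by those degree-$(r-1)$ generators, so $M_{r-2,r-1}\tens S_1\simeq S_1^{\beta_{r-2,r-1}}$ identifies canonically with the first summand above, and under this identification $\phi_r$ is precisely the restriction to it of the degree-$r$ part of the surjection $F_{r-2}\surj M_{r-2}$, whose kernel is $M_{r-1,r}$.

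For \eqref{kerphir}, a priori $\ker\phi_r$ is the intersection of $S_1^{\beta_{r-2,r-1}}$ with $\ker(F_{r-2,r}\surj M_{r-2,r})=M_{r-1,r}$, so it is enough to show $M_{r-1,r}\subseteq S_1^{\beta_{r-2,r-1}}$. An element of $M_{r-1,r}$ is a homogeneous degree-$r$ relation $\sum_u f_u[g_u]=0$ among the minimal generators $g_u$ of $M_{r-2}$; by the minimality argument already carried out in the proof of Lemma~\ref{quadrant}, no $f_u$ can be a nonzero constant, hence the coefficients attached to the degree-$r$ generators all vanish and the relation lies in the first summand. Thus $\ker\phi_r=M_{r-1,r}$, and applying \eqref{F=M} with index $r-1$ gives $M_{r-1,r}\simeq\F^{\beta_{r-1,r}}$.

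For \eqref{cokerphir}, since $F_{r-2,r}\surj M_{r-2,r}$ and $F_{r-2,r}=S_1^{\beta_{r-2,r-1}}\oplus\F^{\beta_{r-2,r}}$ with the first summand having image $\im\phi_r$, the cokernel $M_{r-2,r}/\im\phi_r$ is the image of the span of the degree-$r$ generators in this quotient. The resulting map $\F^{\beta_{r-2,r}}\to\coker\phi_r$ is onto by construction, and injective because if an $\F$-combination of degree-$r$ generators lands in $\im\phi_r$, subtracting the corresponding $S_1$-combination of degree-$(r-1)$ generators produces a homogeneous relation in which the degree-$r$ generators occur with constant coefficients, forcing those coefficients to vanish by minimality. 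Hence $\coker\phi_r\simeq\F^{\beta_{r-2,r}}$. The only point requiring care is the grading bookkeeping that isolates degrees $r-1$ and $r$ inside $F_{r-2}$, together with the repeated use of minimality of generating systems; both are already available from Lemma~\ref{geni+1} and the proof of Lemma~\ref{quadrant}, so no new tools are needed --- this is really just an unwinding of definitions.
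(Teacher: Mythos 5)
Your proof is correct and follows essentially the same route as the paper: decompose $F_{r-2,r}$ according to the degree of generators using Lemma~\ref{geni+1}, identify $\phi_r$ with the restriction of $F_{r-2,r}\to M_{r-2,r}$ to the degree-$(r-1)$ summand, and then invoke minimality to show that the kernel lies entirely in that summand and that the degree-$r$ generators map injectively onto the cokernel. The paper packages the two minimality uses into a single statement ($B_{r-2,r}$ is a complement to $\im\phi_r$, so the map splits as $\phi_r\oplus\id$) and proves the cokernel claim first, but the underlying argument is the same.
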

\begin{proof}
We prove \eqref{cokerphir} first.
Let $\mathcal{G}$ be a minimal system of homogeneous generators of $M_{r-2}$, and
for each $j$ let $B_{r-2,j}\subset M_{r-2,j}$ be the linear subspace generated by the degree $j$ elements of $\mathcal{G}$.
Then, as $M_{r-2}$ is generated in degrees $\geq r-1$, we have $B_{r-2,r-1}=M_{r-2,r-1}$,
and $B_{r-2,r}$ is a complementary subspace to $S_1\cdot B_{r-2,r-1}=\im(\phi_r)$ in $M_{r-2,r}$. This proves \eqref{cokerphir}.

Now let $F_{r-2}$ be the free graded module on $\mathcal{G}$.
Then $M_{r-1,r}$ is the kernel of the natural map $F_{r-2,r}\longto M_{r-2,r}$.
However, under the decompositions $F_{r-2,r}=(B_{r-2,r-1}\tens S_1)\oplus B_{r-2,r}$ and $M_{r-2,r}=\im(\phi_r)\oplus B_{r-2,r}$,
this natural map decomposes as $\phi_r\oplus\id_{B_{r-2,r}}$. This proves \eqref{kerphir}.
\end{proof}

This readily gives a coarse upper bound on the $\beta_{r-1,r}$:
\begin{lemma}\label{grossier}
We have $\beta_{1,2}\leq\frac{k(k-1)}{2}$, and $\beta_{r-1,r}\leq(k-1)\beta_{r-2,r-1}$ for $r\geq3$.
Hence $\beta_{r-1,r}(\code)\leq\frac{k}{2}(k-1)^{r-1}$ for any $r\geq2$.
\end{lemma}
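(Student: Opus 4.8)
The plan is to bootstrap from the two cases supplied by Lemma~\ref{kercokerphir}: the base case $r=2$ and the inductive step $r\geq3$, in both cases bounding $\beta_{r-1,r}=\dim\ker(\phi_r)$ crudely by the dimension of the source of $\phi_r$.

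First I would handle $r=2$. Here $M_0=\code\deux[\cdot]$ and $M_{0,1}=\code$, which has dimension $k$, while $M_{0,2}=\code\deux[2]$. The map $\phi_2$ in the notation of \eqref{defphir} would be the multiplication $\code\tens S_1\to\code\deux[2]$, but more directly: by Lemma~\ref{kercokerphir}~\eqref{kerphir} with $r=2$ we get $\beta_{1,2}=\dim M_{1,2}=\dim I_2(\code)=\dim\ker(S_2\to\code\deux[2])\leq\dim S_2=\binom{k+1}{2}=\frac{k(k+1)}{2}$. Hmm, the statement claims $\frac{k(k-1)}{2}$, so I would instead note that $I_2(\code)$ sits inside $S_2$ but cannot contain the image of $S_1\tens S_1\to S_2$ restricted cleverly — actually the sharper bound comes from $\ker(\phi_2)$ where $\phi_2:M_{0,1}\tens S_1\to M_{0,2}$ is the multiplication $\code\tens S_1\to\code\deux[2]$. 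By Lemma~\ref{geni+1}, $M_{1,2}=\ker(\phi_2)$, and since $M_{0,1}\simeq S_1\simeq\F^k$ the source has dimension $k^2$; but that is worse. The right route: $I_2(\code)=\ker(S_2\to\code\deux[2])$ and the multiplication $S_1\tens S_1\to S_2$ is surjective with the symmetric relations in its kernel, so $\beta_{1,2}\leq\dim S_2 - \dim\code = \binom{k+1}{2}-k = \binom{k}{2}=\frac{k(k-1)}{2}$ — using that the evaluation $S_1\to\code$ is an isomorphism so $\code\deux[2]$ already contains the image of $S_1$ (the degree-one part pulled up), forcing $I_2$ to miss at least a $k$-dimensional space. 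I would make this precise by the short exact sequence $0\to I_2(\code)\to S_2\to\code\deux[2]\to0$ together with $\dim\code\deux[2]\geq\dim\code=k$.

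Next, for $r\geq3$, I would apply Lemma~\ref{kercokerphir}~\eqref{kerphir} directly: $\beta_{r-1,r}=\dim\ker(\phi_r)\leq\dim\bigl(M_{r-2,r-1}\tens S_1\bigr)=\dim M_{r-2,r-1}\cdot\dim S_1=\beta_{r-2,r-1}\cdot k$. Wait — the claim is $(k-1)\beta_{r-2,r-1}$, not $k\beta_{r-2,r-1}$, so I would need to recover the extra saving of one. This comes from $\coker(\phi_r)\simeq\F^{\beta_{r-2,r}}\geq0$ not quite being enough; instead the improvement is that $\phi_r$ restricted to $B_{r-2,r-1}\tens S_1$ cannot be injective on the "constant multiplier" part in a way that already eats a summand — more honestly, one uses that $\im(\phi_r)=S_1\cdot M_{r-2,r-1}$ and, by an Euler-type or base-point argument, the multiplication by $S_1$ on any nonzero graded piece of a quotient of $S$ with $\dim>0$ is injective on $\F\cdot X_1\tens(\,\cdot\,)$ modulo… I would instead argue cleanly: since $M_{r-2}$ has depth $\geq1$ (as $\code\deux[\cdot]$ is Cohen-Macaulay of dimension~$1$, all its syzygy modules are reflexive hence torsion-free, so a generic linear form is a nonzerodivisor), multiplication by a generic $\ell\in S_1$ is injective on $M_{r-2,r-1}\to M_{r-2,r}$; this lets one split off one copy of $M_{r-2,r-1}$ from the source on which $\phi_r$ is injective, giving $\dim\ker\phi_r\leq(\dim S_1-1)\,\beta_{r-2,r-1}=(k-1)\beta_{r-2,r-1}$.

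Finally, I would assemble the recursion into the closed form by induction: the base $\beta_{1,2}\leq\frac{k(k-1)}{2}=\frac{k}{2}(k-1)^{1}$ matches the claimed formula at $r=2$, and if $\beta_{r-2,r-1}\leq\frac{k}{2}(k-1)^{r-2}$ then $\beta_{r-1,r}\leq(k-1)\cdot\frac{k}{2}(k-1)^{r-2}=\frac{k}{2}(k-1)^{r-1}$, closing the induction for all $r\geq2$. The main obstacle I anticipate is justifying the factor $(k-1)$ rather than the trivial $k$ in the step for $r\geq3$: this requires knowing that a sufficiently general linear form acts as a nonzerodivisor on $M_{r-2}$, i.e. that the syzygy modules of a $1$-dimensional Cohen-Macaulay ring have depth~$\geq1$ — a standard homological fact (syzygies of a module of depth $d$ over $S$ have depth increasing by $1$ at each step until reaching $k$, or at least stay positive here), but one worth stating carefully; if one is content with the marginally weaker bound $k\,\beta_{r-2,r-1}$ the proof is immediate from Lemma~\ref{kercokerphir} alone and one only loses a constant factor in the final estimate.
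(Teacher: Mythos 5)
Your argument is correct and follows essentially the same path as the paper: the bound $\beta_{1,2}\leq\binom{k+1}{2}-k$ via $\dim\code\deux\geq k$, the recursive bound via a linear form acting as a nonzerodivisor on $M_{r-2}$ so that $\dim\im(\phi_r)\geq\beta_{r-2,r-1}$, and then induction. The paper reaches the nonzerodivisor more cheaply, though: $M_{r-2}$ is by construction a submodule of the free module $F_{r-3}$ over the domain $S$, hence torsion-free, so multiplication by the fixed variable $X_1$ is already injective and there is no need for Cohen-Macaulayness, depth, genericity, or the (imprecise) claim that syzygy modules are reflexive --- torsion-freeness is the only property used, and it is immediate.
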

\begin{proof}
We have $\dim(\code\deux)\geq\dim(C)=k$ hence $\beta_{1,2}=\binom{k+1}{2}-\dim(\code\deux)\leq\frac{k(k-1)}{2}$.
Now let $r\geq3$. As $M_{r-2}$ is a submodule of the free module $F_{r-3}$, multiplication by $X_1$ is injective
on $M_{r-2}$. Hence $M_{r-2,r-1}\simeq X_1M_{r-2,r-1}\subset\im(\phi_r)$
from which it follows $\beta_{r-1,r}=k\beta_{r-2,r-1}-\dim\im(\phi_r)\leq(k-1)\beta_{r-2,r-1}$.
\end{proof}

\begin{proposition}
The $M_{r-1,r}$ can be computed iteratively as follows.
For $r=3$:
\beq\label{M23_noyau}
M_{2,3}=\ker(\;I_2(\code)\tens S_1\longto S_3\;)
\eeq
where $I_2(\code)\tens S_1\longto S_3$ is the natural multiplication map. Then for $r\geq 4$:
\beq\label{Mii+1_noyau_itere}
M_{r-1,r}=\ker(\;M_{r-2,r-1}\tens S_1\longto M_{r-3,r-2}\tens S_2\;)
\eeq
where the map $\psi_r:M_{r-2,r-1}\tens S_1\longto M_{r-3,r-2}\tens S_2$ is obtained first by tensoring
the inclusion $M_{r-2,r-1}\subset M_{r-3,r-2}\tens S_1$ by $S_1$,
and then composing with the multiplication map $S_1\tens S_1\longto S_2$.
\end{proposition}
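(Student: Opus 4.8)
The plan is to deduce everything from Lemma~\ref{kercokerphir}, which already identifies $M_{r-1,r}=\ker(\phi_r)$ for the multiplication map $\phi_r\colon M_{r-2,r-1}\tens S_1\longto M_{r-2,r}$ of \eqref{defphir}, and then to rewrite this $\phi_r$ using only data living one step lower in the resolution. For $r=3$ the identification is immediate: $M_1=I(\code)$ is an ideal of $S$, so $\phi_3\colon I_2(\code)\tens S_1\longto I_3(\code)=M_{1,3}$ is nothing but the restriction of the natural multiplication $I_2(\code)\tens S_1\longto S_3$, whose image lands in $I_3(\code)$. As $I_3(\code)\inj S_3$, passing to $S_3$ does not change the kernel, which gives \eqref{M23_noyau}.

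For $r\geq 4$ I would first make the ambient free module explicit. Since $M_{r-2}\subset F_{r-3}=\bigoplus_j S(-j)^{\beta_{r-3,j}}$ with $\beta_{r-3,j}=0$ for $j\leq r-3$ by Lemma~\ref{geni+1}, and since $F_{r-3,r-2}=M_{r-3,r-2}$ by \eqref{F=M}, extracting the degree $r-1$ and degree $r$ pieces gives
\beq
\begin{aligned}
F_{r-3,r-1}&=\bigl(M_{r-3,r-2}\tens S_1\bigr)\oplus\F^{\beta_{r-3,r-1}},\\
F_{r-3,r}&=\bigl(M_{r-3,r-2}\tens S_2\bigr)\oplus S_1^{\beta_{r-3,r-1}}\oplus\F^{\beta_{r-3,r}}.
\end{aligned}
\eeq
The key step is then to prove the inclusion $M_{r-2,r-1}\subset M_{r-3,r-2}\tens S_1$, i.e.\ that the component of $M_{r-2,r-1}$ in the summand $\F^{\beta_{r-3,r-1}}$ vanishes. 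An element of $M_{r-2,r-1}$, written over a fixed minimal system of homogeneous generators $(g_v)$ of $M_{r-3}$, has the form $\sum_{\deg g_v=r-2}f_v[g_v]+\sum_{\deg g_v=r-1}c_v[g_v]$ with $f_v\in S_1$ and $c_v\in\F$, and lying in $\ker(F_{r-3}\longto M_{r-3})$ forces $\sum f_vg_v+\sum c_vg_v=0$ in $M_{r-3}$; if some $c_{v_0}\neq 0$ this expresses $g_{v_0}$ as an $S$-linear combination of the remaining generators, contradicting minimality, so all $c_v=0$.

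Granting this inclusion, for $m\in M_{r-2,r-1}$ and $\ell\in S_1$ the product $\ell m$, computed in $F_{r-3}$ (where $\phi_r$ takes values), involves only the degree-$(r-2)$ generators $g_v$, hence lies in the summand $M_{r-3,r-2}\tens S_2$ of $F_{r-3,r}$; and the induced map $M_{r-2,r-1}\tens S_1\longto M_{r-3,r-2}\tens S_2$ is precisely $\psi_r$, by its very definition (first tensor the inclusion by $S_1$, then multiply $S_1\tens S_1\longto S_2$). Since $M_{r-3,r-2}\tens S_2$ is a direct summand of $F_{r-3,r}$ and $M_{r-2,r}\inj F_{r-3,r}$, both relevant inclusions are injective, so $\ker(\phi_r)=\ker(\psi_r)$; combined with \eqref{kerphir} this is exactly \eqref{Mii+1_noyau_itere}. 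This argument also shows the iteration is well posed: each $M_{r-1,r}$ comes out as a subspace of $M_{r-2,r-1}\tens S_1$, which is precisely the data needed to form $\psi_{r+1}$ at the next step.

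I expect the only real difficulty to be the key step, namely controlling the degree-$(r-1)$ graded piece of the syzygy submodule $M_{r-2}\subset F_{r-3}$ and ruling out any contribution of degree-$(r-1)$ generators of $M_{r-3}$. Everything there rests on the minimality of the chosen generating systems together with Lemma~\ref{geni+1}, which guarantees there are no generators of $M_{r-3}$ in degrees below $r-2$.
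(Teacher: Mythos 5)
Your proof is correct and follows essentially the same route as the paper: you exhibit that $\phi_r$ and $\psi_r$, once pushed forward to $F_{r-3,r}$, agree and that both target inclusions into $F_{r-3,r}$ are injective (the paper packages this as a commutative triangle with two injective arrows on the right). The only added content is that you spell out why $M_{r-2,r-1}\subset M_{r-3,r-2}\tens S_1$ — a fact the paper's statement uses without re-deriving in the proof, and which you correctly obtain from minimality of the generators as in Lemma~\ref{quadrant}.
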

\begin{proof}
First, \eqref{M23_noyau} is just the case $r=3$ of \eqref{kerphir}, composed with the inclusion $I_3(\code)\subset S_3$.
Likewise to prove \eqref{Mii+1_noyau_itere} we must show $\ker(\phi_r)=\ker(\psi_r)$ for $r\geq4$.
For this we just observe that the diagram 
\beq
\begin{tikzcd}[row sep=tiny]
& M_{r-2,r}\arrow[rd,start anchor=east,end anchor={[yshift=4pt]west}] & \\
M_{r-2,r-1}\tens S_1\arrow[ru,"\phi_r",start anchor={[yshift=4pt]east},end anchor=west]\arrow[rd,"\psi_r",start anchor={[yshift=-4pt]east},end anchor={[yshift=4pt]west}] & & F_{r-3,r}\\
& M_{r-3,r-2}\tens S_2=F_{r-3,r-2}\tens S_2\arrow[ru,start anchor={[yshift=4pt]east},end anchor={[yshift=-4pt]west}] &
\end{tikzcd}
\eeq
commutes, with the two arrows on the right injective.
\end{proof}

Thus we have two descriptions of $M_{r-1,r}$.
The description $M_{r-1,r}=\ker(\phi_r)$ is closer to the abstract definition,
and will be used later to estimate the value of the Betti numbers.
The description $M_{r-1,r}=\ker(\psi_r)$ is more amenable to effective computation. 

\begin{corollary}\label{I2_determine_tout}
The $M_{r-1,r}(\code)$ only depend on $I_2(\code)$.
\end{corollary}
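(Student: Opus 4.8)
The plan is to induct on $r$, feeding off the iterative description of $M_{r-1,r}$ in the preceding Proposition. Everything appearing in that description other than $I_2(\code)$ itself --- the multiplication maps and the ambient spaces $S_1,S_2,S_3$ --- is fixed once $k$ is fixed; so the statement will follow once we make sure to run the induction at the level of \emph{concrete subspaces} rather than abstract modules, because the transition map $\psi_r$ is built from the specific inclusion $M_{r-2,r-1}\subset M_{r-3,r-2}\tens S_1$.

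Concretely, I would prove by induction on $i\geq2$ the following refined claim: the subspace $M_{i,i+1}(\code)\subset M_{i-1,i}(\code)\tens S_1$ is uniquely determined by the subspace $I_2(\code)\subset S_2$, where for $i=2$ we read $M_{1,2}=I_2(\code)$ (as noted earlier, $M_1=I(\code)$). The base case $i=2$ is \eqref{M23_noyau}: the map $I_2(\code)\tens S_1\longto S_3$ there factors as $I_2(\code)\tens S_1\inj S_2\tens S_1\longto S_3$, hence is determined by the inclusion $I_2(\code)\subset S_2$, and so is its kernel $M_{2,3}$, which moreover sits inside $I_2(\code)\tens S_1=M_{1,2}\tens S_1$ as required.

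For the inductive step ($i\geq3$), assume the claim for $i-1$, i.e. the inclusion $M_{i-1,i}\subset M_{i-2,i-1}\tens S_1$ is determined by $I_2(\code)$. By the Proposition, $M_{i,i+1}=\ker(\psi_{i+1})$ with $\psi_{i+1}$ obtained by tensoring that inclusion with $S_1$ and then composing with the multiplication $S_1\tens S_1\longto S_2$; since the inclusion is determined by the induction hypothesis and the multiplication map is fixed, $\psi_{i+1}$ is determined by $I_2(\code)$, hence so is its kernel $M_{i,i+1}\subset M_{i-1,i}\tens S_1$. This closes the induction, and taking $i=r-1$ yields the Corollary for all $r\geq3$ (the cases $r=1,2$ being trivial, since $M_{0,1}=S_1$ and $M_{1,2}=I_2(\code)$).

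I do not expect a genuine obstacle here: the mathematical content is entirely in the Proposition, and what remains is bookkeeping. The one subtlety --- and the reason for phrasing the inductive hypothesis in terms of the embedding $M_{i,i+1}\hookrightarrow M_{i-1,i}\tens S_1$ rather than in terms of abstract isomorphism type --- is that $\psi_r$ sees the actual subspace $M_{r-2,r-1}$ of $M_{r-3,r-2}\tens S_1$ produced by the minimal resolution, not merely its dimension; so one must propagate the concrete subspaces, and not just numerical data or abstract modules, along the recursion.
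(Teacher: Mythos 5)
Your proof is correct and is exactly the argument the paper leaves implicit by labeling the statement a Corollary of the preceding Proposition: one reads $M_{2,3}$ off \eqref{M23_noyau} and then inducts via \eqref{Mii+1_noyau_itere}. The point you flag --- that the induction must carry the concrete embedding $M_{i,i+1}\subset M_{i-1,i}\tens S_1$ and not just the isomorphism type, since $\psi_r$ is built from that inclusion --- is the right thing to be careful about, and matches how the paper in fact \emph{uses} this corollary (Algorithm~\ref{algobases} propagates explicit kernel bases $\mathcal{B}_r$, i.e. concrete subspaces).
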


To make things entirely explicit we now introduce the following matrices.
Assume $\code$ is given by a generator matrix $\mat{G}\in\F^{k\times n}$, with rows $\mot{c}_1,\dots,\mot{c}_k$.
Choose monomial bases $\mathcal{M}_1=(X_a)_{1\leq a\leq k}$, $\mathcal{M}_2=(X_aX_b)_{1\leq a\leq b\leq k}$,
and $\mathcal{M}_3=(X_aX_bX_c)_{1\leq a\leq b\leq c\leq k}$ of $S_1$, $S_2$, $S_3$,
ordered with respect to some monomial order.

\begin{definition}\label{G2}
The squared matrix of $\mat{G}$ is the matrix
\beq
\mat{M}_2\in\F^{\,\binom{k+1}{2}\,\times\, n}
\eeq
with rows indexed by $\mathcal{M}_2$: the row corresponding to $X_aX_b$ is $\mot{c}_a\mot{c}_b$.
\end{definition}
Then $I_2(\code)$ is the left kernel of $\mat{M}_2$, and we choose a basis $\mathcal{B}_2$ of this space.
Thus $\mathcal{B}_2$ consists of $\beta_{1,2}$ vectors, each of which has its entries indexed by $\mathcal{M}_2$.

\begin{definition}\label{Mac3}
The degree $3$ Macaulay matrix of $\mathcal{B}_2$ is the matrix
\beq
\mat{M}_3\in\F^{\:k\beta_{1,2}\,\times\,\binom{k+2}{3}}
\eeq
with rows indexed by $\mathcal{M}_1\times\mathcal{B}_2$
and columns indexed by $\mathcal{M}_3$:
the row corresponding to $(X_a,\mot{q})\in\mathcal{M}_1\times\mathcal{B}_2$,
where $\mot{q}=(q_M)_{M\in\mathcal{M}_2}$, has entry $q_M$ at position corresponding to $X_aM\in\mathcal{M}_3$, and $0$ elsewhere.
\end{definition}
(Said otherwise, the rows of $\mat{M}_3$ are formed by multiplying all
$\mot{q}\in\mathcal{B}_2$, seen as quadratic forms $\mot{q}=\sum_{1\leq b\leq c\leq k}q_{X_bX_c}X_bX_c$, by all variables $X_a$.)

Then by \eqref{M23_noyau}, $M_{2,3}$ is the left kernel of $\mat{M}_3$, and we choose a basis $\mathcal{B}_3$ of this space.
Thus $\mathcal{B}_3$ consists of $\beta_{2,3}$ vectors, each of which has its entries indexed by $\mathcal{M}_1\times\mathcal{B}_2$.

Now let $r\geq4$, and assume inductively for all $3\leq i\leq r-1$ we have constructed a basis $\mathcal{B}_i$ of $M_{i-1,i}$,
consisting of $\beta_{i-1,i}$ vectors, each of which has its entries indexed by $\mathcal{M}_1\times\mathcal{B}_{i-1}$.
\begin{definition}\label{Mac+}
The degree $r$ blockwise Macaulay matrix of $\mathcal{B}_{r-1},\mathcal{B}_{r-2}$ is the matrix
\beq
\mat{M}_{r}\in\F^{\:k\beta_{r-2,r-1}\,\times\,\binom{k+1}{2}\beta_{r-3,r-2}}
\eeq
with rows indexed by $\mathcal{M}_1\times\mathcal{B}_{r-1}$
and columns indexed by $\mathcal{M}_2\times\mathcal{B}_{r-2}$:
the row corresponding to $(X_a,\mot{s})\in\mathcal{M}_1\times\mathcal{B}_{r-1}$,
where $\mot{s}=(s_{X_b,\mot{t}})_{X_b\in\mathcal{M}_1,\mot{t}\in\mathcal{B}_{r-2}}$,
has entry $s_{X_b,\mot{t}}$ at position corresponding to $(X_aX_b,\mot{t})\in\mathcal{M}_2\times\mathcal{B}_{r-2}$, and $0$ elsewhere.
\end{definition}
(Said otherwise, the rows of $\mat{M}_r$ are formed by multiplying all
$\mot{s}\in\mathcal{B}_{r-1}$ by all variables $X_a$,
where each $\mot{s}$ is seen as a formal sum $\mot{s}=\sum_{\mot{t}\in\mathcal{B}_{r-2}}L_{\mot{t}}[\mot{t}]$
whose coefficients are linear forms $L_{\mot{t}}=\sum_{1\leq b\leq k}s_{X_b,\mot{t}}X_b$, 
resulting in formal sums $X_a\mot{s}=\sum_{\mot{t}\in\mathcal{B}_{r-2}}(X_aL_{\mot{t}})[\mot{t}]$ whose coefficients are quadratic forms.)

Then by \eqref{Mii+1_noyau_itere}, $M_{r-1,r}$ is the left kernel of $\mat{M}_{r}$, and we proceed.

All this is summarized in Algorithm~\ref{algobases}.
\begin{algorithm}
\caption{Compute bases $\mathcal{B}_r$ of $M_{r-1,r}(\code)$ up to some degree $D$}\label{algobases}
\textbf{Input:}
\begin{minipage}[t]{.75\textwidth}
\begin{itemize}
\item a generator matrix $\mat{G}$ of the $[n,k]$-code $\code$
\item a target degree $D\leq k$
\end{itemize}
\end{minipage}
\begin{algorithmic}[1]
\State \textbf{construct} the matrix $\mat{M}_2$ according to Definition~\ref{G2}
\State \textbf{compute} a left kernel basis $\mathcal{B}_2$ of $\mat{M}_2$
\State \textbf{construct} the matrix $\mat{M}_3(\mathcal{B}_2)$ according to Definition~\ref{Mac3}
\State \textbf{compute} a left kernel basis $\mathcal{B}_3$ of $\mat{M}_3$
\For{$r=4\dots D$}
\State \textbf{construct} the matrix $\mat{M}_{r}(\mathcal{B}_{r-1},\mathcal{B}_{r-2})$ according to Definition~\ref{Mac+}
\State \textbf{compute} a left kernel basis $\mathcal{B}_{r}$ of $\mat{M}_{r}$
\EndFor
\end{algorithmic}
\end{algorithm}

An implementation can be found in \cite{githubsyzygies}, which was used to compute all the examples in this text.

Observe that this algorithm only relies on mere linear algebra, and does not make use of any Gröbner basis theory.
However, the two topics are clearly related.
In particular, the many linear algebra optimizations used in Gröbner basis algorithms certainly apply here also.
This will be discussed later in the text, when we'll deal with complexity estimates.


\subsection*{Further properties}

We will have a particular interest in the length of the linear strand, or equivalently, in the following quantity:
\begin{definition}
For a linear code $\code$, we set
\beq
\rmax(\code)=\max\{r:\:\beta_{r-1,r}(\code)>0\}.
\eeq
\end{definition}
As the minimal resolution of a $[n,k]$-code $\code$ has length $k-1$, we always have $\rmax(\code)\leq k$.

Sometimes we only control the syzygies of certain subideals of $I(\code)$, and from
these we want to deduce information on the syzygies of the whole ideal.
Intuitively, we expect that syzygies between elements of the subideals could be seen as syzygies between elements of the ideal.
Of course this fails in general, but it remains true if we restrict to the linear strand:

\begin{proposition}\label{directsum}
Let $M$ be a finite $S$-module generated in degrees $\geq d_0$, and
suppose its degree $d_0$ part $M_{d_0}$ decomposes as a direct sum
\beq
M_{d_0}=A_{d_0}\oplus B_{d_0}.
\eeq
Let $A=\linspan[S]{A_{d_0}}$ and $B=\linspan[S]{B_{d_0}}$ be the sub-$S$-modules of $M$ generated by $A_{d_0}$ and $B_{d_0}$.
Let $\widehat{M}$ be the (first) syzygy module of $M$, and $\widehat{A}$ and $\widehat{B}$ those of $A$ and $B$.
Then we have a natural inclusion
\beq
\widehat{A}_{d_0+1}\oplus \widehat{B}_{d_0+1}\;\subset\;\widehat{M}_{d_0+1}
\eeq
the cokernel of which identifies with $A_{d_0+1}\cap B_{d_0+1}$, hence a (non-canonical) isomorphism
\beq
\widehat{M}_{d_0+1}\;\simeq\;\widehat{A}_{d_0+1}\oplus \widehat{B}_{d_0+1}\oplus (A_{d_0+1}\cap B_{d_0+1}).
\eeq
\end{proposition}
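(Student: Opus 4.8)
The plan is to reduce the statement to the degree $d_0+1$ piece of a few multiplication maps, exactly in the spirit of the proof of Lemma~\ref{kercokerphir}, and then do an elementary diagram chase; the whole thing should be routine once the right identifications are in place.

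First I would identify the relevant syzygy components with kernels of multiplication maps. Since $M$ is generated in degrees $\geq d_0$, the degree $d_0$ part of $\mathfrak{m}M$ (with $\mathfrak{m}=(X_1,\dots,X_k)$) vanishes, so by Nakayama a minimal system of homogeneous generators of $M$ contains, in degree $d_0$, a basis of $M_{d_0}$, and its degree $d_0+1$ elements form a basis of a complement of $S_1\cdot M_{d_0}$ in $M_{d_0+1}$. Writing $F_0$ for the free module on such a system, the degree $d_0+1$ component of $F_0\longto M$ then splits as $\mu_M\oplus(\text{isomorphism onto that complement})$, where $\mu_M\colon M_{d_0}\tens S_1\longto M_{d_0+1}$ is the multiplication map; hence $\widehat{M}_{d_0+1}=\ker(\mu_M)$. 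The module $A=\linspan[S]{A_{d_0}}$ has zero part in degrees $<d_0$ and is generated by $A_{d_0}$, so $A/\mathfrak{m}A$ is concentrated in degree $d_0$ and equals $A_{d_0}$; the same argument gives $\widehat{A}_{d_0+1}=\ker(\mu_A)$ with $\mu_A\colon A_{d_0}\tens S_1\longto A_{d_0+1}$, $A_{d_0+1}=S_1\cdot A_{d_0}$, surjective, and likewise $\widehat{B}_{d_0+1}=\ker(\mu_B)$. Note that $A_{d_0+1}$ and $B_{d_0+1}$ are honestly subspaces of $M_{d_0+1}$ (as $A,B$ are submodules of $M$), so that $A_{d_0+1}\cap B_{d_0+1}$ makes sense.

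Next I would bring in the hypothesis $M_{d_0}=A_{d_0}\oplus B_{d_0}$. Tensoring with $S_1$ yields $M_{d_0}\tens S_1=(A_{d_0}\tens S_1)\oplus(B_{d_0}\tens S_1)$, and $\mu_M$ restricts to $\mu_A$ on the first summand and to $\mu_B$ on the second; hence $\ker(\mu_A)=\ker(\mu_M)\cap(A_{d_0}\tens S_1)$ and similarly for $B$, which already gives the asserted inclusion $\widehat{A}_{d_0+1}\oplus\widehat{B}_{d_0+1}\subset\widehat{M}_{d_0+1}$ as an inclusion of subspaces of $M_{d_0}\tens S_1$. For the cokernel I would consider the $\F$-linear map $\ker(\mu_M)\longto M_{d_0+1}$ sending $(u,v)\in(A_{d_0}\tens S_1)\oplus(B_{d_0}\tens S_1)$ to $\mu_A(u)=-\mu_B(v)$: its image lies in $\im(\mu_A)\cap\im(\mu_B)=A_{d_0+1}\cap B_{d_0+1}$; it is onto, because any $w$ in that intersection equals $\mu_A(u)=\mu_B(v')$ for suitable $u,v'$ and then $(u,-v')\in\ker(\mu_M)$ maps to $w$; and its kernel is $\{(u,v):\mu_A(u)=0\}=\ker(\mu_A)\oplus\ker(\mu_B)$, since $\mu_A(u)+\mu_B(v)=0$ forces $\mu_B(v)=0$ as well. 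This produces the short exact sequence $0\longto\widehat{A}_{d_0+1}\oplus\widehat{B}_{d_0+1}\longto\widehat{M}_{d_0+1}\longto A_{d_0+1}\cap B_{d_0+1}\longto0$, which splits over the field $\F$ and gives the claimed non-canonical isomorphism.

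I do not expect a genuine obstacle; the argument is essentially bookkeeping. The only step deserving care is the identification $\widehat{M}_{d_0+1}=\ker(\mu_M)$ and its analogues for $A$ and $B$ — that is, checking that minimal generators of $M$, $A$, $B$ living in degrees $\geq d_0+1$ contribute nothing to syzygies in degree $d_0+1$. This is the same mechanism already used in Lemma~\ref{kercokerphir} (and underlying Lemma~\ref{quadrant}), and it is precisely where both hypotheses are used: that $M$ is generated in degrees $\geq d_0$, and that $A_{d_0}\oplus B_{d_0}$ is the \emph{full} degree $d_0$ part of $M$. Without these, the degree $d_0+1$ free parts would not decompose cleanly and the cokernel could differ from $A_{d_0+1}\cap B_{d_0+1}$.
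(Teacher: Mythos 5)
Your argument is correct and is essentially the paper's proof: you identify $\widehat{M}_{d_0+1}$, $\widehat{A}_{d_0+1}$, $\widehat{B}_{d_0+1}$ with the kernels of the multiplication maps (the paper does this implicitly, as in Lemma~\ref{kercokerphir}), and you then produce the short exact sequence $0\to\widehat{A}_{d_0+1}\oplus\widehat{B}_{d_0+1}\to\widehat{M}_{d_0+1}\to A_{d_0+1}\cap B_{d_0+1}\to0$ by an explicit hand-written diagram chase where the paper invokes the kernel--cokernel sequence attached to the two surjections $(A_{d_0}\oplus B_{d_0})\tens S_1\surj A_{d_0+1}\oplus B_{d_0+1}\surj A_{d_0+1}+B_{d_0+1}$. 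The content is the same; you just unfold the abstract lemma.
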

\begin{proof}
We have surjective maps
\beq
(A_{d_0}\oplus B_{d_0})\tens S_1\;\surj\; A_{d_0+1}\oplus B_{d_0+1}\;\surj\; A_{d_0+1}+ B_{d_0+1}
\eeq
with $\widehat{A}_{d_0+1}\oplus \widehat{B}_{d_0+1}$ the kernel of the leftmost map,
$A_{d_0+1}\cap B_{d_0+1}$ the kernel of the rightmost map,
and $\widehat{M}_{d_0+1}$ the kernel of the composite map.
We conclude with the associated kernel-cokernel exact sequence.
\end{proof}
\begin{corollary}\label{sous-resolutions}
Suppose $I_2(\code)$ contains a certain direct sum of $l$ subspaces:
\beq
I_2(\code) \supset V^{(1)}\oplus\cdots\oplus V^{(l)}.
\eeq
Then the minimal resolution of $\code\deux[\cdot]$ canonically admits the direct sum
of the linear strands of the minimal resolutions of $S/\linspan{V^{(1)}}$, ..., $S/\linspan{V^{(l)}}$
as a subcomplex.
In particular for all $r\geq2$ we have
\beq
\beta_{r-1,r}(\code)\;\geq\;\beta_{r-1,r}(S/\linspan{V^{(1)}})+\cdots+\beta_{r-1,r}(S/\linspan{V^{(l)}}).
\eeq
\end{corollary}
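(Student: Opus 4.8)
The plan is to make the claimed subcomplex explicit. Write $N^{(j)}:=\linspan{V^{(j)}}$. First I would observe that all of the linear-strand machinery developed above (Lemmas~\ref{quadrant}--\ref{grossier}, the iterated-kernel description, Corollary~\ref{I2_determine_tout}) is purely formal: it uses only that the ideal of the quotient ring contains no nonzero form of degree $\leq 1$, so it applies verbatim to $S/N^{(j)}$, whose ideal is generated by quadrics. For any such quotient $S/J$, the linear strand $\operatorname{lin}(S/J)$ is the subcomplex $\cdots\longto S(-r)^{\beta_{r-1,r}}\longto\cdots\longto S(-2)^{\beta_{1,2}}\longto S$ of the minimal resolution, where we identify $S(-r)^{\beta_{r-1,r}}$ with $M_{r-1,r}\tens_\F S(-r)$ and take the differentials induced by the inclusions $M_{r-1,r}\subset M_{r-2,r-1}\tens S_1$ of \eqref{Mii+1_noyau_itere} (and $M_{2,3}\subset I_2\tens S_1$); that this is genuinely a subcomplex follows from Lemma~\ref{geni+1} (in homological degree $i-1\geq 1$ there is no free summand of internal degree $<i$) together with minimality. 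It therefore suffices to construct a morphism of complexes $\bigoplus_{j}\operatorname{lin}(S/N^{(j)})\longto\operatorname{lin}(\code\deux[\cdot])$ that is a split injection in each homological degree.

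I would build this morphism by induction on the homological degree. Its component in homological degree $1$ is the hypothesized inclusion $\iota_2\colon\bigoplus_j V^{(j)}=\bigoplus_j M_{1,2}(S/N^{(j)})\inj I_2(\code)=M_{1,2}(\code)$. Suppose components $\iota_2,\dots,\iota_{r-1}$ have been built, with $\iota_i\colon\bigoplus_j M_{i-1,i}(S/N^{(j)})\inj M_{i-1,i}(\code)$ obtained, for $3\leq i\leq r-1$, as a restriction of $\iota_{i-1}\tens\id_{S_1}$; then the square relating $\iota_i$ to $\iota_{i-1}$ through the differentials of the linear strands commutes tautologically. I would define $\iota_r$ as the restriction of $\iota_{r-1}\tens\id_{S_1}$ to $\bigoplus_j M_{r-1,r}(S/N^{(j)})=\bigoplus_j\ker\psi_r^{(j)}$, and the one thing to verify is that this restriction lands in $M_{r-1,r}(\code)=\ker\psi_r$. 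For $r\geq 4$ this reduces, by the very definition of $\psi_r$ and $\psi_r^{(j)}$ (tensor the linear-strand inclusion by $S_1$, then apply the multiplication $S_1\tens S_1\longto S_2$), to the identity $\psi_r\circ(\iota_{r-1}\tens\id_{S_1})=(\iota_{r-2}\tens\id_{S_2})\circ\bigoplus_j\psi_r^{(j)}$, which drops out of the tautological square for $(\iota_{r-1},\iota_{r-2})$ by functoriality of $\tens$ and distributivity of multiplication over addition. The base case $r=3$ is the one exceptional step, where the relevant ambient target degenerates to the common space $S_3$: there one checks instead that $\iota_2\tens\id_{S_1}$ intertwines each map $V^{(j)}\tens S_1\longto S_3$ with $I_2(\code)\tens S_1\longto S_3$ through the summation $\bigoplus_j S_3\longto S_3$, so that $\bigoplus_j\ker\phi_3^{(j)}$ is carried into $\ker\phi_3=M_{2,3}(\code)$ (using \eqref{M23_noyau}). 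In every case $\iota_r$ is injective and split, being a restriction of the $\F$-linear map $\iota_{r-1}\tens\id_{S_1}$, and its own tautological square holds by construction; this closes the induction.

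Tensoring each $\iota_r$ with $S(-r)$ and assembling, the family $(\iota_r)_{r\geq 2}$ together with $\id_S$ in homological degree $0$ is a morphism of complexes $\bigoplus_j\operatorname{lin}(S/N^{(j)})\longto\operatorname{lin}(\code\deux[\cdot])$ --- the tautological squares say precisely that it commutes with the differentials --- that is a split injection in each degree; composed with the inclusion of $\operatorname{lin}(\code\deux[\cdot])$ into the minimal resolution of $\code\deux[\cdot]$, this exhibits $\bigoplus_j\operatorname{lin}(S/N^{(j)})$ as a subcomplex of that resolution, and comparing $\F$-dimensions in homological degree $r-1$ gives $\beta_{r-1,r}(\code)=\dim_\F M_{r-1,r}(\code)\geq\sum_j\dim_\F M_{r-1,r}(S/N^{(j)})=\sum_j\beta_{r-1,r}(S/N^{(j)})$. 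I expect the only real difficulty to be organizational: keeping straight the two roles of the compatibility squares (defining $\iota_r$, and verifying that the next restriction is well defined) and fitting the degenerate case $r=3$ cleanly into the pattern. The mathematical substance is that of Proposition~\ref{directsum} --- a direct-sum decomposition at the bottom of the linear strand propagating upward through the iterated kernels --- here carried out along the whole strand at once instead of one syzygy at a time; alternatively one could iterate Proposition~\ref{directsum} itself, peeling off $V^{(1)}$ and then recursing on $V^{(2)}\oplus\cdots\oplus V^{(l)}$ and on the higher syzygy modules (using that replacing a syzygy module by the submodule generated by its linear strand does not alter the next space of linear syzygies), though that carries heavier bookkeeping.
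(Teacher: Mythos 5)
Your proof is correct and takes a genuinely different route from the paper's. The paper's proof is a single sentence, ``Apply Proposition~\ref{directsum} repeatedly'': one peels off the direct summands $V^{(j)}$ at the bottom of the strand, passes to syzygies using the one-step decomposition of Proposition~\ref{directsum}, and iterates up the homological degree; the small subtlety (which your last sentence flags) is that at each stage the $S$-module generated by the linear-strand piece $\widehat{\linspan{V^{(j)}}}_{d_0+1}$ need not be the full syzygy module of $\linspan{V^{(j)}}$, so one also has to know that truncating to the linear strand does not change the next space of linear syzygies (essentially Corollary~\ref{I2_determine_tout}). You avoid this bookkeeping entirely by building the chain map $\bigoplus_j\operatorname{lin}(S/N^{(j)})\longto\operatorname{lin}(\code\deux[\cdot])$ directly: $\iota_r$ is defined as the restriction of $\iota_{r-1}\tens\id_{S_1}$ to the direct sum of kernels $\bigoplus_j\ker\psi_r^{(j)}$, the required well-definedness is a one-line consequence of how $\iota_{r-1}$ itself was defined (the ``tautological square''), and injectivity is automatic since you restrict an injective map. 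Your observation that the iterated-kernel machinery of section~2 is purely formal and applies to any $S/J$ with $J$ generated in degrees $\geq 2$ is correct and is what lets the construction run uniformly. Compared with the paper's route, yours is longer on the page but is arguably cleaner logically, since it never needs to argue about minimal generators of intermediate syzygy modules; what the paper buys in return is brevity, and the intermediate $\F$-vector-space isomorphism of Proposition~\ref{directsum} (including the ``extra'' summand $A_{d_0+1}\cap B_{d_0+1}$), which is of independent interest. One cosmetic point common to both: the statement as phrased treats the homological-degree-$0$ part loosely, since $\bigoplus_j S$ cannot inject into $S$; the intended content, which both proofs establish, concerns homological degrees $\geq 1$, i.e.\ $r\geq 2$, exactly as the displayed Betti inequality records.
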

\begin{proof}
Apply Proposition~\ref{directsum} repeatedly.
\end{proof}

Recall that $\pi\code$ is a punctured code of $\code$ if it is obtained from $\code$ by discarding some given set of coordinates
(equivalently, discarding some subset of the associated set of points in projective space).
\begin{corollary}\label{puncture}
Let $\pi\code$ be a punctured code of $\code$. Assume $\dim(\pi\code)=\dim(\code)$.
Then for all $r\geq2$ we have
\beq
\beta_{r-1,r}(\pi\code)\geq\beta_{r-1,r}(\code)
\eeq
hence
\beq
\rmax(\pi\code)\geq \rmax(\code).
\eeq
\end{corollary}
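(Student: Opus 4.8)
The plan is to reduce the statement to Corollary~\ref{sous-resolutions}. The key observation is that puncturing a code and keeping the dimension fixed does not change the space of quadratic relations except by possibly enlarging it: if $\pi$ discards a set $T$ of coordinates and $\dim(\pi\code)=\dim(\code)=k$, then the evaluation map $\ev_{\mat{G}}$ and its punctured analogue $\ev_{\pi\mat{G}}$ are defined on the same $S=\F[X_1,\dots,X_k]$ (with $\mat{G}$ a generator matrix of $\code$ and $\pi\mat{G}$ its restriction to the retained columns, which is still a generator matrix since the dimension is unchanged). Restricting to quadratic forms, $\ev_2$ factors as $S_2\to\F^n\to\F^{n-\abs{T}}$, the second arrow being the projection forgetting the coordinates in $T$. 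Hence $I_2(\code)=\ker(\ev_2)\subset\ker(\ev_2\circ\text{proj})=I_2(\pi\code)$.

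So $I_2(\pi\code)\supset I_2(\code)$, and in particular $I_2(\pi\code)\supset I_2(\code)$ as a single subspace (take $l=1$, $V^{(1)}=I_2(\code)$ in Corollary~\ref{sous-resolutions}). But Corollary~\ref{I2_determine_tout} tells us that the modules $M_{r-1,r}(\code)$ — equivalently the Betti numbers $\beta_{r-1,r}(\code)$ — depend only on $I_2(\code)$; more precisely, $\beta_{r-1,r}(\code)=\beta_{r-1,r}(S/\linspan{I_2(\code)})$, the linear-strand Betti numbers of the ideal generated by the quadratic relations. Applying Corollary~\ref{sous-resolutions} to $I_2(\pi\code)\supset I_2(\code)=\linspan{V^{(1)}}$ then gives
\beq
\beta_{r-1,r}(\pi\code)\;\geq\;\beta_{r-1,r}(S/\linspan{I_2(\code)})\;=\;\beta_{r-1,r}(\code)
\eeq
for all $r\geq2$, which is the first claim. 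The inequality $r_{\max}(\pi\code)\geq r_{\max}(\code)$ is then immediate: if $r=r_{\max}(\code)$ then $\beta_{r-1,r}(\code)>0$, so $\beta_{r-1,r}(\pi\code)>0$ as well, whence $r_{\max}(\pi\code)\geq r$ by definition of $r_{\max}$.

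The only point that needs a little care — and the main thing to check rather than a genuine obstacle — is the identification $\beta_{r-1,r}(\code)=\beta_{r-1,r}(S/\linspan{I_2(\code)})$ that lets us match the output of Corollary~\ref{sous-resolutions} with the Betti numbers of $\code$ itself. This follows from Corollary~\ref{I2_determine_tout}: the iterative description of $M_{r-1,r}$ via the maps $\phi_r$ (or $\psi_r$) starts from $I_2(\code)=M_{1,2}$ and at no later stage refers to $\code$ other than through this starting module, so $S/\linspan{I_2(\code)}$ and $\code\deux[\cdot]$ have the same linear strand. With that in hand there is nothing more to do. Note also that the hypothesis $\dim(\pi\code)=\dim(\code)$ is essential, both to ensure $\pi\mat{G}$ is still a $k\times(n-\abs{T})$ generator matrix (so that we work over the same polynomial ring $S$) and to make the factorization of $\ev_2$ meaningful; without it the two homogeneous coordinate rings live over different $S$ and the comparison of Betti numbers is not even well-posed.
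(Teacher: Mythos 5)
Your proof is correct and takes essentially the same route as the paper: the key step is exactly the observation that the evaluation map $S\to(\pi\code)\deux[\cdot]$ factors through $\code\deux[\cdot]$, so $I_2(\pi\code)\supset I_2(\code)$, after which Corollary~\ref{sous-resolutions} (with $l=1$) and Corollary~\ref{I2_determine_tout} finish the job. The paper's proof is just the first half of this and leaves the invocation of Corollaries~\ref{sous-resolutions} and~\ref{I2_determine_tout} implicit; you make it explicit, and you also correctly flag the one genuinely nontrivial identification, namely $\beta_{r-1,r}(\code)=\beta_{r-1,r}(S/\linspan{I_2(\code)})$ (needed because Corollary~\ref{sous-resolutions} bounds $\beta_{r-1,r}(\pi\code)$ by that of $S/\linspan{V^{(1)}}$, not directly by $\beta_{r-1,r}(\code)$). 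One cosmetic slip: you wrote $\ker(\ev_2\circ\text{proj})$ where you mean $\ker(\text{proj}\circ\ev_2)$, since $\ev_2$ for $\pi\code$ is the projection composed after $\ev_2$ for $\code$; the intent is clear and the logic unaffected.
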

\begin{proof}
Let $S=\F[X_1,\dots,X_k]$ where $k=\dim(\pi\code)=\dim(\code)$.
Then the surjective map $S\longto(\pi\code)\deux[\cdot]$ factors through $\code\deux[\cdot]$,
so that $I_2(\pi\code)\supset I_2(\code)$.
\end{proof}

Last, the following easy result is also useful:
\begin{lemma}\label{extscal}
Minimal resolutions are preserved under extension of scalars. In particular if a code $\code$
is defined over $\F$, and if $\F\subset\K$ is a field extension, then $\beta_{i,j}(\code)=\beta_{i,j}(\code_{\K})$ for all $i,j$.
\end{lemma}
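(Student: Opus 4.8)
The plan is to combine two standard facts: forming the homogeneous coordinate ring commutes with extension of scalars, and a minimal graded free resolution stays a minimal graded free resolution after a flat (here even free) base change.

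First I would check that forming the homogeneous coordinate ring commutes with extension of scalars. Write $S_\K=S\tens_\F\K=\K[X_1,\dots,X_k]$ and use the same generator matrix $\mat G$ over $\K$, so that the evaluation map over $\K$ is $\ev_{\mat G}\tens_\F\id_\K$. Since $\F$ is a field, $\K$ is a free, hence flat, $\F$-module, so images and kernels are preserved under $-\tens_\F\K$; applying this to $\ev_{\mat G}$ in each degree gives $(\code_\K)\deux[r]=\code\deux[r]\tens_\F\K$ and $I_r(\code_\K)=I_r(\code)\tens_\F\K$. Summing over $r$, the graded $S_\K$-module $(\code_\K)\deux[\cdot]$ is canonically $M_0\tens_\F\K$, where $M_0=\code\deux[\cdot]$.

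Next I would take a minimal graded free resolution $F_\bullet\longto M_0$ over $S$ and apply $-\tens_\F\K$. As $\K$ is flat over $\F$, exactness is preserved, so $(F_\bullet)_\K\longto M_0\tens_\F\K=(\code_\K)\deux[\cdot]$ is a graded free resolution over $S_\K$; moreover $S(-j)\tens_\F\K=S_\K(-j)$, so the graded shifts are untouched and $(F_i)_\K=\bigoplus_j S_\K(-j)^{\beta_{i,j}(\code)}$. It remains to see this resolution is still minimal. Recall that a graded free resolution over $S$ is minimal precisely when every differential has all its matrix entries in the irrelevant ideal $\mathfrak m=(X_1,\dots,X_k)$, equivalently when $F_\bullet\tens_S(S/\mathfrak m)$ has vanishing differentials. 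The differentials of $(F_\bullet)_\K$ are given by the same polynomial matrices, now read in $S_\K$; they still have no constant term, so $(F_\bullet)_\K$ is minimal over $S_\K$. By uniqueness of minimal free resolutions it is therefore \emph{the} minimal resolution of $(\code_\K)\deux[\cdot]$, which proves the first assertion, and comparing ranks of the graded summands yields $\beta_{i,j}(\code_\K)=\beta_{i,j}(\code)$ for all $i,j$.

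The only point requiring any care is the preservation of minimality; everything else is just flat base change. If one prefers to bypass the matrix-entry criterion, the same conclusion follows from $\beta_{i,j}(\code)=\dim_\F\operatorname{Tor}^S_i(M_0,\F)_j$ together with the flat base-change isomorphism $\operatorname{Tor}^{S_\K}_i(M_0\tens_\F\K,\K)\cong\operatorname{Tor}^S_i(M_0,\F)\tens_\F\K$, which identifies the two families of Betti numbers dimension by dimension.
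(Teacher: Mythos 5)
Your proof is correct, and it is the standard argument. The paper states this lemma without proof (introducing it as an ``easy result''), so there is nothing to compare against; your route --- flat base change commutes with forming the coordinate ring, preserves exactness of the resolution, and preserves minimality via the criterion that the differentials have entries in the irrelevant maximal ideal (equivalently, via $\operatorname{Tor}$) --- is exactly the proof one would expect, with the one subtle point (preservation of minimality) correctly identified and handled.
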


\section{Lower bounds from the Eagon-Northcott complex}\label{secEN}

\subsection*{Generalities}

Let $R$ be a ring, and for integers $f\geq g$, temporarily switching to column vector convention,
let $\mat{\Phi}\in R^{g\times f}$ define a linear map
\beq
\mat{\Phi}:F=R^f\longto G=R^g.
\eeq
The Eagon-Northcott complex \cite{EagNor62} of $\mat{\Phi}$ is the following complex of $R$-modules,
defined in terms of the exterior and (dual of) symmetric powers of $F$ and $G$:
\beq\label{EN}
0\to(\Sym^{f-g}G)^\vee\tens\bigwedge^fF\to\cdots\to G^\vee\tens\bigwedge^{g+1}F\to\bigwedge^gF\overset{\bigwedge^g\mat{\Phi}}\longto\bigwedge^gG\simeq R.
\eeq
Under mild hypotheses, this complex is exact \cite[Th.~A.2.60 \& Th.~6.4]{Eisenbud05},
so it defines a resolution of the quotient $R/\im(\bigwedge^g\mat{\Phi})$
defined by the ideal generated by the maximal minors of $\mat{\Phi}$.

In this work we will only need the case $g=2$. In this case the complex has length $f-1$,
and for $2\leq r\leq f$ its $(r-1)$-th module is free of rank
\beq\label{rkEN}
\rk\left((\Sym^{r-2}G)^\vee\tens\bigwedge^{r}F\right) \;=\; (r-1)\binom{f}{r}.
\eeq
All this can be made explicit, in coordinates. 
Let
\beq
\mat{\Phi}=\left(\begin{array}{cccc}x_1 & x_2 & \dots & x_f\\ x'_1 & x'_2 & \dots & x'_f\end{array}\right)
\eeq
for $x_i,x'_i\in R$. Then:
\begin{itemize}
\item $\im(\bigwedge^2\mat{\Phi})$ is generated by the $\binom{f}{2}$ minors
\beq
q_{i,j}=x_ix'_j-x_jx'_i
\eeq
for $1\leq i<j\leq f$,
\item these $q_{ij}$ are annihilated by the $2\binom{f}{3}$ relations
\beq
\begin{split}
r_{ijk}&=x_iq_{jk}-x_jq_{ik}+x_kq_{ij}\\
r'_{ijk}&=x'_iq_{jk}-x'_jq_{ik}+x'_kq_{ij}
\end{split}
\eeq
for $1\leq i<j<k\leq f$,
\item these $r_{ijk}$ and $r'_{ijk}$ are annihilated by the $3\binom{f}{4}$ relations
\beq
\begin{split}
s_{ijkl}&=x_ir_{jkl}-x_jr_{ikl}+x_kr_{ijl}-x_lr_{ijk}\\
s'_{ijkl}&=x_ir'_{jkl}-x_jr'_{ikl}+x_kr'_{ijl}-x_lr'_{ijk}+\\
&\phantom{=x_ir'_{jkl}-x_j}+x'_ir_{jkl}-x'_jr_{ikl}+x'_kr_{ijl}-x'_lr_{ijk}\\
s''_{ijkl}&=x'_ir'_{jkl}-x'_jr'_{ikl}+x_kr'_{ijl}-x_lr'_{ijk}
\end{split}
\eeq
for $1\leq i<j<k<l\leq f$,
\end{itemize}
and so on. These are just the cases $r=2,3,4$ of the more general:
\begin{proposition}\label{explicitEN}
In the multivariate polynomial ring
\beq
R[Z_{r,j;i_1,\dots,i_r}]
\eeq
where, for each $2\leq r\leq f$, indices range over the $(r-1)\binom{f}{r}$ values 
$1\leq j\leq r-1$ and $1\leq i_1<\dots<i_r\leq f$,
construct polynomials $s_{r,j;i_1,\dots,i_r}$ as follows. First for $r=2$ we define the constant polynomials
\beq\label{s2}
s_{2,1;i_1,i_2}=x_{i_1}x'_{i_2}-x'_{i_1}x_{i_2}
\eeq
given by the minors of $\mat{\Phi}$; and for $r\geq3$ we define linear polynomials
\beq\label{sr}
s_{r,j;i_1,\dots,i_r}=\sum_{u=1}^r(-1)^{u-1}(x_{i_u}Z_{r-1,j;i_1,\dots,\widehat{i_u},\dots,i_r}+x'_{i_u}Z_{r-1,j-1;i_1,\dots,\widehat{i_u},\dots,i_r})
\eeq
where notation $\widehat{i_u}$ means $i_u$ is omitted, and we replace $Z_{r,j;i_1,\dots,i_r}$ with zero if $j\leq0$ or $j\geq r$.
Then, for all $r\geq3$, we have
\beq
\mot{s}_r(\mot{s}_{r-1})=\mot{0}.
\eeq
\end{proposition}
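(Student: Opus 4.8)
The plan is to read the identity $\mot{s}_r(\mot{s}_{r-1})=\mot{0}$ concretely: for $r\geq3$ each $s^{(j)}_{r;i_1,\dots,i_r}$ is a linear form in the variables $Z^{(j')}_{r-1;\cdots}$ with coefficients among the $\pm x_i,\pm x'_i$, and the claim is that substituting $Z^{(j')}_{r-1;\ell_1,\dots,\ell_{r-1}}\mapsto s^{(j')}_{r-1;\ell_1,\dots,\ell_{r-1}}$ into it yields the zero polynomial. (Equivalently, after fixing bases the $\mot{s}_r$ are the differentials of the Eagon--Northcott complex \eqref{EN} for $g=2$ and the statement is $d^2=0$, which could also be extracted from the exactness result of \cite{Eisenbud05}; but I prefer a direct verification, which is short and additionally pins down the explicit differentials.) I would split into the base case $r=3$, where one substitutes the \emph{constant} minors \eqref{s2}, and the general step $r\geq4$, where \eqref{sr} governs both levels.

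For $r=3$: after dropping the terms whose superscript leaves $\{1,\dots,r-1\}$, the form $s^{(1)}_{3;i_1,i_2,i_3}$ becomes $x_{i_1}Z^{(1)}_{2;i_2,i_3}-x_{i_2}Z^{(1)}_{2;i_1,i_3}+x_{i_3}Z^{(1)}_{2;i_1,i_2}$, and $s^{(2)}_{3;i_1,i_2,i_3}$ the same with every $x$ replaced by $x'$. Substituting $Z^{(1)}_{2;a,b}\mapsto x_ax'_b-x'_ax_b$ produces exactly the relations denoted $r_{ijk}$, $r'_{ijk}$ above, and expanding each into its six degree-$3$ monomials in the $x_i,x'_i$ one sees they cancel in pairs, so both vanish identically.

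For $r\geq4$: fix $j$ and $1\leq i_1<\dots<i_r\leq f$, substitute $\mot{s}_{r-1}$ into \eqref{sr}, and expand each $s^{(j')}_{r-1;\dots}$ thereby produced again by \eqref{sr} (legitimate since $r-1\geq3$). The result is a sum over ordered pairs $(u,v)$ of distinct indices in $\{1,\dots,r\}$: the $(u,v)$-part comes from extracting $i_u$ at the outer level and $i_v$ at the inner level, and consists of four terms, one for each of the two ``$x$ or $x'$'' choices at the two extractions, each term being a sign times a product of one factor attached to $i_u$, one attached to $i_v$, and the unique variable $Z^{(\bullet)}_{r-2}$ whose index string is $i_1,\dots,i_r$ with $i_u$ and $i_v$ deleted (the term being suppressed when that superscript leaves $\{1,\dots,r-2\}$). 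The observations I would then make are: the superscript equals $j$ minus the number of $x'$-factors chosen, and the index string of $Z$ is the original one with $\{i_u,i_v\}$ removed in increasing order, and \emph{both are independent of the order} in which $i_u,i_v$ were extracted; hence the four (commutative) monomials coming from $(u,v)$ coincide with those coming from $(v,u)$. The attached signs, however, are opposite: extracting $i_u$ then $i_v$ carries $(-1)^{u-1}(-1)^{w-1}$ with $w$ the position of $i_v$ in the string after $i_u$ has been removed ($w=v$ if $v<u$, else $w=v-1$), and the one-line identity $\,(-1)^{u-1}(-1)^{(v-1)-1}=-(-1)^{v-1}(-1)^{u-1}\,$ (for $u<v$, and symmetrically for $u>v$) gives $\epsilon_{u,v}=-\epsilon_{v,u}$. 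So the $(u,v)$- and $(v,u)$-contributions cancel, and since every term of the expansion lies in exactly one such ordered pair with $u\neq v$, the total is $0$.

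The hard part will be nothing deep, only the sign bookkeeping in the last step: checking that removing $i_u$ then $i_v$ versus $i_v$ then $i_u$ shifts the ``position'' index so as to flip the overall sign — the familiar combinatorics behind $\partial^2=0$ for a Koszul-type alternating differential. Everything else (the order-independence of the superscript shift and of the surviving index string, together with the base case) is a routine unwinding of the definitions.
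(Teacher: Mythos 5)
Your proof is correct, and the sign bookkeeping in the inductive step goes through: with $u<v$ the outer/inner positions are $(u,\,v-1)$ for the order $(u,v)$ and $(v,\,u)$ for the order $(v,u)$, giving signs $(-1)^{u+v-1}$ and $(-1)^{u+v}$ which are opposite while the four monomials produced (with $Z$-superscripts $j$, $j-1$, $j-1$, $j-2$ and index string independent of extraction order) are the same multiset; and the convention that an out-of-range superscript kills a term is respected simultaneously by both orders, so the pairing survives degeneracies. The paper, however, takes its preferred route through exterior algebra: it regards the $\mot{Z}_r$ and $\mot{s}_r$ as $(r-1)$-tuples of exterior $r$-vectors, reads \eqref{s2} as $\mot{s}^{(1)}_2=\mot{x}\wedge\mot{x}'$ and \eqref{sr} as $\mot{s}^{(j)}_r=\mot{x}\wedge\mot{Z}^{(j)}_{r-1}+\mot{x}'\wedge\mot{Z}^{(j-1)}_{r-1}$, and then the alternating property of $\wedge$ makes $\mot{s}_r(\mot{s}_{r-1})$ a sum of terms each containing a repeated $\mot{x}$ or $\mot{x}'$ wedge factor, or a $\pm$-cancelling pair $\mot{x}\wedge\mot{x}'$ and $\mot{x}'\wedge\mot{x}$, so the whole thing is visibly zero with no index chase. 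Your argument is in effect the paper's \emph{other} suggested option (``express \eqref{EN} in the standard bases'') carried out in full; it is more elementary and self-contained, at the cost of the Koszul-style sign manipulation, whereas the wedge-product argument absorbs exactly that sign manipulation into the multilinear-algebra formalism. Both are sound; if you want to shorten yours, the one structural observation worth keeping explicit is the $w=v-1$ versus $w'=u$ shift, since that is where the antisymmetry actually comes from.
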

\begin{proof}
Either express \eqref{EN} in the standard bases of $F=R^f$ and $G=R^2$.
Or more directly, consider the $\mot{Z}_r$ and $\mot{s}_r$ as $(r-1)$-tuples of (exterior) $r$-vectors,
observe that \eqref{s2} means $\mot{s}_{2,1}=\mot{x}\wedge\mot{x}'$,
that \eqref{sr} means $\mot{s}_{r,j}=\mot{x}\wedge\mot{Z}_{r-1,j}+\mot{x}'\wedge\mot{Z}_{r-1,j-1}$,
so the alternating property gives
$\mot{s}_{3,1}(\mot{s}_2)=\mot{x}\wedge\mot{x}\wedge\mot{x}'=\mot{0}$,
$\;\mot{s}_{3,2}(\mot{s}_2)=\mot{x}'\wedge\mot{x}\wedge\mot{x}'=\mot{0}$,
and for $r\geq4$:
\beq
\begin{split}
\mot{s}_{r,j}(\mot{s}_{r-1})&=\mot{x}\wedge\mot{x}\wedge\mot{Z}_{r-2,j}+\mot{x}\wedge\mot{x}'\wedge\mot{Z}_{r-2,j-1}+\\
&\phantom{=\mot{x}\wedge\mot{x}\wedge\mot{Z}}+\mot{x}'\wedge\mot{x}\wedge\mot{Z}_{r-2,j-1}+\mot{x}'\wedge\mot{x}'\wedge\mot{Z}_{r-2,j-2}=\mot{0}.
\end{split}
\eeq
\end{proof}

\subsection*{Shortening (or projection from a point)}

We have interest in how this Eagon-Northcott complex interacts with reduction to a subcode,
and in particular with shortening. By induction it suffices to consider the case of
a codimension $1$ subcode, or a $1$-shortening respectively.

So let $\code$ be a $[n,k]$-code over $\F$, and $\code_{\mathcal{H}}$ a codimension $1$ subcode.
We will assume $\code$ projective, but then $\code_{\mathcal{H}}$ need not be so.
Choose a basis $\mot{c}_1,\dots,\mot{c}_{k-1}$ of $\code_{\mathcal{H}}$,
and complete it to a basis $\mot{c}_1,\dots,\mot{c}_k$ of $\code$.
Let $\mat{G}_{\mathcal{H}}$ and $\mat{G}$ be the corresponding generating matrices of $\code_{\mathcal{H}}$ and $\code$.
Let also $S=\F[X_1,\dots,X_k]$ be the polynomial ring in $k$ indeterminates,
and $S_{\mathcal{H}}=\F[X_1,\dots,X_{k-1}]$ its subring in the first $k-1$ indeterminates.
We have a commutative diagram
\beq\label{diagramme_sous-code}
\begin{CD}
S_{\mathcal{H}} @>>> \code_{\mathcal{H}}\deux[\cdot]\\
@VVV @VVV\\
S @>>> \code\deux[\cdot]
\end{CD}
\eeq
where horizontal maps denote evaluation, and vertical maps inclusion. Also set
\beq
\mathcal{H}=S_{\mathcal{H},1}=\linspan[\F]{X_1,\dots,X_{k-1}}\subset S_1
\eeq
and let $\mot{p}_{\mathcal{H}}=(0:\dots:0:1)^\top\in\PP^{k-1}$ be the associated point.
Last, let $\{\mot{p}_1,\dots,\mot{p}_n\}\subset\PP^{k-1}$ be the set of points defined by the columns of $\mat{G}$.
Then we have two possibilities:
\begin{itemize}
\item Either $\mot{c}_1,\dots,\mot{c}_{k-1}$ all vanish at some common position $i$,
or equivalently, $\mot{p}_{\mathcal{H}}=\mot{p}_i$.
Puncturing this position, we can identify $\code_{\mathcal{H}}$ with the $1$-shortened subcode of $\code$ at $i$.
Then the set of points in $\PP^{k-2}$ defined by the columns of $\mat{G}_{\mathcal{H}}$
is the image of $\{\mot{p}_1,\dots,\widehat{\mot{p}}_i,\dots,\mot{p}_n\}$ under the projection from $\mot{p}_i$.
\item Otherwise, if there is no such $i$, then $\code_{\mathcal{H}}$ is a ``general'' codimension $1$ subcode,
and the set of points in $\PP^{k-2}$ defined by the columns of $\mat{G}_{\mathcal{H}}$
is the image of $\{\mot{p}_1,\dots,\mot{p}_n\}$ under the projection from $\mot{p}_{\mathcal{H}}$.
\end{itemize}
(As already observed, $\code_{\mathcal{H}}$ need not be projective, so these image points need not be distinct.)

Let $f\geq g\geq d\geq 2$ be integers,
and let $\mat{\Phi}\in S_1^{g\times f}$ be a matrix whose entries are homogeneous linear forms, i.e. elements of $S_1$.
Recall from \cite[\S6B]{Eisenbud05} that such a matrix is said to be $1$-generic if
for any nonzero $\mot{a}\in\overline{\F}^g$ and $\mot{b}\in\overline{\F}^f$, $\mot{a}\mat{\Phi}\mot{b}^\top$ is nonzero.
Let $V_{\mat{\Phi}}\subset S_1^g$ be the column span of $\mat{\Phi}$, so $\dim(V_{\mat{\Phi}})\leq f$, and set
\beq
V_{\mat{\Phi},\mathcal{H}}=V_{\mat{\Phi}}\cap\mathcal{H}^g.
\eeq
Set $f_{\mathcal{H}}=\dim(V_{\mat{\Phi},\mathcal{H}})\geq\dim(V_{\mat{\Phi}})-g$, and then let $\mat{\Phi}_{\mathcal{H}}\in\mathcal{H}^{g\times f_{\mathcal{H}}}$
be a matrix whose columns form a basis of $V_{\mat{\Phi},\mathcal{H}}$.
\begin{proposition}\label{ENrac}
Under these hypotheses:
\begin{enumerate}
\item If $I_d(\code)$ contains the $d\times d$ minors of $\mat{\Phi}$,
then $I_d(\code_{\mathcal{H}})$ contains the $d\times d$ minors of $\mat{\Phi}_{\mathcal{H}}$.
\item If $\mat{\Phi}$ is $1$-generic, then $\mat{\Phi}_{\mathcal{H}}$ is $1$-generic, and $f_{\mathcal{H}}\geq f-g$.
\item If $I_d(\code)$ contains the $d\times d$ minors of $\mat{\Phi}$, if $\mat{\Phi}$ is $1$-generic, 
and if $\code_{\mathcal{H}}$ is a $1$-shortened subcode of $\code$, then $f_{\mathcal{H}}\geq f-d+1$.
\end{enumerate}
\end{proposition}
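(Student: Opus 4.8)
The plan is to dispatch the three assertions in order; the first two are bookkeeping with the column span $V_{\mat{\Phi}}$, and only the third uses the shortening hypothesis in an essential way. For assertion~(1) I would use that $V_{\mat{\Phi},\mathcal{H}}\subset V_{\mat{\Phi}}$ to write $\mat{\Phi}_{\mathcal{H}}=\mat{\Phi}\mat{B}$ for some $\mat{B}\in\F^{f\times f_{\mathcal{H}}}$. By multilinearity of the determinant in its columns (equivalently, Cauchy--Binet), each $d\times d$ minor of $\mat{\Phi}_{\mathcal{H}}$ is an $\F$-linear combination of $d\times d$ minors of $\mat{\Phi}$ taken on the same $d$ rows, hence lies in $I_d(\code)$ by hypothesis; on the other hand its entries lie in $\mathcal{H}=S_{\mathcal{H},1}$, so those minors lie in $S_{\mathcal{H},d}$. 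It then suffices to note that $I_d(\code)\cap S_{\mathcal{H},d}=I_d(\code_{\mathcal{H}})$, which drops out of the commutative square~\eqref{diagramme_sous-code}: its right vertical arrow $\code_{\mathcal{H}}\deux[\cdot]\inj\code\deux[\cdot]$ is injective (degreewise it is the inclusion $\code_{\mathcal{H}}\deux[r]\subset\code\deux[r]$ induced by $\code_{\mathcal{H}}\subset\code$), so $I(\code_{\mathcal{H}})=\ker(S_{\mathcal{H}}\to\code_{\mathcal{H}}\deux[\cdot])=S_{\mathcal{H}}\cap\ker(S\to\code\deux[\cdot])=S_{\mathcal{H}}\cap I(\code)$.

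For assertion~(2) I would first note that $1$-genericity of $\mat{\Phi}$ forces its columns to be $\overline{\F}$-linearly independent, since a nonzero $\mot{b}$ with $\mat{\Phi}\mot{b}^\top=\mot{0}$ would make $\mot{a}\mat{\Phi}\mot{b}^\top$ vanish for every $\mot{a}$. Hence $\dim V_{\mat{\Phi}}=f$, and combined with the inequality $f_{\mathcal{H}}\geq\dim V_{\mat{\Phi}}-g$ already recorded before the statement (coming from $\mathcal{H}^g$ having codimension $g$ in $S_1^g$) this gives $f_{\mathcal{H}}\geq f-g$. For $1$-genericity of $\mat{\Phi}_{\mathcal{H}}$: its columns form a basis of $V_{\mat{\Phi},\mathcal{H}}$, hence are independent, so for nonzero $\mot{a}\in\overline{\F}^g$ and $\mot{b}\in\overline{\F}^{f_{\mathcal{H}}}$ the column $\mat{\Phi}_{\mathcal{H}}\mot{b}^\top$ is a nonzero element of $V_{\mat{\Phi}}\tens\overline{\F}$, so it equals $\mat{\Phi}\mot{c}^\top$ for a unique nonzero $\mot{c}\in\overline{\F}^f$; then $\mot{a}\mat{\Phi}_{\mathcal{H}}\mot{b}^\top=\mot{a}\mat{\Phi}\mot{c}^\top\neq0$ by $1$-genericity of $\mat{\Phi}$.

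Assertion~(3) is where the work lies. I would decompose $\mat{\Phi}=\mat{\Phi}'+X_k\mat{\Psi}$ with $\mat{\Phi}'\in\mathcal{H}^{g\times f}$ and $\mat{\Psi}\in\F^{g\times f}$ the scalar matrix of $X_k$-coefficients. Via the isomorphism $\mot{b}\mapsto\mat{\Phi}\mot{b}^\top$ from $\F^f$ onto $V_{\mat{\Phi}}$ (injective by $1$-genericity), the subspace $V_{\mat{\Phi},\mathcal{H}}$ corresponds to $\{\mot{b}:\mat{\Phi}\mot{b}^\top\in\mathcal{H}^g\}=\{\mot{b}:\mat{\Psi}\mot{b}^\top=\mot{0}\}=\ker\mat{\Psi}$, the middle equality because the $X_k$-component of $\mat{\Phi}\mot{b}^\top$ is $X_k\,\mat{\Psi}\mot{b}^\top$; hence $f_{\mathcal{H}}=f-\rk_{\F}\mat{\Psi}$. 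Then I would invoke the shortening hypothesis: by the first bullet preceding the proposition, the center of projection $\mot{p}_{\mathcal{H}}=(0:\cdots:0:1)$ coincides with one of the points $\mot{p}_i$ cutting out $\code$, so every element of $I(\code)$ vanishes there. In particular the $d\times d$ minors of $\mat{\Phi}$ vanish at $\mot{p}_{\mathcal{H}}$; but evaluating a linear form at $(0,\dots,0,1)$ returns its $X_k$-coefficient, so $\mat{\Phi}$ specializes to $\mat{\Psi}$, and this says exactly that all $d\times d$ minors of $\mat{\Psi}$ are zero, i.e.\ $\rk_{\F}\mat{\Psi}\leq d-1$. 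Therefore $f_{\mathcal{H}}=f-\rk_{\F}\mat{\Psi}\geq f-d+1$. The only genuinely nonformal point is this last paragraph, and within it the two linked observations that $f_{\mathcal{H}}$ is governed exactly by $\ker\mat{\Psi}$ and that ``the minors lie in $I_d(\code)$'' becomes a rank bound on $\mat{\Psi}$ once read at the center of projection; everything else is linear-algebra routine.
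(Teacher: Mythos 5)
Your proof is correct and follows essentially the same route as the paper: part~1 via the diagram~\eqref{diagramme_sous-code} and linearity of minors, part~2 via $\mat{\Phi}_{\mathcal{H}}=\mat{\Phi}\mat{M}$, and part~3 via evaluation at $\mot{p}_{\mathcal{H}}=(0:\cdots:0:1)$ --- your matrix $\mat{\Psi}$ of $X_k$-coefficients is exactly the paper's $\ev_{\mot{p}_{\mathcal{H}}}(\mat{\Phi})$, your identification $V_{\mat{\Phi},\mathcal{H}}\simeq\ker\mat{\Psi}$ is the paper's kernel description~\eqref{kerevpH}, and $\rk\mat{\Psi}\leq d-1$ is the paper's bound on $\dim\ev_{\mot{p}_{\mathcal{H}}}(V_{\mat{\Phi}})$. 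The only cosmetic difference is that the paper keeps the inequality $f_{\mathcal{H}}\geq f-\rk\mat{\Psi}$ where you observe the exact equality, but both land in the same place.
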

\begin{proof}
By \eqref{diagramme_sous-code} we have $I_d(\code_{\mathcal{H}})=I_d(\code)\cap S_{\mathcal{H}}$.
Columns of $\mat{\Phi}_{\mathcal{H}}$ belong to the column span of $\mat{\Phi}$,
so $d\times d$ minors of $\mat{\Phi}_{\mathcal{H}}$ are linear combinations of $d\times d$ minors of $\mat{\Phi}$.
On the other hand, $\mat{\Phi}_{\mathcal{H}}$ has coefficients in $\mathcal{H}$, so its $d\times d$ minors
belong to $S_{\mathcal{H}}$. This proves~1.

Assume $\mat{\Phi}$ is $1$-generic.
As columns of $\mat{\Phi}_{\mathcal{H}}$ are linearly independent and belong to the column space of $\mat{\Phi}$,
we can write $\mat{\Phi}_{\mathcal{H}}=\mat{\Phi}\mat{M}$ where $\mat{M}\in\F^{f\times f_{\mathcal{H}}}$
has $\rk(\mat{M})=f_{\mathcal{H}}$. Then for $\mot{a}\in\overline{\F}^g$ and $\mot{b}\in\overline{\F}^{f_{\mathcal{H}}}$ nonzero
we have $\mot{a}\mat{\Phi}_{\mathcal{H}}\mot{b}^\top=\mot{a}\mat{\Phi}\mat{M}\mot{b}^\top\neq0$ because $\mat{M}\mot{b}^\top$
is nonzero. This proves that $\mat{\Phi}_{\mathcal{H}}$ is $1$-generic.
Then it is easily seen that a $1$-generic matrix has linearly independent columns, so $\dim(V_{\mat{\Phi}})=f$, hence $f_{\mathcal{H}}\geq f-g$.
This finishes the proof of~2.

Now we prove 3. Assume $\code_{\mathcal{H}}$ is the $1$-shortened subcode of $\code$ at position~$i$,
hence $\mot{p}_i=\mot{p}_{\mathcal{H}}=(0:\dots:0:1)^\top$.
As $\mat{\Phi}$ is $1$-generic, we have $\dim(V_{\mat{\Phi}})=f$.
Let $\ev_{\mot{p}_{\mathcal{H}}}:S\longto\F$ denote evaluation at $\mot{p}_{\mathcal{H}}$.
Applied coordinatewise, we also have evaluation maps $\ev_{\mot{p}_{\mathcal{H}}}:S^g\longto\F^g$
and $\ev_{\mot{p}_{\mathcal{H}}}:S^{g\times f}\longto\F^{g\times f}$.
Also we can restrict $\ev_{\mot{p}_{\mathcal{H}}}$ to subspaces,
so for instance $\mathcal{H}=\ker(\ev_{\mot{p}_{\mathcal{H}}}:S_1\longto \F)$.
It then follows
\beq\label{kerevpH}
V_{\mat{\Phi},\mathcal{H}}=V_{\mat{\Phi}}\cap\mathcal{H}^g=\ker(\,\ev_{\mot{p}_{\mathcal{H}}}:V_{\mat{\Phi}}\longto \F^g\,),
\eeq
while the image $\ev_{\mot{p}_{\mathcal{H}}}(V_{\mat{\Phi}})\subset\F^g$ is
the column span of $\ev_{\mot{p}_{\mathcal{H}}}(\mat{\Phi})\in\F^{g\times f}$.
As the $d\times d$ minors of $\mat{\Phi}$ belong to $I_d(\code)$, they vanish at each column of $\mat{G}$,
in particular they vanish at $\mot{p}_{\mathcal{H}}=\mot{p}_i$. But this means precisely that the $d\times d$ minors
of $\ev_{\mot{p}_{\mathcal{H}}}(\mat{\Phi})$ all vanish, or equivalently,
\beq
\dim(\ev_{\mot{p}_{\mathcal{H}}}(V_{\mat{\Phi}}))\leq d-1.
\eeq
Joint with \eqref{kerevpH} this gives $f_{\mathcal{H}}=\dim(V_{\mat{\Phi},\mathcal{H}})\geq \dim(V_{\mat{\Phi}})-(d-1)=f-d+1$.
\end{proof}

\subsection*{Application to alternant codes}

Consider a code $\code\in\Alt^\perp_{q,m,n,t}$, assumed to be proper.
By \eqref{decompose_alternant}, after extension of scalars, $\code_{\F_{q^m}}$ admits as a basis
the $k=mt$ vectors $(\mot{y}\mot{x}^a)^{q^u}$ for $0\leq a\leq t-1$ and $0\leq u\leq m-1$.
Rename the $k=mt$ variables in our polynomial ring $S$ accordingly, so that our evaluation map now is
\beq
\label{evalalt}
S=\F_{q^m}[X^{(u)}_a]\longto \code_{\F_{q^m}}\deux[\cdot]
\eeq
where $X^{(u)}_a$ evaluates as $(\mot{y}\mot{x}^a)^{q^u}$.

If $\mat{M}$ is a matrix (or more generally an expression) in the $X^{(u)}_a$,
we denote by $\mat{M}^{(v)}$ the same matrix (or expression) with each $X^{(u)}_a$ replaced by $X^{(u+v)}_a$,
where $u+v$ is considered mod $m$. We also write $\mat{M}'=\mat{M}^{(1)}$, $\mat{M}''=\mat{M}^{(2)}$, etc.

Set
\beq
e=\lfloor\log_q(t-1)\rfloor,
\eeq
and for any $0\leq u\leq e$ define a $2\times(t-q^u)$ matrix
\beq
\mat{B}_u=
\left(\begin{array}{cccc}
X^{(0)}_0 & X^{(0)}_1 & \dots & X^{(0)}_{t-1-q^u}\\
X^{(0)}_{q^u} & X^{(0)}_{q^u+1} & \dots & X^{(0)}_{t-1}
\end{array}\right).
\eeq
We then define the block matrix
\beq\label{defPhi}
\mat{\Phi}=
\left(\begin{array}{c|c|c|c}
\mat{B}^{(e)}_0 & \mat{B}^{(e-1)}_1 & \cdots & \mat{B}^{(0)}_e
\end{array}\right)
\eeq
of total size $2\times f$ where $f=(e+1)t-\frac{q^{e+1}-1}{q-1}$.
Observe that this matrix only depends on $q$ and $t$, debatably on $m$,
but certainly not on $n$ nor on the specific choice of $\code$.

\begin{lemma}\label{minPhiI2C}
The $2\times 2$ minors of $\mat{\Phi}$ belong to $I_2(\code_{\F_{q^m}})$.
\end{lemma}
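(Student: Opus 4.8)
The plan is to reduce everything to the classical fact that generalized Reed–Solomon codes carry quadratic relations coming from $2\times 2$ minors, and then to track how the Frobenius twists in the decomposition \eqref{decompose_alternant} interact with the block structure of $\mat{\Phi}$. First I would recall the geometric meaning of the variables: under the evaluation map \eqref{evalalt}, the variable $X^{(u)}_a$ evaluates to $(\mot{y}\mot{x}^a)^{q^u}$, which is the coordinate function ``$a$-th standard basis vector'' on the twisted summand $\GRS_t(\mot{x}^{q^u},\mot{y}^{q^u})$. Concretely, if we write $\xi_i=x_i$ and $\eta_i=y_i$ for the support and multiplier entries, then the $i$-th column $\mot{p}_i\in\PP^{k-1}$ of the relevant parity check matrix has $X^{(u)}_a$-coordinate equal to $\eta_i^{q^u}\xi_i^{aq^u}$. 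A $2\times 2$ minor $X^{(u)}_aX^{(v)}_b-X^{(u')}_{a'}X^{(v')}_{b'}$ of a matrix of linear forms lies in $I_2(\code_{\F_{q^m}})$ exactly when it vanishes at every $\mot{p}_i$, i.e. when $\eta_i^{q^u+q^v}\xi_i^{aq^u+bq^v}=\eta_i^{q^{u'}+q^{v'}}\xi_i^{a'q^{u'}+b'q^{v'}}$ for all $i$; a sufficient condition is equality of the $q$-adic ``weight vectors'', namely $q^u+q^v=q^{u'}+q^{v'}$ (matching the $\eta$-exponent) together with the formal multiset equality of $\{(a,u),(b,v)\}$ and $\{(a',u'),(b',v')\}$ as pairs contributing $aq^u+bq^v$ versus $a'q^{u'}+b'q^{v'}$.

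Next I would make the block structure explicit. The matrix $\mat{\Phi}$ is a horizontal concatenation of the blocks $\mat{B}^{(e-u)}_u$ for $0\le u\le e$, where $\mat{B}_u$ has top row $X^{(0)}_0,\dots,X^{(0)}_{t-1-q^u}$ and bottom row $X^{(0)}_{q^u},\dots,X^{(0)}_{t-1}$, so that the $j$-th column of $\mat{B}_u$ is $\binom{X^{(0)}_{j}}{X^{(0)}_{j+q^u}}$. Applying the twist $(e-u)$ turns this column into $\binom{X^{(e-u)}_{j}}{X^{(e-u)}_{j+q^u}}$. A $2\times 2$ minor of $\mat{\Phi}$ picks two columns, say column $j$ of block $u$ and column $j'$ of block $u'$; it equals
\beq
X^{(e-u)}_{j}\,X^{(e-u')}_{j'+q^{u'}} - X^{(e-u')}_{j'}\,X^{(e-u)}_{j+q^u}.
\eeq
I would then verify that both monomials evaluate to the same function of $i$. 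The $\eta_i$-exponent of the first is $q^{e-u}+q^{e-u'}$ and of the second is $q^{e-u'}+q^{e-u}$ — these match automatically. For the $\xi_i$-exponent: the first gives $j\,q^{e-u}+(j'+q^{u'})q^{e-u'}=j\,q^{e-u}+j'q^{e-u'}+q^{e}$, and the second gives $j'q^{e-u'}+(j+q^u)q^{e-u}=j'q^{e-u'}+j\,q^{e-u}+q^{e}$ — again equal. Hence each such minor vanishes identically at every $\mot{p}_i$, so it lies in $I_2(\code_{\F_{q^m}})$. The case where the two chosen columns lie in the same block $u$ (so $u=u'$) is the classical GRS relation $X^{(e-u)}_{j}X^{(e-u)}_{j''}-X^{(e-u)}_{j'}X^{(e-u)}_{j'''}$ with $j+j''+q^u$ pattern, and is handled the same way; in fact the computation above already covers it by setting $u'=u$.

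The one genuine subtlety — and the step I expect to be the main obstacle to write cleanly — is bookkeeping the indices so that every column index that appears actually falls in the allowed range $0\le(\text{index})\le t-1$, which is why the block $\mat{B}_u$ is only $(t-q^u)$ wide and why the exponent $e=\lfloor\log_q(t-1)\rfloor$ is chosen so that $q^e\le t-1$; this guarantees the twist indices $e-u$ stay in $\{0,\dots,e\}$ and the entry indices $j,j+q^u$ stay in $\{0,\dots,t-1\}$. I would therefore phrase the argument as: (i) reduce ``lies in $I_2$'' to ``vanishes at all $\mot{p}_i$'' using surjectivity of $S_2\to(\code_{\F_{q^m}})\deux[2]$; (ii) compute the evaluation of a general column of $\mat{\Phi}$ as $(\eta_i^{q^{e-u}}\xi_i^{j\,q^{e-u}})_i$; (iii) observe that a $2\times2$ minor is a difference of two products of such evaluations, and expand both $\eta$- and $\xi$-exponents to see they coincide, as above. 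No deep input beyond \eqref{decompose_alternant} and the definition of $I_2$ is needed; the content is entirely the exponent arithmetic, which the block-twist design of $\mat{\Phi}$ is engineered to make work out.
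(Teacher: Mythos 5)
Your proof is correct and follows essentially the same route as the paper's: write a general $2\times 2$ minor of $\mat{\Phi}$ in terms of columns $j$ of block $u$ and $j'$ of block $u'$, evaluate each monomial via \eqref{evalalt}, and observe that the exponent of $\mot{y}$ is $q^{e-u}+q^{e-u'}$ in both terms while the exponent of $\mot{x}$ is $jq^{e-u}+j'q^{e-u'}+q^e$ in both terms, hence the minor vanishes identically. The only small inaccuracy is attributing the reduction ``lies in $I_2$ $\Leftrightarrow$ vanishes at all $\mot{p}_i$'' to surjectivity of $S_2\to\code\deux[2]$: it is just the definition of $I_2$ as $\ker(\ev_{\mat{G},2})$ together with $\ev_{\mat{G},2}(f)=(f(\mot{p}_1),\dots,f(\mot{p}_n))$; surjectivity plays no role here.
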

\begin{proof}
The $2\times 2$ minor defined by the columns
$\left[\begin{array}{c}X^{(e-u)}_a\\X^{(e-u)}_{a+q^u}\end{array}\right]$ of $\mat{B}^{(e-u)}_u$ and
$\left[\begin{array}{c}X^{(e-v)}_b\\X^{(e-v)}_{b+q^v}\end{array}\right]$ of $\mat{B}^{(e-v)}_v$ is
\beq
X^{(e-u)}_aX^{(e-v)}_{b+q^v}-X^{(e-v)}_bX^{(e-u)}_{a+q^u}
\eeq
which evaluates under \eqref{evalalt} to
\beq
(\mot{y}\mot{x}^a)^{q^{e-u}}(\mot{y}\mot{x}^{b+q^v})^{q^{e-v}}-(\mot{y}\mot{x}^b)^{q^{e-v}}(\mot{y}\mot{x}^{a+q^u})^{q^{e-u}}=\mot{0}.
\eeq
\end{proof}

\begin{lemma}\label{1gen}
The matrix $\mat{\Phi}$ is $1$-generic.
\end{lemma}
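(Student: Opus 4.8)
The plan is to unwind the definition. The matrix $\mat{\Phi}$ is $1$-generic iff for every nonzero $\mot{a}=(\lambda,\mu)\in(\overline{\F_{q^m}})^2$ and every nonzero $\mot{b}$ the linear form $\mot{a}\mat{\Phi}\mot{b}^\top$ is nonzero; since $\mot{a}\mat{\Phi}\mot{b}^\top=\sum_j b_jL_j$ where $(L_1,\dots,L_f)=\mot{a}\mat{\Phi}$ is the row of linear forms obtained columnwise as $\lambda\cdot(\text{top entry})+\mu\cdot(\text{bottom entry})$, this amounts to saying that, for each fixed $(\lambda,\mu)\neq(0,0)$, the $f$ linear forms $L_1,\dots,L_f$ are linearly independent over $\overline{\F_{q^m}}$. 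So the whole statement reduces to a concrete linear-independence assertion about an explicit family of linear forms, which I would settle by exhibiting a triangularity.

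The first reduction I would make is to the individual blocks $\mat{B}^{(e-u)}_u$ of $\mat{\Phi}$. Since $\code$ is proper, $t\le mt=k\le n\le q^m$, so $t-1<q^m$ and hence $e=\lfloor\log_q(t-1)\rfloor<m$; therefore the levels $e-u$ $(0\le u\le e)$ are the pairwise distinct residues $e,e-1,\dots,0$ modulo $m$, and the variables $X^{(e-u)}_a$ occurring in the block $\mat{B}^{(e-u)}_u$ are disjoint from those occurring in the other blocks. Consequently, for a fixed $\mot{a}$, linear independence of the full family $\mot{a}\mat{\Phi}$ follows once we know it holds blockwise.

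Now fix $u$ and set $w=q^u$, so $w\le q^e\le t-1$. The forms coming from $\mat{B}^{(e-u)}_u$ are $g_j:=\lambda X^{(e-u)}_j+\mu X^{(e-u)}_{w+j}$ for $0\le j\le t-1-w$. If $\lambda=0$ then $\mu\neq 0$ and the $g_j$ are scalar multiples of the distinct variables $X^{(e-u)}_w,\dots,X^{(e-u)}_{t-1}$, hence independent. If $\lambda\neq 0$, I would split the index range $\{0,\dots,t-1\}$ into residue classes modulo $w$: in the class $\{r,r+w,\dots,r+(\ell_r-1)w\}$ of $r\in\{0,\dots,w-1\}$, the relevant forms are $g_r,g_{r+w},\dots,g_{r+(\ell_r-2)w}$, whose coefficient matrix in the ordered variables $X^{(e-u)}_r,\dots,X^{(e-u)}_{r+(\ell_r-1)w}$ is bidiagonal with $\lambda$ on the diagonal and $\mu$ on the superdiagonal; being of size $(\ell_r-1)\times\ell_r$ with $\lambda\neq 0$, it has full row rank (unwinding $\sum_s c_s g_{r+sw}=0$ forces $c_0=c_1=\cdots=0$). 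As distinct residue classes involve disjoint variables, the block is handled, and the lemma follows.

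I do not expect a genuine obstacle here: the argument is bookkeeping, essentially the observation that $\mat{\Phi}$ is, up to permuting columns and variables, a disjoint union of $1$-generic Hankel (rational normal scroll) matrices. The only points requiring a little care are the inequality $e<m$ (so that the blocks sit at distinct levels, and so that the superscripts $e-u$ need no reduction mod $m$) and the bound $w\le t-1$ (so that each block contributes at least one form); beyond that it is the triangularity computation above.
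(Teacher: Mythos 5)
Your proof is correct, and it and the paper's proof are both triangularity (pivot) arguments for the same underlying reason, but yours unrolls it by one more level. The paper simply takes the rightmost nonzero entries $a_i$ of $\mot{a}$ and $b_j$ of $\mot{b}$ and notes that the variable $\mat{\Phi}_{i,j}$ occurs in $\mot{a}\mat{\Phi}\mot{b}^\top$ with coefficient $a_ib_j\neq 0$ and nowhere else, because any contributing column $j'<j$ in the same block carries strictly smaller subscripts, and any column in a different block carries a different superscript. You instead recast $1$-genericity as linear independence (over $\overline{\F}_{q^m}$) of the row of forms $\mot{a}\mat{\Phi}$ for each fixed nonzero $\mot{a}$, split blockwise using variable disjointness, then split each block further by residue class modulo $q^u$, exhibiting in each class a bidiagonal $(\ell_r-1)\times\ell_r$ coefficient matrix of full row rank. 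This is the same mechanism expressed as a block decomposition rather than as a single pivot, and it has the minor expository bonus of displaying $\mat{\Phi}$ as a variable-disjoint juxtaposition of $1$-generic generalized Hankel (scroll) matrices. You also spell out the check that $e<m$ (so that the superscripts $e-u$, $0\le u\le e$, are pairwise distinct modulo $m$ and the blocks really carry disjoint variables) and that $q^u\le t-1$ (so each block is nonempty); the paper uses both facts tacitly. Everything you wrote is sound.
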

\begin{proof}
Let $\mot{a}\in\overline{\F}_q^2$ and $\mot{b}\in\overline{\F}_q^f$ be nonzero.
Let $a_i$ be the rightmost nonzero entry of $\mot{a}$ and $b_j$ the rightmost nonzero entry of $\mot{b}$.
Then the $X_c^{(u)}$ variable that multiplies $a_ib_j$ in $\mot{a}\mat{\Phi}\mot{b}^\top$ does not appear
elsewhere in $\mot{a}\mat{\Phi}\mot{b}^\top$, so $\mot{a}\mat{\Phi}\mot{b}^\top\neq0$.
\end{proof}

\begin{theorem}\label{thlowerboundalt}
Let $\code\in\Alt^\perp_{q,m,n,t}$ be proper, $e=\lfloor\log_q(t-1)\rfloor$, $f=(e+1)t-\frac{q^{e+1}-1}{q-1}$.
For any $s\geq 0$, let $\code_s$ be a $s$-shortened subcode of $\code$.
Then for all $r\geq 2$ we have $\beta_{r-1,r}(\code_s)\geq(r-1)\binom{f-s}{r}$,
hence
\beq\label{lowerboundrmaxalt}
\rmax(\code_s)\geq f-s.
\eeq
\end{theorem}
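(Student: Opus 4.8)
The plan is to combine the explicit Eagon--Northcott apparatus of the previous subsection with the shortening result of Proposition~\ref{ENrac}. First I would record that, by Lemmas~\ref{minPhiI2C} and~\ref{1gen}, the matrix $\mat{\Phi}$ is a $1$-generic $2\times f$ matrix of linear forms whose $2\times 2$ minors lie in $I_2(\code_{\F_{q^m}})$. So the hypotheses of Proposition~\ref{ENrac} are met with $g=d=2$. A shortening of $\code$ by $s$ positions is obtained by iterating $1$-shortenings $s$ times; at each step part~3 of Proposition~\ref{ENrac} applies (the $1$-shortened subcode case), and part~2 tells us that $1$-genericity and the ``$I_2$ contains the minors'' property are preserved. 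Since $d=2$, part~3 gives $f_{\mathcal{H}}\geq f-d+1 = f-1$ at each step. Iterating $s$ times (and using Lemma~\ref{extscal} to work over $\F_{q^m}$ throughout, where the decomposition \eqref{decompose_alternant} lives), I get a $1$-generic $2\times f_s$ matrix $\mat{\Phi}_s$ of linear forms, with $f_s\geq f-s$, whose $2\times 2$ minors lie in $I_2((\code_s)_{\F_{q^m}})$.

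Next I would invoke the Eagon--Northcott complex of $\mat{\Phi}_s$. Because $\mat{\Phi}_s$ is $1$-generic, by \cite[Th.~6.4]{Eisenbud05} the complex \eqref{EN} (with $g=2$, $f$ replaced by $f_s$) is exact, hence it is a (minimal, since the entries are linear) free resolution of $S/J_s$, where $J_s$ is the ideal generated by the $2\times 2$ minors of $\mat{\Phi}_s$. From \eqref{rkEN}, its $(r-1)$-th free module is $S(-r)^{(r-1)\binom{f_s}{r}}$, so $\beta_{r-1,r}(S/J_s) = (r-1)\binom{f_s}{r}$ for all $2\leq r\leq f_s$ (and one checks this is the entire linear strand: the Eagon--Northcott resolution is linear). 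Since the minors of $\mat{\Phi}_s$ span a subspace of $I_2((\code_s)_{\F_{q^m}})$, Corollary~\ref{sous-resolutions} (applied with the single subspace $V^{(1)}$ equal to that span) gives
\beq
\beta_{r-1,r}((\code_s)_{\F_{q^m}}) \;\geq\; \beta_{r-1,r}(S/J_s) \;=\; (r-1)\binom{f_s}{r} \;\geq\; (r-1)\binom{f-s}{r}
\eeq
for all $r\geq 2$, where the last inequality uses $f_s\geq f-s$ and monotonicity of the binomial coefficient. By Lemma~\ref{extscal} the left-hand side equals $\beta_{r-1,r}(\code_s)$, which is the claimed bound. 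Finally, taking $r=f-s$ the right-hand side is $(f-s-1)\binom{f-s}{f-s} = f-s-1>0$ (assuming $f-s\geq 2$, the only nontrivial range), so $\beta_{f-s-1,f-s}(\code_s)>0$, which gives $r_{\max}(\code_s)\geq f-s$.

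The main obstacle is the bookkeeping in the shortening step: one must verify that after $s$ iterations one still has a genuinely $1$-generic matrix of \emph{linear} forms living in the polynomial ring of $\code_s$ (the number of variables drops from $mt$ to $mt-s$ at each shortening), so that the Eagon--Northcott complex one writes down is actually a minimal free resolution over the correct ring $S$ — this is exactly what parts~1--3 of Proposition~\ref{ENrac} are designed to guarantee, applied inductively. A secondary point to check is that the Eagon--Northcott resolution of a $1$-generic $2\times f_s$ matrix is purely linear, so that its $(r-1)$-th Betti number sits in degree $r$ and hence contributes to $\beta_{r-1,r}$ as claimed; this follows since all entries of $\mat{\Phi}_s$ have degree $1$, so the $r$-th term $\bigwedge^r F$ is generated in degree $r$ and the differentials preserve this, making the minimality and the degree count automatic.
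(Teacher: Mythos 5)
Your proof follows the paper's own argument step for step: extend scalars via Lemma~\ref{extscal}, establish that $\mat{\Phi}$ is $1$-generic with minors in $I_2$ (Lemmas~\ref{minPhiI2C} and~\ref{1gen}), iterate Proposition~\ref{ENrac} with $g=d=2$ to get a $1$-generic $2\times f_s$ matrix with $f_s\geq f-s$, read off the Betti numbers from the Eagon--Northcott resolution, and inject via Corollary~\ref{sous-resolutions} with $l=1$. One small imprecision: you justify minimality of the Eagon--Northcott complex by saying ``the entries are linear,'' but the first differential $\bigwedge^2\mat{\Phi}_s$ has \emph{quadratic} entries (the minors); minimality holds simply because every differential has entries of positive degree, hence no unit entries. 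This does not affect the correctness of your argument.
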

\begin{proof}
By Lemma~\ref{extscal} we can extend scalars to $\F_{q^m}$.
By Lemma~\ref{minPhiI2C}, $I_2(\code)$ then contains the $2\times 2$ minors of
the $2\times f$ matrix $\mat{\Phi}$, 
and this matrix $\mat{\Phi}$ is $1$-generic by Lemma~\ref{1gen}.
By Proposition~\ref{ENrac} applied $s$ times (with $g=d=2$),
$I_2(\code_s)$ then contains the $2\times 2$ minors of a $2\times f_s$ matrix $\mat{\Phi}_s$, where $f_s\geq f-s$,
and this matrix $\mat{\Phi}_s$ is $1$-generic also.
Then by~\cite[\S6B]{Eisenbud05}, the $2\times 2$ minors of $\mat{\Phi}_s$ are linearly independent,
and its Eagon-Northcott complex is exact with only nonzero Betti numbers $\beta_{0,0}=1$
and $\beta_{r-1,r}=(r-1)\binom{f_s}{r}$ for $2\leq r\leq f_s$.
We then conclude with Corollary~\ref{sous-resolutions} (with $l=1$ and $V^{(1)}=\im(\bigwedge^2\mat{\Phi}_s)$).
\end{proof}

\subsection*{Further results on syzygies of dual alternant codes}

It is possible to improve the lower bound on $\beta_{r-1,r}(\code_s)$, using not only $\mat{\Phi}=\mat{\Phi}^{(0)}$, but also
its conjugates $\mat{\Phi}^{(1)},\dots,\mat{\Phi}^{(m-1)}$.
Unfortunately we only have very partial results in this direction.
These partial results will not be used in the remainder of the text, but we present them nevertheless in the hope that
they could motivate further study.

Let us first consider the simpler case $s=0$, i.e. without shortening.
Let then $\code$ be a proper dual alternant code as in Theorem~\ref{thlowerboundalt}.

The Eagon-Northcott complex of each conjugate $\mat{\Phi}^{(u)}$ is a subcomplex of the minimal resolution of $\code$,
so that in each degree $r$ we get a subspace
\beq\label{defVr}
V_r=M_{r-1,r}(\mat{\Phi}^{(0)})+\cdots+M_{r-1,r}(\mat{\Phi}^{(m-1)})\subset M_{r-1,r}(\code),
\eeq
hence $\beta_{r-1,r}(\code)\geq\dim V_r$. Two questions then arise naturally:
\begin{itemize}
\item[(Q1)] Is the sum in \eqref{defVr} direct? If not, what are the linear dependencies between the $M_{r-1,r}(\mat{\Phi}^{(u)})$ in $M_{r-1,r}(\code)$?
\item[(Q2)] Is the inclusion $V_r\subset M_{r-1,r}(\code)$ an equality, i.e. do the $M_{r-1,r}(\mat{\Phi}^{(u)})$ together generate $M_{r-1,r}(\code)$?
\end{itemize}
We can answer these questions in degree $r=2$.
\begin{definition}\label{defI2uv}
For any $u,v$ we let
\beq
I_2^{(u,v)}\subset I_2(\code)
\eeq
be the subspace generated by the quadratic forms
\beq\label{qabcd}
q^{(u,v)}_{a,b,c,d}=X_a^{(u)}X_d^{(v)}-X_b^{(v)}X_c^{(u)}
\eeq
for all $a,b,c,d$ such that
\beq\label{condabcd}
aq^u+dq^v=bq^v+cq^u \mod q^m-1
\eeq
(so that these quadratic forms indeed evaluate to $\mot{0}$).
\end{definition}
Observe that $I_2^{(u,v)}=I_2^{(v,u)}$. Thus, in order to avoid these $I_2^{(u,v)}$ appearing twice, we will impose some conditions
on the pairs $u,v$ we consider. Say, we consider only those where $v=u-w \mod m$ with $0\leq w<m/2$,
and also in case $m$ even those where $v=u-m/2 \mod m$ with $0\leq u<m/2$.
\begin{lemma}\label{I2uvindep}
With such a restriction, the $I_2^{(u,v)}$ are in direct sum in $I_2(\code)$.
\end{lemma}
\begin{proof}
Consider a linear combination $\sum_{u,v}q^{(u,v)}$ for general $q^{(u,v)}\in I_2^{(u,v)}$.
If some $q^{(u,v)}$ is nonzero, then it contains a monomial $X_i^{(u)}X_j^{(v)}$ for some $i,j$.
However this monomial does not appear in the $q^{(u',v')}$ with $(u',v')\neq(u,v)$,
so the linear combination is nonzero.
\end{proof}
\begin{lemma}\label{baseI2uv}
If $v=u-w \mod m$ with $1\leq w\leq m/2$, then a basis of $I_2^{(u,v)}$ is given by those $2\times 2$ minors of the matrix
\begin{equation*}
\left(\begin{array}{c|c}
\mat{B}^{(u)}_0 & \mat{B}^{(v)}_w
\end{array}\right)
\end{equation*}
obtained by picking one column in the first block $\mat{B}^{(u)}_0$ and the other column in the second block $\mat{B}^{(v)}_w$.
As a consequence, $\dim I_2^{(u,v)}=(t-1)(t-q^w)$ if $w\leq e$, and $0$ otherwise.

Likewise if $u=v$, a basis of $I_2^{(u,u)}$ is given by the $2\times 2$ minors of $\mat{B}^{(u)}_0$,
and $\dim I_2^{(u,v)}=\frac{(t-1)(t-2)}{2}$.
\end{lemma}
\begin{proof}
First consider the case $v=u-w \mod m$ with $1\leq w\leq m/2$. Let
\beq
q^{(u,v)}_{a,b}=X^{(u)}_aX^{(v)}_{b+q^w}-X^{(v)}_bX^{(u)}_{a+1}
\eeq
be the $2\times 2$ minor constructed
from the column $\left[\begin{array}{c}X^{(u)}_a\\X^{(u)}_{a+1}\end{array}\right]$ of $\mat{B}^{(u)}_0$
and the column $\left[\begin{array}{c}X^{(v)}_b\\X^{(v)}_{b+q^w}\end{array}\right]$ of $\mat{B}^{(v)}_w$.
Then keeping the notation of \eqref{qabcd} we have $q^{(u,v)}_{a,b}=q^{(u,v)}_{a,b,c,d}$ with $c=a+1$ and $d=b+q^w$,
so condition \eqref{condabcd} is satisfied and $q^{(u,v)}_{a,b}\in I_2^{(u,v)}$.

Now we show that these $q^{(u,v)}_{a,b}$ generate $I_2^{(u,v)}$. For this it suffices to show that all the
$q^{(u,v)}_{a,b,c,d}$ (for $a,b,c,d$ satisfying \eqref{condabcd}) are linear combinations of the $q^{(u,v)}_{a,b}$.
After a possible change of sign we can assume $c>a$. Then we proceed by induction on the difference $c-a$.
First, in the case $c=a+1$, we necessarily have $d=b+q^w$, and $q^{(u,v)}_{a,b,c,d}=q^{(u,v)}_{a,b}$.
Now for general $c>a$ we write
\beq
q^{(u,v)}_{a,b,c,d}=q^{(u,v)}_{a,b+q^w,c-1,d}+q^{(u,v)}_{c-1,b}
\eeq
and apply the induction hypothesis to $q^{(u,v)}_{a,b+q^w,c-1,d}$.

Last, linear independence of the $q^{(u,v)}_{a,b}$ follows from the theory of $1$-genericity, but it can also
be proved directly as follows.
Consider a linear combination $\sum_{a,b}\lambda_{a,b}q^{(u,v)}_{a,b}$, and choose a minimal $a$ such that $\lambda_{a,b}\neq0$.
Then $q^{(u,v)}_{a,b}$ contains the monomial $X^{(u)}_aX^{(v)}_{b+q^w}$, but this monomial does not appear
in the other $q^{(u,v)}_{a',b'}$ with $\lambda_{a',b'}\neq0$ (indeed, either $a'=a$ but then $b'\neq b$, or $a'\geq a+1$ by minimality).
Thus the linear combination is nonzero.

The proof of the case $u=v$ is entirely similar, details are left to the reader.
\end{proof}

Now we translate these results in terms of the $2\times 2$ minors of the conjugate matrices $\mat{\Phi}^{(u)}$.
For $1\leq i_1<i_2\leq f$ we denote by $q^{(u)}_{i_1,i_2}$ the minor defined by the $i_1$-th and $i_2$-th columns
of $\mat{\Phi}^{(u)}=\left(\begin{array}{c|c|c|c}\mat{B}^{(u+e)}_0 & \mat{B}^{(u+e-1)}_1 & \cdots & \mat{B}^{(u)}_e\end{array}\right)$.
Observe that if $i_1$ falls in the block $\mat{B}^{(u+e-j_1)}_{j_1}$ of $\mat{\Phi}^{(u)}$,
and $i_2$ in the block $\mat{B}^{(u+e-j_2)}_{j_2}$, then
\beq
q^{(u)}_{i_1,i_2}\in I_2^{(u+e-j_1,u+e-j_2)}.
\eeq
It follows:
\begin{proposition}
For a given $u$, the minors $q^{(u)}_{i_1,i_2}$ with first column in $\mat{B}^{(u+e)}_0$
(i.e. with $j_1=0$, or equivalently $i_1\leq t-1$) are linearly independent.
Actually, they form a basis of the ``primitive'' subspace
\beq
I_2(\mat{\Phi}^{(u)})_{\operatorname{prim}}=I_2^{(u+e,u+e)}\oplus I_2^{(u+e,u+e-1)}\oplus\cdots\oplus I_2^{(u+e,u)}\subset I_2(\mat{\Phi}^{(u)}),
\eeq
of dimension $\dim I_2(\mat{\Phi}^{(u)})_{\operatorname{prim}}=\binom{f}{2}-\binom{f-(t-1)}{2}$.
\end{proposition}
\begin{proof}
Direct consequence of Lemmata~\ref{I2uvindep} and~\ref{baseI2uv}.
For the dimension, we count $\binom{f}{2}$ pairs $i_1,i_2$ with $1\leq i_1<i_2\leq f$,
and we exclude the $\binom{f-(t-1)}{2}$ of them with $i_1>t-1$.
\end{proof}
\begin{proposition}
With notation as above, the minors $q^{(u)}_{i_1,i_2}$ with first column not in $\mat{B}^{(u+e)}_0$
(i.e. with $j_1>0$, or equivalently, $i_1>t-1$) belong to
the intersection $I_2(\mat{\Phi}^{(u)})\cap I_2(\mat{\Phi}^{(u-j_1)})_{\operatorname{prim}}$.
\end{proposition}
\begin{proof}
Actually, $q^{(u)}_{i_1,i_2}\in I_2(\mat{\Phi}^{(u)})\cap I_2^{(u+e-j_1,u+e-j_2)}$.
\end{proof}
We can now answer Question (Q1) for $r=2$:
\begin{corollary}\label{corV2}
If $e<m/2$, the sum $V_2=I_2(\mat{\Phi}^{(0)})+\cdots+I_2(\mat{\Phi}^{(m-1)})$ can be written as a direct sum
\beq
V_2=I_2(\mat{\Phi}^{(0)})_{\operatorname{prim}}\oplus\cdots\oplus I_2(\mat{\Phi}^{(m-1)})_{\operatorname{prim}}
\eeq
so
\beq\label{dimV2}
\begin{split}
\beta_{1,2}(\code)\geq\dim V_2&=m\left(\binom{f}{2}-\binom{f-(t-1)}{2}\right)\\
&=\frac{m(t-1)}{2}\left((2e+1)t-2\frac{q^{e+1}-1}{q-1}\right).
\end{split}
\eeq
A basis of $V_2$ is given by the quadratic forms $q^{(u)}_{i_1,i_2}$ with $u,i_1,i_2$ ranging among those with $0\leq u\leq m-1$, $1\leq i_1<i_2\leq f$, and $i_1\leq t-1$.
\end{corollary}
\begin{proof}
Follows by the two Propositions above, and Lemma~\ref{I2uvindep} again.
The condition $e<m/2$ is there to ensure that no $I_2^{(u,v)}$ appears twice when we expand the sum.
\end{proof}
The lower bound \eqref{dimV2} was first observed without proof as~\cite[\emph{Exp.~fact~1}]{FGOPT13},
and was eventually proved as~\cite[Th.~20]{MorTil23}.
Actually, \cite{FGOPT13} makes computations with quadratic forms very similar to ours, but what lacks there
is a proof of linear independence (or said otherwise, in order to prove linear independence, the authors have to
assume that \eqref{dimV2} holds and is an equality).
On the other hand, the proof in~\cite{MorTil23} uses a different approach: the authors do not work with the space
of quadratic relations (the kernel of the degree~$2$ evaluation map), but with the square code
(the image of this evaluation map).
So what we just did is provide an alternative proof for~\cite[Th.~20]{MorTil23}, while reverting to the original approach
of~\cite{FGOPT13}.

As for Question (Q2), it is experimentally observed in \cite{FGOPT13} that \eqref{dimV2} is an equality
with overwhelming probability, so for $r=2$ the answer to (Q2) is positive in general.

Now we turn to the case $r\geq3$, where we can only propose conjectures.

Proposition~\ref{explicitEN} gives a basis $s_{r,j;i_1,\dots,i_r}(\mat{\Phi}^{(u)})$ for each syzygy space $M_{r-1,r}(\mat{\Phi}^{(u)})$,
where the indices $i_1,\dots,i_r$ correspond to columns of $\mat{\Phi}^{(u)}$.
A natural generalization of the answer to (Q1) would be as follows:
\begin{itemize}
\item For a given $u$, the $s_{r,j;i_1,\dots,i_r}(\mat{\Phi}^{(u)})$ with column $i_1$ in $\mat{B}^{(e+u)}_0$
are linearly independent, hence form a basis of a subspace $M_{r-1,r}(\mat{\Phi}^{(u)})_{\operatorname{prim}}\subset M_{r-1,r}(\mat{\Phi}^{(u)})$, of dimension $\binom{f}{r}-\binom{f-(t-1)}{r}$.
\item On the other hand, those with $i_1$ not in $\mat{B}^{(e+u)}_0$ belong to $M_{r-1,r}(\mat{\Phi}^{(u)})\cap\sum_{u'\neq u} M_{r-1,r}(\mat{\Phi}^{(u')})_{\operatorname{prim}}$.
\item If $e<m/2$, the sum $V_r=M_{r-1,r}(\mat{\Phi}^{(0)})+\cdots+M_{r-1,r}(\mat{\Phi}^{(m-1)})$ can
be written as the direct sum of the $M_{r-1,r}(\mat{\Phi}^{(u)})_{\operatorname{prim}}$.
Hence
\beq\label{improvedlowerboundbetaalt}
\beta_{r-1,r}(\code)\;\geq\;\dim(V_r)=m(r-1)\left(\binom{f}{r}-\binom{f-(t-1)}{r}\right)
\eeq
and in particular for $r>f-(t-1)$:
\beq\label{improvedlowerboundbetaalthigh}
\beta_{r-1,r}(\code)\;\geq\;\dim(V_r)=m(r-1)\binom{f}{r}.
\eeq
\end{itemize}
Although most of the arguments used to prove the case $r=2$ seem to generalize to $r\geq3$,
the linear algebra manipulations become more and more intricate, and the author was unfortunately
not able to derive a satisfactory proof.

On the other hand, we will see examples where \eqref{improvedlowerboundbetaalthigh}
not only holds but is an equality, which suggests that the
answer to (Q2) can be positive in some cases.
However the following example shows that for $r=3$ the answer to (Q2)
is negative in general, and the lower bound \eqref{improvedlowerboundbetaalt} still isn't tight:
\begin{example}
With $\code$ as above, $I_2(\code)$ contains the four quadratic forms
\beq X_{b+1}^{(u)}X_{c+1}^{(u+v)}-X_{b+q^v+1}^{(u)}X_{c}^{(u+v)} \eeq
\beq X_{b}^{(u)}X_{c+1}^{(u+v)}-X_{b+q^v}^{(u)}X_{c}^{(u+v)} \eeq
\beq X_{a}X_{b+q^v+1}^{(u)}-X_{a+q^u}X_{b+q^v}^{(u)} \eeq
\beq X_{a}X_{b+1}^{(u)}-X_{a+q^u}X_{b}^{(u)} \eeq
which admit the syzygy
\begin{equation}
\begin{split}
X_{a}(&X_{b+1}^{(u)}X_{c+1}^{(u+v)}-X_{b+q^v+1}^{(u)}X_{c}^{(u+v)})-X_{a+q^u}(X_{b}^{(u)}X_{c+1}^{(u+v)}-X_{b+q^v}^{(u)}X_{c}^{(u+v)})+\\
&+X_{c}^{(u+v)}(X_{a}X_{b+q^v+1}^{(u)}-X_{a+q^u}X_{b+q^v}^{(u)})-X_{c+1}^{(u+v)}(X_{a}X_{b+1}^{(u)}-X_{a+q^u}X_{b}^{(u)})=0.
\end{split}
\end{equation}
However this syzygy defines an element of $M_{2,3}(\code)$ that does not belong to $V_3$ in general.
\end{example}

For general $s$, the author only has guesses motivated by experimental evidence.
More precisely, in many examples, one finds that the lower bound \eqref{lowerboundrmaxalt} is an equality, i.e. 
\beq\label{tightrmaxalt}
\rmax(\code_s)=f-s
\eeq
and then also that the natural generalization of \eqref{improvedlowerboundbetaalthigh} holds and is an equality, i.e.
\beq\label{betaeqVr}
\beta_{r-1,r}(\code_s)=m(r-1)\binom{f-s}{r}
\eeq
for $r$ close enough to $f-s$.

Figures~\ref{Alt2105} and~\ref{Alt356} illustrate these facts.
Boldface values match~\eqref{betaeqVr} (and \eqref{tightrmaxalt}).

\begin{figure}[!h]
\vspace{-\baselineskip}
\begin{minipage}[b]{0.49\textwidth}
\begin{equation*}
\arraycolsep=2pt
\begin{array}{c|ccccccc}
s & \beta_{1,2} & \beta_{2,3} & \beta_{3,4} & \beta_{4,5} & \beta_{5,6} & \beta_{6,7} & \beta_{7,8}\\
\hline
0 & 251 & 1400 & 3230 & 2480 & \mathbf{1400} & \mathbf{480} & \mathbf{70}\\
1 & 202 & 880 & 1170 & \mathbf{840} & \mathbf{350} & \mathbf{60} & -\\
2 & 154 & 440 & \mathbf{450} & \mathbf{240} & \mathbf{50} & - & -\\
3 & 107 & \mathbf{200} & \mathbf{150} & \mathbf{40} & - & - & -\\
4 & 66 & \mathbf{80} & \mathbf{30} & - & - & - & -\\
5 & 31 & \mathbf{20} & - & - & - & - & -\\
6 & \mathbf{10} & - & - & - & - & - & -\\
7 & - & - & - & - & - & - & -
\end{array}
\end{equation*}
\vspace{-1.8\baselineskip}\caption{$s$-shortened $\Alt^\perp_{2,10,5}\;$ ($e\!=\!2,r^*\!=\!8$)}\label{Alt2105}
\end{minipage}
$\quad$
\begin{minipage}[b]{0.49\textwidth}
\begin{equation*}
\arraycolsep=2pt
\begin{array}{c|ccccccc}
s & \beta_{1,2} & \beta_{2,3} & \beta_{3,4} & \beta_{4,5} & \beta_{5,6} & \beta_{6,7} & \beta_{7,8}\\
\hline
0 & 222 & 1943 & 1725 & \mathbf{1120} & \mathbf{700} & \mathbf{240} & \mathbf{35}\\
1 & 193 & 1344 & \mathbf{525} & \mathbf{420} & \mathbf{175} & \mathbf{30} & -\\
2 & 165 & 801 & \mathbf{225} & \mathbf{120} & \mathbf{25} & - & -\\
3 & 138 & 312 & \mathbf{75} & \mathbf{20} & - & - & -\\
4 & 112 & \mathbf{40} & \mathbf{15} & - & - & - & -\\
5 & 87 & \mathbf{10} & - & - & - & - & -\\
6 & 63 & - & - & - & - & - & -\\
7 & 40 & - & - & - & - & - & -\\
\end{array}
\end{equation*}
\vspace{-1.8\baselineskip}\caption{$s$-shortened $\Alt^\perp_{3,5,6}\;$ ($e\!=\!1,r^*\!=\!8$)}\label{Alt356}
\end{minipage}
\end{figure}

This strongly suggests that the minimal resolution of $\code$
(resp. its shortened subcodes $\code_s$) contains $m$ conjugate Eagon-Northcott subcomplexes of length $f$ (resp. $f-s$),
and that in high enough degree this minimal resolution actually coincides with the direct sum of these conjugate subcomplexes.

\subsection*{Improved lower bounds in the Goppa case}

We just saw that the bound \eqref{lowerboundrmaxalt} on $\rmax$ is tight in general for alternant codes.
However one can improve it in the Goppa case, still using the same techniques. Let us focus on $q=2$ for simplicity.
Let $\phi:a\mapsto a^2$ be the Frobenius map, acting on any $\F_2$-algebra.
For any polynomial $g\in\F_{2^m}[X]$, set $L_g=g(X)^{-1}\F_{2^m}[X]_{<\deg(g)}\subset\F_{2^m}(X)$.
\begin{lemma}
Let $g(X)\in\F_{2^m}[X]$ be squarefree (i.e. separable). Then
\beq
L_g\,+\,\phi(L_g)\;=\;L_{g^2},
\eeq
the sum being direct.
\end{lemma}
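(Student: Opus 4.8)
The statement has two parts: the inclusion of subspaces is actually the equality $L_g + \phi(L_g) = L_{g^2}$, and the sum is direct, i.e. $L_g \cap \phi(L_g) = 0$. I would handle both by translating the condition ``lies in $L_g$'' into a divisibility/partial-fraction statement, using squarefreeness of $g$ to control poles. Concretely, write $g = \prod_i p_i$ with distinct irreducible factors $p_i$; then $g^2 = \prod_i p_i^2$, and an element of $\F_{2^m}(X)$ lies in $L_{g^2}$ iff it is of the form $h/g^2$ with $\deg h < 2\deg g$, equivalently iff its only poles are at the roots of $g$, each of order $\leq 2$, and it vanishes at infinity. The key observation is that $\phi$ squares denominators: if $a(X)/g(X) \in L_g$ then $\phi(a(X)/g(X)) = a(X)^2/g(X)^2$ (using that Frobenius is a ring homomorphism in characteristic $2$, so it acts coefficientwise on polynomials, giving $\phi(a) = a(X)^2$ only after identifying — more carefully, $\phi$ applied to the field element $a/g$ gives $(a/g)^2 = a^2/g^2$). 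So $\phi(L_g)$ consists of fractions with denominator exactly $g^2$ whose numerators are \emph{squares} of polynomials of degree $< \deg g$.

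\textbf{Directness first.} For $L_g \cap \phi(L_g) = 0$: an element of the intersection is both of the form $b/g$ with $\deg b < \deg g$ and of the form $a^2/g^2$ with $\deg a < \deg g$. Equating, $b g = a^2$. Since $g$ is squarefree, $g \mid a^2$ forces $g \mid a$ (each irreducible factor $p_i$ of $g$ divides $a^2$, hence divides $a$); but $\deg a < \deg g$ forces $a = 0$, hence $b = 0$. This gives directness.

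\textbf{The sum equals $L_{g^2}$.} The inclusion $L_g + \phi(L_g) \subset L_{g^2}$ is immediate: $b/g = bg/g^2$ with $\deg(bg) < 2\deg g$, and $a^2/g^2$ with $\deg(a^2) < 2\deg g$, so both summands lie in $L_{g^2}$. For the reverse inclusion I would count dimensions over $\F_{2^m}$: $\dim L_g = \deg g$, $\dim \phi(L_g) = \deg g$ (since $\phi$ is injective, being Frobenius on a field), so by directness $\dim(L_g + \phi(L_g)) = 2\deg g = \dim L_{g^2}$, forcing equality. This is the cleanest route and sidesteps an explicit decomposition; alternatively one can argue directly that given $h/g^2 \in L_{g^2}$, squarefreeness lets one use a partial-fraction / Hensel-type argument to write $h \equiv (\text{square}) \pmod{g^2}$ up to a multiple of $g$, but the dimension count makes this unnecessary.

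\textbf{Main obstacle.} The only subtle point is making sure $\dim_{\F_{2^m}} \phi(L_g) = \deg g$ rather than something smaller: $\phi$ is \emph{not} $\F_{2^m}$-linear (it is $\F_{2^m}$-semilinear, twisting scalars by Frobenius), so ``$\phi$ injective $\Rightarrow$ preserves dimension'' needs the remark that $\phi$ is additive and injective and that an additive injection between $\F_{2^m}$-vector spaces which is semilinear over an automorphism still preserves $\F_{2^m}$-dimension (the image is an $\F_{2^m}$-subspace because $\phi$ is surjective on scalars). Once that is in place, everything else is the routine squarefreeness-forces-$g\mid a$ argument, which I would state once and reuse.
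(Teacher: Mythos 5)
Your proof is correct and follows essentially the same route as the paper's: prove $L_g \cap \phi(L_g) = 0$ via the ``$g$ squarefree and $g \mid B^2$ forces $g \mid B$, then degree forces $B=0$'' argument, and conclude by comparing dimensions ($t + t = 2t = \dim L_{g^2}$). Your extra remark that $\phi$ is only $\F_{2^m}$-semilinear, so one must check that $\phi(L_g)$ is still a $t$-dimensional $\F_{2^m}$-subspace, is a point the paper takes for granted; it is routine but worth noting.
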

\begin{proof}
Set $t=\deg(g)$. Then $L_g$ and $\phi(L_g)$ both are $t$-dimensional subspaces of the $2t$-dimensional space $L_{g^2}$.
To conclude we only have to prove $L_g\cap\phi(L_g)=0$.
However we have $F(X)\in L_g\cap\phi(L_g)$ if and only if $F(X)=\frac{A(X)}{g(X)}=\frac{B(X)^2}{g(X)^2}$
for some $A,B$ of degree $<t$. But then this implies $g(X)|B(X)^2$ with $g$ squarefree of degree $t$,
which is impossible unless $B=0$.
\end{proof}
From this we readily deduce:
\begin{lemma}
Let $\mot{x}\in(\F_{2^m})^n$ be a support, and $g(X)\in\F_{2^m}[X]$ squarefree of degree $t\leq n/2$,
not vanishing on any entry of $\mot{x}$. Then
\beq
\GRS_t(\mot{x},g(\mot{x})^{-1})+\GRS_t(\mot{x}^2,g(\mot{x})^{-2})=\GRS_{2t}(\mot{x},g(\mot{x})^{-2}),
\eeq
the sum being direct.
\end{lemma}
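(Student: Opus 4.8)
The plan is to transport the decomposition $L_{g^2}=L_g\oplus\phi(L_g)$ from the previous lemma through evaluation at $\mot{x}$. Since $g$, hence $g^2$, does not vanish on any entry of $\mot{x}$, we get a well-defined $\F_{2^m}$-linear map
\beq
\ev_{\mot{x}}:L_{g^2}\longto(\F_{2^m})^n,\qquad F\longmapsto(F(x_1),\dots,F(x_n)),
\eeq
and the first thing I would check is that it is injective: an element of its kernel has the form $A(X)/g(X)^2$ with $\deg A<2t$ vanishing at the $n$ pairwise distinct entries of $\mot{x}$ (a support vector has distinct entries), and since $2t\leq n$ this forces $A=0$.

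Next I would compute the images of the three subspaces in play. Straight from the definitions,
\beq
\ev_{\mot{x}}(L_{g^2})=g(\mot{x})^{-2}\{h(\mot{x}):\deg h<2t\}=\GRS_{2t}(\mot{x},g(\mot{x})^{-2}),
\eeq
and similarly $\ev_{\mot{x}}(L_g)=\GRS_t(\mot{x},g(\mot{x})^{-1})$. For the Frobenius twist, I would use $\ev_{\mot{x}}(F^2)=\ev_{\mot{x}}(F)^2$, so that $\ev_{\mot{x}}(\phi(L_g))=\phi(\GRS_t(\mot{x},g(\mot{x})^{-1}))$, together with the elementary fact that componentwise Frobenius sends $\GRS_t(\mot{x},g(\mot{x})^{-1})$ to $\GRS_t(\mot{x}^2,g(\mot{x})^{-2})$ --- indeed, for $f$ of degree $<t$ one has $(g(\mot{x})^{-1}f(\mot{x}))^2=g(\mot{x})^{-2}\,\phi(f)(\mot{x}^2)$ with $\phi(f)$ again of degree $<t$, and $\phi$ permutes $\F_{2^m}[X]_{<t}$; note $\mot{x}^2$ is again a valid support since $a\mapsto a^2$ is injective on $\F_{2^m}$.

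Finally I would apply the injective map $\ev_{\mot{x}}$ to $L_{g^2}=L_g\oplus\phi(L_g)$, which immediately gives $\GRS_{2t}(\mot{x},g(\mot{x})^{-2})=\GRS_t(\mot{x},g(\mot{x})^{-1})+\GRS_t(\mot{x}^2,g(\mot{x})^{-2})$, the sum being direct because an injective linear map carries a direct sum to the direct sum of the images. The one step worth flagging is the Frobenius bookkeeping: squaring a codeword componentwise is \emph{not} the substitution $X\mapsto X^2$ in the underlying polynomial but also twists the coefficients, which is exactly why the second summand is $\GRS_t(\mot{x}^2,g(\mot{x})^{-2})$ and not something phrased with an untwisted $h(\mot{x}^2)$; everything else is a routine transport through a linear isomorphism onto its image.
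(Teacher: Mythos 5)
Your proof is correct and is exactly what the paper's ``from this we readily deduce'' has in mind: push the function-field decomposition $L_{g^2}=L_g\oplus\phi(L_g)$ through the evaluation map $\ev_{\mot{x}}$, which is injective since $2t\leq n$ and the entries of $\mot{x}$ are distinct. The Frobenius bookkeeping (componentwise squaring corresponds to the semilinear map $f\mapsto\phi(f)$ together with the substitution $X\mapsto X^2$, giving the support twist $\mot{x}\mapsto\mot{x}^2$) is the only point that needs spelling out, and you handle it correctly.
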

\begin{proposition}\label{decompose_Goppa_binaire}
Let $\code=\Gop(\mot{x},g)^\perp\in\Gop^{\operatorname{sqfr},\perp}_{2,m,n,t}$ be proper, with $g$ squarefree.
Set $\mot{y}=g(\mot{x})^{-1}$.
Then
\beq
\code_{\F_{q^m}}=\bigoplus_{i=0}^{m/2-1}\GRS_{2t}(\mot{x}^{4^i},\mot{y}^{4^i}),\quad\text{or}
\eeq
\beq
\code_{\F_{q^m}}=\left(\bigoplus_{i=0}^{(m-1)/2-1}\GRS_{2t}(\mot{x}^{4^i},\mot{y}^{4^i})\right)\oplus\GRS_{t}(\mot{x}^{2^{m-1}},\mot{y}^{2^{m-1}})
\eeq
depending on whether $m$ is even or odd.
\end{proposition}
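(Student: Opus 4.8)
The plan is to read off the decomposition from the dual alternant decomposition \eqref{decompose_alternant} by pairing consecutive summands and collapsing each pair with the preceding lemma. Since $\Gop(\mot{x},g)=\Alt_t(\mot{x},g(\mot{x})^{-1})$ with $t=\deg(g)$, the code $\code=\Gop(\mot{x},g)^\perp$ is the dual alternant code $\Alt_t(\mot{x},\mot{y})^\perp$ of support $\mot{x}$ and multiplier $\mot{y}=g(\mot{x})^{-1}$, and it is proper by hypothesis. So \eqref{decompose_alternant} reads
\beq
\code_{\F_{2^m}}=\bigoplus_{u=0}^{m-1}\GRS_t(\mot{x}^{2^u},\mot{y}^{2^u}),
\eeq
a direct sum, and everything now reduces to regrouping these $m$ summands.

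I would group them into the $\lfloor m/2\rfloor$ consecutive pairs indexed by $0\leq i\leq\lfloor m/2\rfloor-1$, the $i$-th pair being the summands $u=2i$ and $u=2i+1$, plus the leftover summand $u=m-1$ when $m$ is odd. For the $i$-th pair, put $\mot{x}_i=\mot{x}^{4^i}$ and let $h_i(X)\in\F_{2^m}[X]$ be the polynomial obtained from $g$ by raising each coefficient to the power $4^i$. As $a\mapsto a^{4^i}$ is a field automorphism of $\F_{2^m}$, the polynomial $h_i$ is again squarefree of degree $t$, and raising $g(\mot{x})^{-1}=\mot{y}$ to the $4^i$-th power gives $h_i(\mot{x}_i)=\mot{y}^{-4^i}$, a vector with all entries nonzero; in particular $\mot{y}^{4^i}=h_i(\mot{x}_i)^{-1}$. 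The two summands of the $i$-th pair are therefore $\GRS_t(\mot{x}_i,h_i(\mot{x}_i)^{-1})$ and $\GRS_t(\mot{x}_i^{\,2},h_i(\mot{x}_i)^{-2})$, so the preceding lemma --- applied with support $\mot{x}_i$ and Goppa polynomial $h_i$, which is legitimate since properness gives $mt\leq n$ and hence $t\leq n/2$ when $m\geq2$ (the case $m=1$ being immediate) --- shows that their direct sum equals $\GRS_{2t}(\mot{x}_i,h_i(\mot{x}_i)^{-2})=\GRS_{2t}(\mot{x}^{4^i},(\mot{y}^{4^i})^2)$. I expect this pairing to be the only real obstacle: the point is to recognize each consecutive pair of summands of \eqref{decompose_alternant} as precisely an instance of the left-hand side of the preceding lemma, which is exactly where the squarefreeness of $g$ is used (for non-squarefree $g$ the pairwise sum is not a genuine $\GRS_{2t}$); computing the resulting multiplier is then routine.

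When $m$ is odd, the leftover summand $\GRS_t(\mot{x}^{2^{m-1}},\mot{y}^{2^{m-1}})$ is kept untouched and supplies the last term of the odd formula. Since any regrouping of the summands of a direct sum is again a direct sum, reassembling the $\lfloor m/2\rfloor$ collapsed pairs --- together with the leftover summand if $m$ is odd --- recovers $\code_{\F_{2^m}}$ as the asserted direct sum, which finishes the argument. (One small point to record: the multiplier of the $i$-th block produced by this computation is $(\mot{y}^{4^i})^2=\mot{y}^{2\cdot4^i}$, so in the displayed formula the $i$-th block should read $\GRS_{2t}(\mot{x}^{4^i},\mot{y}^{2\cdot4^i})$.)
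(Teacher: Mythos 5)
Your proof is correct and is exactly the argument the paper intends: the paper states the proposition without a written proof, presenting it as an immediate consequence of the preceding lemma and of the alternant decomposition \eqref{decompose_alternant}, which is precisely what you carried out by pairing consecutive conjugate $\GRS_t$ summands and collapsing each pair via the lemma applied to the conjugate Goppa polynomial $h_i$ (whose squarefreeness, non\-vanishing on the conjugate support, and degree bound $t\leq n/2$ you correctly checked). Your closing observation is also right: since the lemma yields $\GRS_t(\mot{x},\mot{y})+\GRS_t(\mot{x}^2,\mot{y}^2)=\GRS_{2t}(\mot{x},\mot{y}^2)$, applying it at conjugate level $i$ produces the block $\GRS_{2t}(\mot{x}^{4^i},\mot{y}^{2\cdot 4^i})$, so the multiplier $\mot{y}^{4^i}$ in the displayed formula is a typo --- harmless for the sequel, as Corollary~\ref{corlowerboundGop} only uses the block sizes $2t$ and $t$, not the exact multipliers.
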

\begin{corollary}\label{corlowerboundGop}
Let $\code=\Gop(\mot{x},g)^\perp\in\Gop^{\operatorname{sqfr},\perp}_{2,m,n,t}$ be proper, with $g$ squarefree.
Set $\widehat{e}=\lfloor\log_4(2t-1)\rfloor$
and $\widehat{f}=(2\widehat{e}+2)t-\frac{4^{\widehat{e}+1}-1}{3}$.
For any $s\geq 0$, let $\code_s$ be a $s$-shortened subcode of $\code$.
Then for all $r\geq 2$ we have $\beta_{r-1,r}(\code_s)\geq(r-1)\binom{\widehat{f}-s}{r}$,
hence
\beq\label{lowerboundrmaxGop}
\rmax(\code_s)\geq \widehat{f}-s.
\eeq
\end{corollary}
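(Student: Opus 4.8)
The plan is to transcribe the proof of Theorem~\ref{thlowerboundalt}, using Proposition~\ref{decompose_Goppa_binaire} in place of the alternant decomposition \eqref{decompose_alternant} and replacing $(q,t)$ by $(4,2t)$ throughout. First, by Lemma~\ref{extscal} we may extend scalars to $\F_{2^m}$ and work with the $s$-shortening $(\code_s)_{\F_{2^m}}$ of $\code_{\F_{2^m}}$. By Proposition~\ref{decompose_Goppa_binaire}, in both parity cases $\code_{\F_{2^m}}$ contains, as direct summands, the codes $\GRS_{2t}(\mot{x}^{4^i},\mot{y}^{4^i})$ for $0\le i\le\widehat{e}$ --- we assume, as one implicitly does in Theorem~\ref{thlowerboundalt}, that enough of them are present, i.e.\ that $\widehat{e}+1$ does not exceed the number of $\GRS_{2t}$-blocks in the decomposition; crucially, the leftover $\GRS_t$-block occurring when $m$ is odd plays no role. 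Introduce variables $\widehat{X}^{(i)}_a$ for $0\le a\le 2t-1$ and $i\ge0$, with $\widehat{X}^{(i)}_a$ evaluating as $(\mot{y}\mot{x}^a)^{4^i}$, and for an expression $\mat{M}$ in these variables write $\mat{M}^{(v)}$ for the same expression with each $\widehat{X}^{(i)}_a$ replaced by $\widehat{X}^{(i+v)}_a$; this is the fourth-power Frobenius twist.

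Next, for $0\le u\le\widehat{e}$ define the $2\times(2t-4^u)$ block
\beq
\widehat{\mat{B}}_u=\left(\begin{array}{cccc}
\widehat{X}^{(0)}_0 & \widehat{X}^{(0)}_1 & \dots & \widehat{X}^{(0)}_{2t-1-4^u}\\
\widehat{X}^{(0)}_{4^u} & \widehat{X}^{(0)}_{4^u+1} & \dots & \widehat{X}^{(0)}_{2t-1}
\end{array}\right)
\eeq
and the $2\times\widehat{f}$ matrix
\beq
\widehat{\mat{\Phi}}=\left(\begin{array}{c|c|c|c}
\widehat{\mat{B}}^{(\widehat{e})}_0 & \widehat{\mat{B}}^{(\widehat{e}-1)}_1 & \cdots & \widehat{\mat{B}}^{(0)}_{\widehat{e}}
\end{array}\right),
\eeq
so that $\widehat{f}=(\widehat{e}+1)(2t)-\frac{4^{\widehat{e}+1}-1}{4-1}$, exactly the quantity in the statement. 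The two auxiliary lemmas then go through verbatim. The $2\times 2$ minor of $\widehat{\mat{\Phi}}$ on a column of $\widehat{\mat{B}}^{(\widehat{e}-u)}_u$ and a column of $\widehat{\mat{B}}^{(\widehat{e}-v)}_v$ evaluates, via the summands $\GRS_{2t}(\mot{x}^{4^{\widehat{e}-u}},\mot{y}^{4^{\widehat{e}-u}})$ and $\GRS_{2t}(\mot{x}^{4^{\widehat{e}-v}},\mot{y}^{4^{\widehat{e}-v}})$, to $(\mot{y}\mot{x}^{a})^{4^{\widehat{e}-u}}(\mot{y}\mot{x}^{4^v+b})^{4^{\widehat{e}-v}}-(\mot{y}\mot{x}^{4^u+a})^{4^{\widehat{e}-u}}(\mot{y}\mot{x}^{b})^{4^{\widehat{e}-v}}=\mot{0}$, since both terms have $\mot{x}$-exponent $4^{\widehat{e}}+a\cdot4^{\widehat{e}-u}+b\cdot4^{\widehat{e}-v}$ and the same $\mot{y}$-exponent; hence the $2\times2$ minors of $\widehat{\mat{\Phi}}$ lie in $I_2(\code_{\F_{2^m}})$ (the analog of Lemma~\ref{minPhiI2C}). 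And the rightmost-nonzero-entry argument of Lemma~\ref{1gen} applies unchanged to show $\widehat{\mat{\Phi}}$ is $1$-generic.

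We then finish exactly as in Theorem~\ref{thlowerboundalt}. Proposition~\ref{ENrac}, applied $s$ times with $g=d=2$, shows that $I_2((\code_s)_{\F_{2^m}})$ contains the $2\times 2$ minors of a $1$-generic $2\times f_s$ matrix with $f_s\ge\widehat{f}-s$; by \cite[\S6B]{Eisenbud05} those minors are linearly independent and the Eagon--Northcott complex \eqref{EN} is an exact minimal resolution whose only nonzero Betti numbers are $\beta_{0,0}=1$ and $\beta_{r-1,r}=(r-1)\binom{f_s}{r}$ for $2\le r\le f_s$; Corollary~\ref{sous-resolutions} (with $l=1$) then gives $\beta_{r-1,r}(\code_s)\ge(r-1)\binom{f_s}{r}\ge(r-1)\binom{\widehat{f}-s}{r}$, and \eqref{lowerboundrmaxGop} is the case $r=\widehat{f}-s$. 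The only point that goes beyond copying the alternant proof is the bookkeeping of Frobenius twists in the preceding paragraph --- checking that Proposition~\ref{decompose_Goppa_binaire} really supplies $\GRS_{2t}$-summands at $\widehat{e}+1$ consecutive fourth-power twists, so that the minor computation and the $1$-genericity argument can be run, and that the odd-$m$ leftover summand is irrelevant --- but this is immediate once the decomposition is written out; everything else is a line-by-line translation with $(q,t)$ replaced by $(4,2t)$.
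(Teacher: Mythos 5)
Your proof is correct and follows exactly the route the paper indicates, which is simply ``Same as Theorem~\ref{thlowerboundalt}, with $\mat{\Phi}$ adapted to fit Proposition~\ref{decompose_Goppa_binaire}''; you have faithfully filled in the translation $(q,t)\mapsto(4,2t)$, rechecked the minor computation and $1$-genericity for $\widehat{\mat{\Phi}}$, and correctly observed that the odd-$m$ leftover $\GRS_t$ summand is irrelevant. Your flagged implicit assumption that $\widehat{e}+1\leq\lfloor m/2\rfloor$ is indeed always satisfied for proper codes (since $t\leq 2^m/m$ forces $2t-1<4^{\lfloor m/2\rfloor}$), just as the analogous $e+1\leq m$ holds in Theorem~\ref{thlowerboundalt}.
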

\begin{proof}
Same as Theorem~\ref{thlowerboundalt}, with $\mat{\Phi}$ adapted to fit Proposition~\ref{decompose_Goppa_binaire}.
\end{proof}

\begin{remark}\label{improvedcorlowerboundGop}
It turns out that this last lower bound can be improved still further, replacing the value of $\widehat{f}$ with
\beq
r^*=(a+2)t-2^a
\eeq
where $a=\lceil\log_2(t)\rceil$.
The proof, which will be found in the forthcoming paper~\cite{stability}, uses new ingredients,
the main of which being a reinterpretation of the number of columns of a convenient matrix $\mat{\Phi}$
as $r^*=\dim(\code_{\F_{2^m}}\cap\mot{x}\code_{\F_{2^m}})$.

(A similar result also holds in the nonbinary case, but then we have to set $r^*=(a+1)t-q^a$
where $a=\lceil\log_q(t/(q-1))\rceil$. Observe that the binary case does specialize from
the general case with $q=2$, as the $a+1$ becomes a $a+2$.)
\end{remark}

This new bound now appears to be tight in most examples, i.e.
\beq\label{tightrmaxGop}
\rmax(\code_s)=r^*-s
\eeq
with $r^*$ given by Remark~\ref{improvedcorlowerboundGop}.
Moreover, the analogue of \eqref{betaeqVr} then also appears to hold experimentally, i.e.
\beq\label{betaGop}
\beta_{r-1,r}(\code_s)=m(r-1)\binom{r^*-s}{r}
\eeq
for $r$ close enough to $r^*-s$.

Figures~\ref{Gop285}-\ref{Gop21010} illustrate these facts.
Boldface values match~\eqref{betaGop} (and~\eqref{tightrmaxGop}),
leading to the same presumption that the one following Figures~\ref{Alt2105} and~\ref{Alt356}.

\begin{figure}[!h]
\begin{equation*}
\arraycolsep=2pt
\begin{array}{c|ccccccccc}
s & \beta_{1,2} & \beta_{2,3} & \beta_{3,4} & \beta_{4,5} & \beta_{5,6} & \beta_{6,7} & \beta_{7,8} & \beta_{8,9}\\
\hline
\vdots & \vdots & \vdots & \vdots & \vdots & \vdots & \vdots & \vdots & \vdots\\
8 & 280 & 3224 & 7464 & 4272 & \mathbf{3360} & \mathbf{1728} & \mathbf{504} & \mathbf{64}\\
9 & 249 & 2510 & 1800 & \mathbf{1792} & \mathbf{1120} & \mathbf{384} & \mathbf{56} & -\\
10 & 219 & 1856 & \mathbf{840} & \mathbf{672} & \mathbf{280} & \mathbf{48} & - & -\\
11 & 190 & 1260 & \mathbf{360} & \mathbf{192} & \mathbf{40} & - & - & -\\
12 & 162 & 720 & \mathbf{120} & \mathbf{32} & - & - & - & -\\
13 & 135 & 234 & \mathbf{24} & - & - & - & - & -\\
14 & 109 & \mathbf{16} & - & - & - & - & - & -
\end{array}
\end{equation*}
\vspace{-1.5\baselineskip}\caption{$s$-shortened $\Gop^\perp_{2,8,5}$ ($a\!=\!3,r^*\!=\!17$)}\label{Gop285}
\vspace{\baselineskip}
\begin{minipage}{0.52\textwidth}
\begin{equation*}
\arraycolsep=2pt
\begin{array}{c|cccccc}
s & \beta_{1,2} & \beta_{2,3} & \beta_{3,4} & \beta_{4,5} & \beta_{5,6} & \beta_{6,7}\\
\hline
\vdots & \vdots & \vdots & \vdots & \vdots & \vdots & \vdots\\
32 & 719 & 8474 & \mathbf{450} & \mathbf{240} & \mathbf{50} & -\\
33 & 662 & 6216 & \mathbf{150} & \mathbf{40} & - & -\\
34 & 606 & 4070 & \mathbf{30} & - & - & -\\
35 & 551 & 2034 & - & - & - & -\\
\end{array}
\end{equation*}
\vspace{-1.5\baselineskip}\caption{$s$-shortened $\Gop^\perp_{2,10,9}$ ($a\!=\!4,r^*\!=\!38$)}\label{Gop2109}
\end{minipage}
\begin{minipage}{0.52\textwidth}
\begin{equation*}
\arraycolsep=2pt
\begin{array}{c|ccccc}
s & \beta_{1,2} & \beta_{2,3} & \beta_{3,4} & \beta_{4,5} & \beta_{5,6}\\
\hline
\vdots & \vdots & \vdots & \vdots & \vdots & \vdots\\
40 & 846 & 13924 & \mathbf{30} & - & - \\
41 & 787 & 11426 & - & - & - \\
\end{array}
\end{equation*}
\vspace{-1.2\baselineskip}\caption{$s$-shortened $\Gop^\perp_{2,10,10}$ ($a\!=\!4,r^*\!=\!44$)}\label{Gop21010}
\end{minipage}
\vspace{-\baselineskip}\end{figure}

\noindent Observe that Theorem~\ref{thlowerboundalt} and Corollary~\ref{corlowerboundGop} are not tight on these examples:
\begin{itemize}
\item for $\Gop^\perp_{2,8,5}$ they give $f=8$ and $\widetilde{f}=15$, compared to $r^*=17$
\item for $\Gop^\perp_{2,10,9}$ they give $f=21$ and $\widetilde{f}=33$, compared to $r^*=38$
\item for $\Gop^\perp_{2,10,10}$ they give $f=25$ and $\widetilde{f}=39$, compared to $r^*=44$
\end{itemize}
so Remark~\ref{improvedcorlowerboundGop} really is an improvement.

\section{Regularity $2$ and the small defect heuristic}\label{secdefect}

\subsection*{Regularity $2$ and consequences}

If $\code$ is a $[n,k]$-code, we let
\beq\label{defB}
B_j=\sum_{i\geq0}(-1)^i\beta_{i,j}
\eeq
be the alternating sum of its Betti numbers degree $j$,
and $B(z)=\sum_{j\geq0}B_jz^j$ their generating polynomial (it is indeed a finite sum).

Let also $H_{\code}(z)=\sum_{r\geq0}z^r\dim\code\deux[r]$ be the Hilbert series of $\code$.
\begin{proposition}\label{BH}
We have
\beq
B(z)=(1-z)^kH_{\code}(z).
\eeq
\end{proposition}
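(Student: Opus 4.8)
The plan is to exploit the minimal free resolution of $M_0 = \code\deux[\cdot]$ as a graded $S$-module. Recall from the generalities in section~\ref{secBetti} that $\code\deux[\cdot]$ is a Cohen-Macaulay quotient of $S = \F[X_1,\dots,X_k]$ of dimension $1$, with a finite minimal graded free resolution
\beq
0\longto F_{k-1}\longto\cdots\longto F_1\longto F_0=S\longto\code\deux[\cdot]\longto 0,
\eeq
where $F_i = \bigoplus_{j\geq 0} S(-j)^{\beta_{i,j}}$. The idea is that the Hilbert series is additive along exact sequences (one forms the alternating sum of Hilbert series), and the Hilbert series of each free summand $S(-j)$ is $z^j/(1-z)^k$, so the generating function of the $\beta_{i,j}$, weighted with the sign $(-1)^i$, controls everything.

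First I would recall that for a graded $S$-module $N$ that is finitely generated, the Hilbert series $H_N(z) = \sum_{d\geq 0} z^d\dim N_d$ is well-defined as a formal power series, and that from a finite exact sequence of graded modules $0\to N_\ell\to\cdots\to N_0\to 0$ one gets $\sum_{i}(-1)^i H_{N_i}(z) = 0$ by taking dimensions degree by degree (each graded piece gives a finite exact sequence of $\F$-vector spaces, whose alternating sum of dimensions vanishes). Applying this to the resolution above, with $N_0 = \code\deux[\cdot]$ in homological degree $-1$ (or equivalently moving it to the other side), I get
\beq
H_{\code}(z) = \sum_{i\geq 0}(-1)^i H_{F_i}(z).
\eeq
Next I would compute $H_{F_i}(z)$: since $H_{S(-j)}(z) = z^j H_S(z) = z^j/(1-z)^k$ and $F_i = \bigoplus_j S(-j)^{\beta_{i,j}}$, we have $H_{F_i}(z) = \frac{1}{(1-z)^k}\sum_j \beta_{i,j} z^j$. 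Substituting and interchanging the finite sums,
\beq
H_{\code}(z) = \frac{1}{(1-z)^k}\sum_{i\geq 0}\sum_{j\geq 0}(-1)^i\beta_{i,j}z^j = \frac{1}{(1-z)^k}\sum_{j\geq 0}\Bigl(\sum_{i\geq 0}(-1)^i\beta_{i,j}\Bigr)z^j = \frac{B(z)}{(1-z)^k},
\eeq
using the definition \eqref{defB} of $B_j$ and of $B(z)$. Multiplying through by $(1-z)^k$ gives $B(z) = (1-z)^k H_{\code}(z)$, as claimed. One should also note that $B(z)$ is genuinely a polynomial: the sums over $i$ and $j$ are finite because the resolution has finite length $k-1$ and each $F_i$ is finitely generated, so only finitely many $\beta_{i,j}$ are nonzero.

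There is no serious obstacle here; the only point requiring a little care is the bookkeeping of signs and the justification that one may take alternating sums of Hilbert series along the resolution — this is the standard fact that Euler characteristic is additive, applied degree by degree, and relies only on the resolution being exact and each term being a finitely generated graded module so that all the vector-space dimensions involved are finite. I would phrase the proof compactly: state the additivity of Hilbert series along the finite exact resolution, plug in $H_{S(-j)}(z) = z^j/(1-z)^k$, and collect the coefficient of $z^j$.
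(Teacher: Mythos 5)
Your proof is correct and is precisely the standard argument the paper invokes by citing \cite[Cor.~1.10]{Eisenbud05}: additivity of Hilbert series along the finite minimal free resolution, together with $H_{S(-j)}(z)=z^j/(1-z)^k$, and then collecting coefficients of $z^j$. You have simply unfolded what the paper leaves as a one-line reference, so this is the same approach with the details made explicit.
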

\begin{proof}
Generating series reformulation of \cite[Cor.~1.10]{Eisenbud05}.
\end{proof}

\begin{definition}[{\cite[Def.~1.5 \& Th.~2.35]{HR15}}]
The Castelnuovo-Mumford regularity of a projective $[n,k]$-code $\code$ is the smallest integer $r$
such that $\code\deux[r]=\F^n$.
\end{definition}
\begin{definition}[{cf. \cite[\S4A]{Eisenbud05}, after \cite[Lect.~14]{Mumford66}}]\label{regCM}
The Castelnuovo-Mumford regularity of $\code\deux[\cdot]$ is $\max\{r:\:\exists i,\,\beta_{i,i+r}(\code)>0\}$.
\end{definition}
\begin{proposition}[{\cite[Th.~4.2]{Eisenbud05}}]
These two definitions coincide.
\end{proposition}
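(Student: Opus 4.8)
The plan is to show that the two notions of Castelnuovo-Mumford regularity — the "geometric" one (smallest $r$ with $\code\deux[r]=\F^n$) and the "homological" one (largest $r$ with some $\beta_{i,i+r}>0$) — agree for the homogeneous coordinate ring $\code\deux[\cdot]$ of a projective code. Since the statement is explicitly attributed to \cite[Th.~4.2]{Eisenbud05}, the honest thing to do is to invoke that theorem, but first one must check that its hypotheses are met and that its conclusion is phrased in the coordinate ring language we need. Eisenbud's Theorem~4.2 characterizes the regularity of a finitely generated graded $S$-module $M$ in several equivalent ways, one of which is the free-resolution (Betti number) description appearing in Definition~\ref{regCM}, and another is in terms of the vanishing of local cohomology $H^i_{\mathfrak{m}}(M)_j$ for $j \geq r-i+1$, or equivalently of sheaf cohomology $H^i(\widetilde{M}(j))$ for appropriate $i,j$.

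First I would recall the setup: $M_0 = \code\deux[\cdot]$ is the homogeneous coordinate ring of the reduced finite set of points $\{\mot{p}_1,\dots,\mot{p}_n\}\subset\PP^{k-1}$, it is a one-dimensional Cohen-Macaulay graded quotient of $S = \F[X_1,\dots,X_k]$, and the associated sheaf $\widetilde{M_0}$ on $\PP^{k-1}$ is the structure sheaf of these $n$ points, so $H^0(\widetilde{M_0}(r)) = \F^n$ for all $r$ (each point contributes a one-dimensional space and there are no higher cohomology obstructions since the scheme is zero-dimensional) and $H^i(\widetilde{M_0}(r)) = 0$ for $i\geq 1$. The natural map $\code\deux[r] = (M_0)_r \to H^0(\widetilde{M_0}(r)) = \F^n$ is exactly the evaluation map, and it is injective for all $r\geq 0$ (a nonzero form of degree $r$ vanishing at all $n$ points would lie in $I(\code)_r$ but evaluate to $\mot 0$, consistent, so rather: the image is $\code\deux[r]$ by definition, and the map is surjective onto $\F^n$ precisely when $\code\deux[r]=\F^n$).

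Second, I would use the standard cohomological description of regularity: by \cite[Th.~4.2]{Eisenbud05} the homological regularity $\reg(M_0) = \max\{r : \exists i,\ \beta_{i,i+r}>0\}$ equals $\max\{ i + \max\{j : H^i_{\mathfrak m}(M_0)_{j-i} \neq 0\} \}$, and for a module that is already "saturated enough" — here $M_0$ has depth $1$ so $H^0_{\mathfrak m}(M_0)=0$, and $H^1_{\mathfrak m}(M_0)_j \cong \coker(\code\deux[j] \to \F^n)$ by the exact sequence relating local and sheaf cohomology — this reduces to the statement that $\reg(M_0)$ is the smallest $r_0$ such that $H^1_{\mathfrak m}(M_0)_j = 0$ for all $j \geq r_0$, i.e. such that $\code\deux[j] = \F^n$ for all $j\geq r_0$. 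Since the $\code\deux[j]$ form an increasing chain inside $\F^n$, "$\code\deux[j]=\F^n$ for all $j\geq r_0$" is equivalent to "$\code\deux[r_0]=\F^n$", which is exactly the geometric definition.

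The main obstacle is bookkeeping rather than depth: one must carefully match Eisenbud's indexing conventions (his $\reg$ versus the "$r$" in our Betti diagram rows, and the degree shift in $H^i_{\mathfrak m}$), confirm that $H^i_{\mathfrak m}(M_0)$ vanishes for $i\geq 2$ (which follows from $\dim M_0 = 1$, since local cohomology vanishes above the dimension) and for $i=0$ (which follows from $\depth M_0 = 1$, i.e. $M_0$ has no finite-length submodule — true because a reduced coordinate ring of points has a nonzerodivisor in degree $1$, namely the class of any linear form not vanishing at any point, which exists after possibly enlarging $\F$, and regularity is insensitive to field extension by Lemma~\ref{extscal}), so that only the single term $i=1$ governs everything and the clean equivalence above goes through.
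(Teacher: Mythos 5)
Your argument is correct, and since the paper's own ``proof'' is just the citation to Eisenbud's Theorem~4.2, you are essentially unpacking that reference via the standard local-cohomology reduction it rests on: $M_0$ has depth and dimension both equal to one, so $H^i_{\mathfrak m}(M_0)=0$ for $i\neq 1$, and the exact sequence relating $M_0$, $\bigoplus_j H^0(\widetilde{M_0}(j))$ and $H^1_{\mathfrak m}(M_0)$ identifies $H^1_{\mathfrak m}(M_0)_j$ with $\coker(\code\deux[j]\to\F^n)$, so that the last degree in which $H^1_{\mathfrak m}$ is nonzero is exactly one below the first degree in which $\code\deux[j]=\F^n$. This is the right argument and it matches what the cited theorem gives.

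Two small slips worth noting, neither of which propagates. First, the intermediate formula $\max\{i+\max\{j:H^i_{\mathfrak m}(M_0)_{j-i}\neq 0\}\}$ double-counts $i$: setting $a_i=\max\{d:H^i_{\mathfrak m}(M_0)_d\neq 0\}$, that expression evaluates to $\max_i\{a_i+2i\}$, whereas the local-cohomology regularity is $\max_i\{a_i+i\}$. You recover the correct value anyway because you immediately specialize to $i=1$ and then restate the conclusion correctly as ``the smallest $r_0$ with $H^1_{\mathfrak m}(M_0)_j=0$ for all $j\geq r_0$''. Second, the spaces $\code\deux[j]$ do not literally form a nested chain inside $\F^n$ unless $\mot 1\in\code$; what is true and what you in effect use (via the nonzerodivisor linear form, after an innocuous scalar extension covered by Lemma~\ref{extscal}) is that the dimensions $\dim\code\deux[j]$ are nondecreasing, together with the observation that projectivity forces $\code$ to have full support, so that once $\code\deux[r_0]=\F^n$ the same holds for all $j\geq r_0$. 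With those cosmetic fixes the proof is clean.
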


From now on we will just say ``regularity'' for short.

The square code distinguisher, and the filtration attack from \cite{BarMorTil24} that extends it,
work for codes $\code$ with $\code\deux\subsetneq\F^n$, i.e. of regularity $>2$.
This means that codes of regularity $2$ are hard to deal with under this approach.
On the opposite, for us, codes of regularity $2$ are nice because Definition~\ref{regCM}
means their Betti diagram is simple: it has only two nontrivial rows.
Observe that most codes of interest have regularity $2$ (with the notable exception of self-dual codes).

\begin{definition}\label{definddef}
If $f:U\longto V$ is a linear map between finite dimensional $\F$-vector spaces, we define its index
\beq
\begin{split}
\ind(f)&=\dim(U)-\dim(V)\\
&=\dim\ker(f)-\dim\coker(f)
\end{split}
\eeq
and its rank defect
\beq
\begin{split}
\defect(f)&=\min(\dim(U),\dim(V))-\rk(f)\\
&=\min(\dim\ker(f),\dim\coker(f)).
\end{split}
\eeq
\end{definition}
For any real $x$ we set $x^+=\max(x,0)$ and $x^-=(-x)^+$, so $x=x^+-x^-$.
Then we always have
\begin{align}
\label{ker>ind+}\dim\ker(f)&\geq\ind(f)^+\\
\label{coker>ind-}\dim\coker(f)&\geq\ind(f)^-
\end{align} 
and then
\beq
\defect(f)=\dim\ker(f)-\ind(f)^+=\dim\coker(f)-\ind(f)^-
\eeq
measures the distance to equality in these two inequalities.

For $r\geq3$, recall the linear map $\phi_r: M_{r-2,r-1}\tens S_1 \longto M_{r-2,r}$ from \eqref{defphir}.
It will also be handy to let $\phi_2=\ev_{\mat{G},2}:S_2\longto\code\deux$ be the evaluation map.
\begin{theorem}\label{indphi}
For all $r\geq2$ we have
\begin{align}
\dim\ker(\phi_r)&=\beta_{r-1,r}\label{betaker}\\
\dim\coker(\phi_r)&=\beta_{r-2,r}\label{betacoker}
\end{align} 
and moreover, if $\code\deux=\F^n$, then
\beq\label{formuleindphi}
\ind(\phi_r)=\left(\frac{k(k+1)}{r}-n\right)\binom{k-1}{r-2}
\eeq
\end{theorem}
\begin{proof}
For $r=2$ this is proved directly. Now assume $r\geq3$.
Then the first two equalities are reformulations of Lemma~\ref{kercokerphir}.
Now assume moreover $\code\deux=\F^n$.
Then $H_{\code}(z)=1+kz+n\frac{z^2}{1-z}$,
so $B(z)=(1+kz)(1-z)^k+nz^2(1-z)^{k-1}$ by Proposition~\ref{BH},
thus $B_r=(-1)^{r-1}(r-1)\binom{k+1}{r}+(-1)^rn\binom{k-1}{r-2}=(-1)^{r-1}\left(\frac{k(k+1)}{r}-n\right)\binom{k-1}{r-2}$.
On the other hand, $\code\deux=\F^n$ also means $\code$ has regularity $2$ in the sense of Definition~\ref{regCM},
so \eqref{defB} reduces to $B_r=(-1)^{r-1}\beta_{r-1,r}+(-1)^{r-2}\beta_{r-2,r}=(-1)^{r-1}\ind(\phi_r)$
and we conclude.
\end{proof}
\begin{corollary}
If $\code\deux=\F^n$, then the bottom right entry of its Betti diagram is
\beq
\beta_{k-1,k+1}=n-k.
\eeq
\end{corollary}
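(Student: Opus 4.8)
The plan is to specialize Theorem~\ref{indphi} to the largest relevant index, namely $r=k+1$, and to read off the bottom-right Betti number from the resulting index computation.

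First I would record that the minimal resolution of $\code\deux[\cdot]$ has length $k-1$ (by the Auslander-Buchsbaum formula, as noted earlier), so that $F_k=0$ and hence $\beta_{k,j}=0$ for every $j$; in particular $\beta_{k,k+1}=0$. Applying the two dimension formulas of Theorem~\ref{indphi} with $r=k+1$ then gives $\dim\ker(\phi_{k+1})=\beta_{k,k+1}=0$ and $\dim\coker(\phi_{k+1})=\beta_{k-1,k+1}$, so that by Definition~\ref{definddef} we have $\ind(\phi_{k+1})=-\beta_{k-1,k+1}$.

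Next I would invoke the index formula of Theorem~\ref{indphi}, which is available since $\code\deux=\F^n$: at $r=k+1$ it reads $\ind(\phi_{k+1})=\bigl(\tfrac{k(k+1)}{k+1}-n\bigr)\binom{k-1}{k-1}=k-n$, using $\binom{k-1}{k-1}=1$. Comparing the two expressions for $\ind(\phi_{k+1})$ gives $\beta_{k-1,k+1}=n-k$, as claimed.

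There is essentially no obstacle here: the statement is a one-line corollary of Theorem~\ref{indphi} once one notices that the linear strand has already terminated in degree $k+1$ (so $\beta_{k,k+1}=0$), isolating $\beta_{k-1,k+1}$. The only things worth a brief check are the degenerate range $k\leq 1$, where $\code\deux=\F^n$ forces $n=k$ and both sides are $0$; and, for a reader who prefers not to use $\phi_{k+1}$ directly, the equivalent route through Proposition~\ref{BH}: extracting the coefficient of $z^{k+1}$ in $B(z)=(1+kz)(1-z)^k+nz^2(1-z)^{k-1}$ yields $B_{k+1}=(-1)^{k-1}(n-k)$, which must equal $(-1)^{k-1}\beta_{k-1,k+1}$ because in regularity $2$ that is the only nonvanishing Betti number contributing in degree $k+1$.
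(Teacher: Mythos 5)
Your proof is correct and is essentially the paper's own argument, just unpacked: the paper likewise uses that the resolution has length $k-1$ (so $\beta_{k,k+1}=0$) and then reads $\beta_{k-1,k+1}=-\ind(\phi_{k+1})=n-k$ from Theorem~\ref{indphi}. The alternative derivation you sketch via the coefficient of $z^{k+1}$ in $B(z)$ is also fine but is just a repackaging of the same index computation through Proposition~\ref{BH}.
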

\begin{proof}
The minimal resolution of $\code$ has length $k-1$, so $\beta_{i,j}=0$ for $i>k-1$,
hence $\beta_{k-1,k+1}=-\ind(\phi_{k+1})=n-k$.
\end{proof}

Regularity $2$ helps in computing the full Betti diagram of codes, as one row can be deduced from the other.
In order to illustrate this, let us first recall that a $[n,k]$-code $\code$ is MDS
if it has dual minimum distance $\dmin(\code^\perp)=k+1$.
\begin{lemma}[{\cite[Th.~1]{GreLaz88}, reformulated}]\label{GLre}
Let $\code$ be a $[n,k]$-code with $n\leq 2k-1$. Assume $\code$ is MDS.
Then for all $r\leq 2k+1-n$ we have $\beta_{r-2,r}(\code)=0$.
\end{lemma}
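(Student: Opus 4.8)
The plan is to recognize this statement as the translation, into the language of codes, of Green--Lazarsfeld's theorem on the syzygies of finite sets of points in linearly general position. So the whole proof consists of setting up the right dictionary and quoting \cite[Th.~1]{GreLaz88}; essentially no new mathematics is needed, only careful bookkeeping.

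First I would set up the geometric side. We may assume $k\ge2$ (the case $k\le1$ forces $n\le1$ and is trivial). Write $S=\F[X_1,\dots,X_k]$, let $\mat G$ be a generator matrix of $\code$, and let $\Gamma=\{\mot p_1,\dots,\mot p_n\}\subset\PP^{k-1}$ be the point set defined by its columns, so that $\code\deux[\cdot]=S/I(\code)$ is the homogeneous coordinate ring of $\Gamma$ and $\beta_{r-2,r}(\code)$ is the graded Betti number $\beta_{r-2,r}(S/I_\Gamma)$. The MDS hypothesis $\dmin(\code^\perp)=k+1$ says exactly that no $j\le k$ columns of $\mat G$ are linearly dependent, i.e. that any $j\le k$ of the points $\mot p_i$ span a $\PP^{j-1}$: the set $\Gamma$ is in \emph{linearly general position}. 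In particular $\dmin(\code^\perp)\ge 3$, so $\code$ is projective and $\Gamma$ consists of $n$ distinct points. Thus we are precisely in the setting of \cite[Th.~1]{GreLaz88}.

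Next I would quote that theorem in the form: a set of $\delta$ points of $\PP^{\rho}$ in linearly general position satisfies property $N_p$ with $p=2\rho+1-\delta$, which unwound means $\beta_{i,i+j}(S/I_\Gamma)=0$ for every $j\ge2$ and every $0\le i\le 2\rho+1-\delta$. Applying this with $\rho=k-1$ and $\delta=n$ gives $\beta_{i,i+2}(\code)=0$ for all $i\le 2k-1-n$. Re-indexing by $r=i+2$ turns this into $\beta_{r-2,r}(\code)=0$ for $2\le r\le 2k+1-n$; the values $r\le1$ are vacuous, for $r=2$ one has $\beta_{0,2}=0$ directly from Lemma~\ref{geni+1} anyway, and the hypothesis $n\le 2k-1$ ensures $2k+1-n\ge2$ so that the asserted range is nonempty. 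This is the claim.

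Finally, for completeness I would recall the mechanism behind Green--Lazarsfeld's theorem, since the numerology is the only delicate point. The base case (independent conditions on quadrics) is proved by the elementary \textbf{two--hyperplane trick}: to produce a quadric vanishing on all $\mot p_i$ with $i\ne0$ but not on $\mot p_0$, split $\{\mot p_1,\dots,\mot p_n\}\setminus\{\mot p_0\}$, of cardinality $n-1\le 2(k-1)$, into two subsets of size $\le k-1$, choose hyperplanes $H_1,H_2$ through the two subsets and avoiding $\mot p_0$ (these exist because, by linear general position, $\mot p_0$ is not in the span of either subset), and take $H_1\cup H_2$; this also yields $\code\deux=\F^n$, i.e. regularity $2$, which is what makes the companion Betti computations of the supplementary material work. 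The vanishing $N_p$ for larger $p$ then follows by induction on $p$ and on the number of points, via exact sequences relating the resolutions of $\Gamma$ and of $\Gamma$ minus a point (or via generic hyperplane sections), which I would import from \cite{GreLaz88} as a black box. The main obstacle is thus not depth but correctness of the translation: ensuring that ``MDS $\Leftrightarrow$ linearly general position'', the identification of $N_p$ with vanishing of the $(i,i+2)$-entries for $i\le p$, and the off-by-one between $p=2k-1-n$ and the bound $r\le 2k+1-n$ are all exact; a sanity check on the edge case $n=2k-2$ (where the statement asserts that $I(\code)$ is generated by quadrics) is worth doing.
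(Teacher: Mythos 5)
Your proof is correct and takes the same approach the paper itself takes: Lemma~\ref{GLre} is presented in the paper with no written proof, only as a reformulation of \cite[Th.~1]{GreLaz88}, and you carry out exactly that translation (MDS $\Leftrightarrow$ linearly general position, property $N_p$ $\Leftrightarrow$ vanishing of $\beta_{i,i+2}$ for $i\le p$, and the correct numerology $p=2(k-1)+1-n$ giving $\beta_{r-2,r}=0$ for $r\le 2k+1-n$). The only small caution worth adding is that \cite{GreLaz88} is stated over an infinite (or algebraically closed) field, so one should first invoke Lemma~\ref{extscal} to extend scalars --- the MDS/general-position hypothesis and the Betti numbers are both preserved --- before quoting their theorem; with that remark your argument is complete.
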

\begin{proposition}\label{Bettiparite}
Let $\code$ be a $[k+1,k]$ MDS code, for instance a parity code or a $[k+1,k]$ $\GRS$ code.
Then the nonzero Betti numbers of $\code$ are $\beta_{0,0}=\beta_{k-1,k+1}=1$, and
\beq
\beta_{r-1,r}=\frac{(r-1)(k-r)}{k}\binom{k+1}{r}
\eeq
for $2\leq r\leq k-1$. In particular they satisfy the symmetry $\beta_{i,j}=\beta_{k-1-i,k+1-j}$.
\end{proposition}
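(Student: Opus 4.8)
The plan is: show that $\code$ has regularity $2$, write down its Hilbert series, read the diagram off from Proposition~\ref{BH}, and use Lemma~\ref{GLre} to check that apart from the two corner entries the only nontrivial Betti numbers sit in the degree-$1$ row.

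First I would observe that a $[k+1,k]$ MDS code is monomially equivalent to the length-$(k+1)$ parity code: its dual is a $[k+1,1]$ MDS code, hence spanned by a vector $\mot{d}$ of full Hamming weight, which a diagonal change of coordinates turns into $\mot{1}$; so $\code^\perp$ becomes $\F\cdot\mot{1}$ and $\code$ the parity code. By Lemma~\ref{invBetti} we may therefore assume $\code=\{c:\sum_i c_i=0\}$. For such a code and any three distinct coordinates $i,j,l$ we have $\mot{e}_i-\mot{e}_j,\ \mot{e}_i-\mot{e}_l\in\code$ and $(\mot{e}_i-\mot{e}_j)(\mot{e}_i-\mot{e}_l)=\mot{e}_i$, so $\code\deux[2]=\F^n$ (here $n=k+1\geq 3$, i.e. $k\geq 2$). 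Thus $\code$ has regularity $2$, $\dim\code\deux[r]=n$ for all $r\geq 2$, whence $H_\code(z)=1+kz+(k+1)\tfrac{z^2}{1-z}$, and Proposition~\ref{BH} gives
\[
B(z)=(1-z)^kH_\code(z)=(1+kz)(1-z)^k+(k+1)z^2(1-z)^{k-1}.
\]

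Next I would locate all nonzero Betti numbers. Row $0$ of the diagram is $(1,-,-,\dots)$ by Lemma~\ref{geni+1}, and the minimal resolution has length $k-1$, so $\beta_{i,j}=0$ for $i\geq k$. Since $\code$ is MDS with $n=k+1\leq 2k-1$, Lemma~\ref{GLre} yields $\beta_{r-2,r}(\code)=0$ for all $r\leq 2k+1-n=k$; combined with regularity $2$ (which kills all rows $\geq 3$ of the diagram) this shows the only possibly nonzero entry outside rows $0$ and $1$ is the corner $\beta_{k-1,k+1}$, and extracting the coefficient of $z^{k+1}$ in $B(z)$ (equivalently, invoking the corollary to Theorem~\ref{indphi}) gives $\beta_{k-1,k+1}=n-k=1$. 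It remains to read off the degree-$1$ row: for $2\leq r\leq k$ every other degree-$r$ entry vanishes, so \eqref{defB} reduces to $B_r=(-1)^{r-1}\beta_{r-1,r}$, and extracting the coefficient of $z^r$ in $B(z)$ gives
\[
\beta_{r-1,r}=k\binom{k}{r-1}-\binom{k}{r}-(k+1)\binom{k-1}{r-2}.
\]

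Finally I would simplify this using $\binom{k}{r-1}=\tfrac{r}{k+1}\binom{k+1}{r}$, $\binom{k}{r}=\tfrac{k+1-r}{k+1}\binom{k+1}{r}$, $\binom{k-1}{r-2}=\tfrac{r(r-1)}{k(k+1)}\binom{k+1}{r}$: the right-hand side becomes $\tfrac{1}{k(k+1)}\big(r(k+1)^2-k(k+1)-(k+1)r(r-1)\big)\binom{k+1}{r}=\tfrac{(r-1)(k-r)}{k}\binom{k+1}{r}$, which vanishes at $r=1$ and $r=k$ (consistent with $\beta_{0,1}=\beta_{k-1,k}=0$), leaving exactly the stated nonzero values for $2\leq r\leq k-1$. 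The symmetry $\beta_{i,j}=\beta_{k-1-i,k+1-j}$ is then immediate: on the degree-$1$ row it is the identity $\tfrac{(r-1)(k-r)}{k}\binom{k+1}{r}=\tfrac{((k+1-r)-1)(k-(k+1-r))}{k}\binom{k+1}{k+1-r}$, it matches the corners $\beta_{0,0}=\beta_{k-1,k+1}=1$, and it pairs vanishing entries with vanishing entries. There is no deep obstacle here; the two points requiring a little care are verifying $\code\deux[2]=\F^n$ (which for codes not literally of $\GRS$ type, such as binary parity codes, is precisely what the direct computation above settles) and checking that Lemma~\ref{GLre} together with the length bound really does force the whole degree-$2$ row to vanish except at the corner.
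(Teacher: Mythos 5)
Your proof follows essentially the same route as the paper's: establish regularity $2$, invoke Lemma~\ref{GLre} to kill the row-$2$ entries up to the corner, and then read off the row-$1$ entries from the Hilbert-series/index computation. The only substantive difference is that you make explicit the step the paper leaves implicit (``the parameters imply regularity $2$''), giving a clean reduction to the parity code via monomial equivalence and then a direct check that $\code\deux[2]=\F^n$; this is a worthwhile addition, since the paper's one-line assertion is not obviously a consequence of general results quoted elsewhere in the text. You also extract the $\beta_{r-1,r}$ directly as coefficients of $B(z)$ rather than quoting $\beta_{r-1,r}=\ind(\phi_r)$ from Theorem~\ref{indphi}; these are the same computation packaged differently.

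One small arithmetic slip in the final simplification: after substituting the binomial identities, the bracketed numerator should be $kr(k+1)-k(k+1)-(k+1)r(r-1)$, not $r(k+1)^2-k(k+1)-(k+1)r(r-1)$ (the first term is $k^2r+kr$, not $rk^2+2rk+r$). With the corrected term the factorization $kr(k+1)-k(k+1)-(k+1)r(r-1)=(k+1)(r-1)(k-r)$ goes through and gives the stated $\tfrac{(r-1)(k-r)}{k}\binom{k+1}{r}$, so the conclusion is unaffected, but as written the intermediate identity is off by $r(k+1)$ in the numerator.
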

\begin{proof}
The parameters imply that $\code$ has regularity $2$.
By Lemma~\ref{GLre} we have $\beta_{r-2,r}=0$ for $r\leq k$.
Then $\beta_{r-1,r}=\ind(\phi_r)=\left(\frac{k(k+1)}{r}-(k+1)\right)\binom{k-1}{r-2}$
and we conclude with a straightforward calculation.
\end{proof}
\begin{proposition}\label{BettiGRScritique}
Let $\code$ be a $[2k-1,k]$ $\GRS$ code.
Then the nonzero Betti numbers of $\code$ are:
\begin{itemize}
\item $\beta_{0,0}=1$
\item $\beta_{r-1,r}=(r-1)\binom{k-1}{r}\;$ for $2\leq r\leq k-1$
\item $\beta_{r-2,r}=(r-2)\binom{k-1}{r-2}\;$ for $3\leq r\leq k+1\quad$ (so $\beta_{r-2,r}=\beta_{k+2-r,k+4-r}$).
\end{itemize}
In particular the ideal $I(\code)$ is generated by $\beta_{1,2}=\frac{(k-1)(k-2)}{2}$ quadratic forms
and $\beta_{1,3}=k-1$ cubic forms.
\end{proposition}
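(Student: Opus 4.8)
The plan is to pin down the linear strand of the resolution from the quadratic part $I_2(\code)$, and then to read off the second row of the Betti diagram from regularity~$2$.

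\textbf{Setup.} I would put $\code=\GRS_k(\mot x,\mot y)$ in standard form, so that (after the harmless change of coordinates of Lemma~\ref{invBetti}) the variable $X_a$ evaluates to $\mot y\mot x^{a-1}$ for $1\le a\le k$. Since $\code\deux[2]=\GRS_{2k-1}(\mot x,\mot y^2)=\F^n$ for $n=2k-1$, the code has regularity~$2$, so its Betti diagram has only the rows $(i,i+1)$ and $(i,i+2)$, besides $\beta_{0,0}=1$. Moreover $\dim I_2(\code)=\binom{k+1}{2}-n=\binom{k-1}{2}$, while the $\binom{k-1}{2}$ quadrics $q_{ij}=X_iX_{j+1}-X_{i+1}X_j$ for $1\le i<j\le k-1$ --- i.e. the $2\times2$ minors of the $2\times(k-1)$ matrix $\mat\Psi=\bigl(\begin{smallmatrix}X_1&\cdots&X_{k-1}\\X_2&\cdots&X_k\end{smallmatrix}\bigr)$ --- all lie in $I_2(\code)$ and are linearly independent; hence they form a basis, and $I_2(\code)=\langle q_{ij}\rangle$. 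The matrix $\mat\Psi$ is $1$-generic by the same argument as in Lemma~\ref{1gen} (or \cite[\S6B]{Eisenbud05}).

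\textbf{First row.} By Corollary~\ref{I2_determine_tout} the modules $M_{r-1,r}(\code)$ depend only on $I_2(\code)$: the recursion $M_{r-1,r}=\ker\psi_r$ underlying that corollary uses nothing beyond the defining ideal being generated in degrees $\ge2$, which holds both for $I(\code)$ and for $\langle q_{ij}\rangle$. Hence $M_{r-1,r}(\code)$ coincides with the corresponding module of $S/\langle q_{ij}\rangle$. But $\langle q_{ij}\rangle$ is resolved \emph{minimally} by the Eagon--Northcott complex of $\mat\Psi$ (exactly as used in the proof of Theorem~\ref{thlowerboundalt}, now with $f=k-1$), which has length $k-2$ and, by \eqref{rkEN}, Betti numbers $\beta_{0,0}=1$, $\beta_{r-1,r}=(r-1)\binom{k-1}{r}$ for $2\le r\le k-1$, and $0$ otherwise. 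Therefore $\beta_{r-1,r}(\code)=(r-1)\binom{k-1}{r}$ for $2\le r\le k-1$, and $\beta_{r-1,r}(\code)=0$ for $r\ge k$ (in particular $\beta_{k-1,k}(\code)=0$, which is not forced by the resolution length alone). The gain over the mere lower bound of Theorem~\ref{thlowerboundalt} comes precisely from the \emph{equality} $I_2(\code)=\langle q_{ij}\rangle$ forced by the dimension count.

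\textbf{Second row and generators.} Regularity~$2$ and Theorem~\ref{indphi} give, for all $r\ge2$,
\[
\beta_{r-1,r}(\code)-\beta_{r-2,r}(\code)=\ind(\phi_r)=\Bigl(\tfrac{k(k+1)}{r}-(2k-1)\Bigr)\binom{k-1}{r-2}.
\]
Substituting the first-row values and using $\binom{k-1}{r}=\tfrac{(k+1-r)(k-r)}{r(r-1)}\binom{k-1}{r-2}$, the bracket $(k+1-r)(k-r)-k(k+1)+r(2k-1)$ collapses to $r(r-2)$, yielding $\beta_{r-2,r}(\code)=(r-2)\binom{k-1}{r-2}$. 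For $r=2$ this reads $0$, consistent with $\beta_{0,2}=0$; for $r=k$ and $r=k+1$ one feeds in $\beta_{k-1,k}(\code)=0$ and, respectively, $\beta_{k,k+1}(\code)=0$ (resolution of length $k-1$), and the same formula comes out, with $0$ for $r>k+1$. Hence $\beta_{r-2,r}(\code)=(r-2)\binom{k-1}{r-2}$ for $3\le r\le k+1$, and the stated symmetry $\beta_{r-2,r}=\beta_{k+2-r,k+4-r}$ is then a one-line binomial identity. Finally, regularity~$2$ forces $I(\code)$ to be generated in degrees $2$ and $3$ only, with $\beta_{1,2}=\binom{k-1}{2}=\tfrac{(k-1)(k-2)}{2}$ quadrics and $\beta_{1,3}=\binom{k-1}{1}=k-1$ cubics. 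The only point needing care is invoking the $I_2$-dependence of the linear strand for the quotient $S/\langle q_{ij}\rangle$, which is the homogeneous coordinate ring of a rational normal curve rather than literally a code; this is immediate once one observes that the recursion behind Corollary~\ref{I2_determine_tout} requires nothing more than the defining ideal being generated in degrees $\ge2$. Everything else is the dimension count, the classical Eagon--Northcott resolution of the rational normal curve, and elementary binomial bookkeeping.
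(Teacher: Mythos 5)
Your proof is correct and takes essentially the same route as the paper's: identify $I_2(\code)$ with the span of $2\times 2$ minors of the $1$-generic $2\times(k-1)$ catalecticant matrix by a dimension count, deduce the linear strand from the Eagon--Northcott complex via Corollary~\ref{I2_determine_tout}, and then recover the second row of the Betti diagram from regularity~$2$ and the index formula of Theorem~\ref{indphi}. Your only addition is spelling out explicitly that the recursion behind Corollary~\ref{I2_determine_tout} applies to any quotient $S/J$ with $J$ generated in degrees $\geq 2$ (not just to homogeneous coordinate rings of codes), and verifying the boundary cases $r=k$, $r=k+1$ --- points the paper leaves tacit but which you are right to flag.
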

\begin{proof}
Using the basis $\mot{y},\mot{y}\mot{x},\dots,\mot{y}\mot{x}^{k-1}$ of $\code$,
we see $I_2(\code)$ contains the $2\times 2$ minors
of $\mat{\Phi}=\left(\begin{array}{cccc}X_0 & X_1 & \dots & X_{k-2}\\ X_1 & X_2 & \dots & X_{k-1}\end{array}\right)$,
i.e. $I_2(\code)\supset\im(\bigwedge^2\mat{\Phi})$ of dimension $\binom{k-1}{2}$.

Now set $n=2k-1$.

Because $n\geq 2k-1$, we have $\dim\code\deux=2k-1$, hence $\dim I_2(\code)=\binom{k-1}{2}$, so $I_2(\code)=\im(\bigwedge^2\mat{\Phi})$.
Corollary~\ref{I2_determine_tout} and the Eagon-Northcott complex of $\mat{\Phi}$ then give $\beta_{r-1,r}=(r-1)\binom{k-1}{r}$.

Because $n\leq 2k-1$, we have $\code\deux=\F^n$.
Then $\beta_{r-2,r}=\beta_{r-1,r}-\ind(\phi_r)=(r-1)\binom{k-1}{r}-\left(\frac{k(k+1)}{r}-(2k-1)\right)\binom{k-1}{r-2}$,
and a straightforward calculation allows to conclude.
\end{proof}
Figures~\ref{figureparite} and~\ref{figureGRScritique} illustrate Propositions~\ref{Bettiparite} and~\ref{BettiGRScritique}.
\begin{figure}[h]
\vspace{-2.5\baselineskip}
\begin{minipage}[b]{0.49\textwidth}
\begin{equation*}
\arraycolsep=3pt
\begin{array}{c|cccccccc}
& 0 & 1 & 2 & 3 & 4 & 5 & 6 & 7 \\
\hline
0 & 1 & - & - & - & - & - & - & - \\
1 & - & 27 & 105 & 189 & 189 & 105 & 27 & - \\
2 & - & - & - & - & - & - & - & 1
\end{array}
\end{equation*}
\vspace{-2\baselineskip}\caption{a $[9,8]$ parity or $\GRS$ code}\label{figureparite}
\end{minipage}
\begin{minipage}[b]{0.49\textwidth}
\begin{equation*}
\arraycolsep=3pt
\begin{array}{c|ccccccccc}
& 0 & 1 & 2 & 3 & 4 & 5 & 6 & 7 \\
\hline
0 & 1 & - & - & - & - & - & - & - \\
1 & - & 21 & 70 & 105 & 84 & 35 & 6 & - \\
2 & - & 7 & 42 & 105 & 140 & 105 & 42 & 7
\end{array}
\end{equation*}
\vspace{-2\baselineskip}\caption{a $[15,8]$ $\GRS$ code}\label{figureGRScritique}
\end{minipage}
\vspace{-1.5\baselineskip}
\end{figure}

\begin{exercise}
Compute the Betti diagram of $[n,k]$ $\GRS$ codes for $k+1<n<2k-1$.
\end{exercise}

\subsection*{The small defect heuristic}

Because of \eqref{betaker}, a necessary condition for a $[n,k]$-code $\code$ of regularity $2$
to have $\beta_{r-1,r}=0$ is that $\ind(\phi_r)\leq0$.
This immediately yields
\beq
\rmax(\code)\geq\lceil\rcrit(\code)\rceil-1
\eeq
where we define the \emph{critical degree} of $\code$
\beq
\rcrit(\code)=\frac{k(k+1)}{n}
\eeq
as the value of $r$ at which the formula \eqref{formuleindphi} for $\ind(\phi_r)$ changes sign.

More precisely, \eqref{ker>ind+} and Theorem~\ref{indphi} give the lower bound
\beq\label{contingent}
\beta_{r-1,r}(\code)\geq\left(\frac{k(k+1)}{r}-n\right)^+\binom{k-1}{r-2}
\eeq
that can be interpreted as the number of ``contingent'' syzygies that are forced to exist,
solely because of the dimension difference between the source and target spaces of the map $\phi_r$.
Then, when $\code$ is random, it is quite natural to expect that no other
syzygy will appear, or equivalently that \eqref{contingent} will be an equality, i.e. $\defect(\phi_r)=0$.

This is compatible with results such as those in \cite{Cooper00} and the references within,
that give estimates on the distribution of the rank, hence the defect, of random linear maps,
under various probability laws.
Unfortunately, even if the probability distribution of $\code$ is nice
(say, uniform among codes of given $[n,k]$), it is not easy to control the distribution of $\phi_r$,
so we will not be able to give proofs.
Moreover it could be that the iterative algebraic process in the construction of $\phi_r$
would lead to some unexpected constraints on its rank.
And indeed, in this section we will identify some parameter ranges for which the defect \emph{cannot} be small;
but conversely, we will also give arguments that support the validity of this small defect heuristic
\emph{at least under some proper conditions}.

A first argument in support of the heuristic is that it is unconditionally true when $r=2$.
Indeed, in this special case, \cite{CCMZ15} manages to give lower bounds, exponentially close to $1$,
on the probability that $\defect(\phi_2)=0$.

Another argument is the \emph{minimal resolution conjecture} of \cite{Lorenzini93}.
It postulates that, over an infinite field, a Zariski-generic code satisfies
$\beta_{r-1,r}(\code)=0$ for $r\geq\rcrit(\code)$
and $\beta_{r-2,r}(\code)=0$ for $r\leq\rcrit(\code)$,
or equivalently, $\defect(\phi_r)=0$ for all $r\geq2$.
However, two points require our attention:
\begin{enumerate}
\item\label{MRCfalse} This conjecture is now known to be false in general \cite{EisPop99}.
\item\label{finitevsinfinite} We work over a finite field, not an infinite one.
\end{enumerate}

Concerning point \ref{MRCfalse}, we will argue that the conjecture is still ``true enough'' for our use.
First, a nonzero defect might not be a problem in our Betti number estimates, as long as it remains small.
Moreover, as noted in the introduction of \cite{EisPop99}, the conjecture has been proved for a large range
of values of $n$ and $k$. In fact, although \cite{EisPop99} provides an infinity of counterexamples,
these remain limited to very specific parameters, namely of the form $n=k+O(\sqrt{k})$.
And indeed, perhaps the most valuable result for us is \cite{HirSim96}, which proves that the conjecture is true
when $n$ is large enough with respect to $k$.

Concerning point \ref{finitevsinfinite}, clearly, codes behave quite differently over a finite field
and over an infinite field.
For instance, generic codes over an infinite field are MDS for any parameter set, while over a finite field
they clearly are not.
This makes it desirable to investigate links between Betti numbers and distance properties of a code.
At this moment the author mostly has only experimental results in this direction:
Actually, experiments suggest much stronger results hold.
For any code $\code$, let us denote by $A_i(\code)$ the number of codewords of Hamming weight $i$ in $\code$.
Also recall $\rcrit=\frac{k(k+1)}{n}$.

\begin{expfact}\label{Betti_dmin_dduale}
Let $\code$ be a $[n,k]_q$-code of regularity $2$, with minimum distance $d=\dmin(\code)$ and dual minimum distance $d^\perp=\dmin(\code^\perp)$.
\begin{enumerate}
\item For all $r\geq d^\perp$ we have $\beta_{r-2,r}(\code)>0$.

Moreover if $d^\perp\leq\rcrit$ we have $\beta_{d^\perp-2,d^\perp}(\code)\geq A_{d^\perp}(\code^\perp)/(q-1)$ (and there are numerous examples of $\code$ where this inequality is an equality).
\item Dually, for all $r\leq k+1-d$ we have $\beta_{r-1,r}(\code)>0$.

Moreover if $k+1-d\geq\rcrit$ we have $\beta_{k-d,k+1-d}(\code)\geq A_{d}(\code)/(q-1)$ (and there are numerous examples of $\code$ where this inequality is an equality).
\end{enumerate}
It follows that $\defect(\phi_r)>0$ for $d^\perp\leq r\leq\rcrit$ and for $\rcrit\leq\ r\leq k+1-d$, when applicable (i.e. when these intervals are nonempty).
\end{expfact}
\begin{expfact}\label{converse}
Conversely, for a \emph{random} code $\code$ among codes of given parameters $[n,k,d,d^\perp]_q$,
the probability that $\defect(\phi_r)=0$ tends quickly to $1$
as $r\ll\min(d^\perp,\rcrit)$ and as $r\gg\max(\rcrit,k+1-d)$.

In particular $\rmax(\code)=\max(\lfloor\rcrit\rfloor,k+1-d)$, or is very close to this value, with high probability.
\end{expfact}

Actually, as a modest first step, the author is able to prove the first assertion in point~2. of Experimental fact~\ref{Betti_dmin_dduale}:
\begin{proposition}\label{rmax>k+1-d}
Any $[n,k,d]$ code $\code$ has $\rmax(\code)\geq k+1-d$,
i.e. $\beta_{r-1,r}(\code)>0$ for all $r\leq k+1-d$.
\end{proposition}
\begin{proof}
Let $\mot{c}_1$ be a codeword of minimum weight $\w(\mot{c}_1)=d$.
Let $S$ be the support of $\mot{c}_1$, and let $\code_S\subset\code$ be the subcode made
of all codewords that vanish over $S$. Then $\dim(\code_S)\geq k-\card{S}=k-d$, so we can find $k-d$ linearly
independent codewords $\mot{c}_2,\dots,\mot{c}_{k+1-d}$ in $\code_S$.
Let then complete $\mot{c}_1,\dots,\mot{c}_{k-d+1}$ into a basis $\mot{c}_1,\dots,\mot{c}_k$ of $\code$,
and use this basis to define the graded evaluation map $S=\F[X_1,\dots,X_k]\longto\code\deux[\cdot]$.
By construction we have $\mot{c}_1\mot{c}_2=\cdots=\mot{c}_1\mot{c}_{k-d+1}=\mot{0}$,
which means $\linspan{X_1X_2,\dots,X_1X_{k+1-d}}\subset I_2(\code)$.
By Corollary~\ref{sous-resolutions} we deduce $\beta_{r-1,r}(\code)\geq\beta_{r-1,r}(S/\linspan{X_1X_2,\dots,X_1X_{k+1-d}})$ for all $r$,
hence in particular $\rmax(\code)\geq\rmax(S/\linspan{X_1X_2,\dots,X_1X_{k+1-d}})$.

However we have $\linspan{X_1X_2,\dots,X_1X_{k+1-d}}=X_1\linspan{X_2,\dots,X_{k+1-d}}$, so the
minimal resolution of $\linspan{X_1X_2,\dots,X_1X_{k+1-d}}$ can be identified with the Koszul complex
on $X_2,\dots,X_{k+1-d}$ (with degree shifted by $1$), so its $\rmax$ is $k+1-d$.
\end{proof}

\begin{remark}
The quantity $\rmax=\max\{r:\;\beta_{r-1,r}>0\}$, and the
dual quantity $\rmin=\min\{r:\;\beta_{r-2,r}>0\}$ that implicitely appears together with it
in these Experimental facts, are at the core of deep conjectures of Green and Green-Lazarsfeld,
discussed e.g. in the last two chapters of \cite{Eisenbud05}
(with the notation therein, $\rmax=b(X)$ and $\rmin=a(X)+3$).
\end{remark}

Let us now illustrate various aspects of these Experimental facts.

Figures~\ref{figureMRC}-\ref{figurebinaryGolaybis} compare Betti diagrams
for three codes of length $n=23$ and dimension $k=12$
(so $\rcrit\approx 6.78$)
of quite different nature.

First, Figure~\ref{figureMRC} presents the Betti diagram as predicted by the minimal resolution conjecture,
i.e. with all defects equal to zero.
It turns out that the minimal resolution conjecture is true for these parameters,
so generic $[23,12]$-codes over an infinite field actually have this Betti diagram.
Such codes are MDS, i.e. they have $d=12$ and $d^\perp=13$,
so the intervals $d^\perp\leq r\leq\rcrit$ and $\rcrit\leq\ r\leq k+1-d$
are empty, and Experimental fact~\ref{Betti_dmin_dduale} is trivially verified.
Actually, codes with this Betti diagram can be found even over not-so-large finite fields, e.g. $q=5$ suffices (the author did not search for optimality).
These codes are no longer MDS, but they still have $d,d^\perp\geq7$ and again Experimental fact~\ref{Betti_dmin_dduale} is trivially verified.
Moreover, the converse Experimental fact~\ref{converse} is also verified as all defects are zero.

\begin{figure}[h]
\begin{equation*}
\arraycolsep=5pt
\begin{array}{c|cccccccccccc}
& 0 & 1 & 2 & 3 & 4 & 5 & 6 & 7 & 8 & 9 & 10 & 11\\
\hline
0 & 1 & - & - & - & - & - & - & - & - & - & - & -\\
1 & - & 55 & 319 & 880 & 1353 & 990 & - & - & - & - & - & -\\
2 & - & - & - & - & - & 330 & 1617 & 1870 & 1221 & 485 & 110 & 11\\
\end{array}
\end{equation*}
\vspace{-1.2\baselineskip}\caption{an idealized $[23,12]$-code according to the minimal resolution conjecture}\label{figureMRC}
\begin{equation*}
\arraycolsep=5pt
\begin{array}{c|cccccccccccc}
& 0 & 1 & 2 & 3 & 4 & 5 & 6 & 7 & 8 & 9 & 10 & 11\\
\hline
0 & 1 & - & - & - & - & - & - & - & - & - & - & -\\
1 & - & 55 & 319 & 884 & 1397 & 1224 & 490 & 121 & 18 & 1 & - & -\\
2 & - & - & 4 & 44 & 234 & 820 & 1738 & 1888 & 1222 & 485 & 110 & 11\\
\end{array}
\end{equation*}
\vspace{-1.2\baselineskip}\caption{a (pseudo)random $[23,12]_2$-code ($d=3$, $d^\perp=4$)}\label{figurepseudorandom}
\begin{equation*}
\arraycolsep=5pt
\begin{array}{c|cccccccccccc}
& 0 & 1 & 2 & 3 & 4 & 5 & 6 & 7 & 8 & 9 & 10 & 11\\
\hline
0 & 1 & - & - & - & - & - & - & - & - & - & - & -\\
1 & - & 55 & 320 & 891 & 1408 & 1210 & 320 & 55 & - & - & - & -\\
2 & - & 1 & 11 & 55 & 220 & 650 & 1672 & 1870 & 1221 & 485 & 110 & 11\\
\end{array}
\end{equation*}
\vspace{-\baselineskip}\caption{the $[23,12]_2$ Golay code ($d=7$, $d^\perp=8$)}\label{figurebinaryGolaybis}
\vspace{-1.5\baselineskip}
\end{figure}

Second, Figure~\ref{figurepseudorandom} presents the Betti diagram of the binary code whose generator matrix
\begin{equation*}
G=\left(\begin{array}{ccccccccccccccccccccccc}
1 & 1 & 0 & 0 & 1 & 0 & 0 & 1 & 0 & 0 & 0 & 0 & 1 & 1 & 1 & 1 & 1 & 1 & 0 & 1 & 1 & 0 & 1\\
0 & 1 & 0 & 1 & 0 & 0 & 0 & 1 & 0 & 0 & 0 & 1 & 0 & 0 & 0 & 0 & 1 & 0 & 1 & 1 & 0 & 1 & 0\\
 & & & & & \vdots & & & & & & & & & & & & \vdots & & & & & \\
0 & 1 & 0 & 0 & 1 & 0 & 1 & 0 & 0 & 0 & 1 & 0 & 1 & 0 & 0 & 1 & 0 & 1 & 0 & 0 & 0 & 0 & 0
\end{array}\right)
\end{equation*}
of size $12\times23$, is constructed from the binary expansion of $\pi$.
We consider this code as representative of random codes.
One can check that it has $d=3$, $d^\perp=4$.
Then Experimental fact~\ref{Betti_dmin_dduale} is verified, as the defect is positive for $4\leq r\leq 10$.
Moreover at the extremities of this interval we have equality $\beta_{2,4}=A_4(C^\perp)=4$ and $\beta_{9,10}=A_3(C)=1$, as predicted.
And the defect is zero for $r<4$ and $r>10$ so Experimental fact~\ref{converse} is verified also.

Last, Figure~\ref{figurebinaryGolaybis} reproduces Figure~\ref{figurebinaryGolay}, the binary Golay code.
As $d,d^\perp\geq7$, Experimental fact~\ref{Betti_dmin_dduale} is trivially verified.
On the other hand, Experimental fact~\ref{converse} is not verified, but this only illustrates the fact that the Golay code
is certainly \emph{not} representative of random codes. Indeed it has a lot of algebraic structure, which explains
it admitting special syzygies.

\vspace{\baselineskip}

More systematically, Figure~\ref{statdef} presents statistics on $\defect(\phi_r)$ ($2\leq r\leq 8$) for random $[56,16]_2$-codes.
For each pair $(d,d^\perp)$, a few thousands of codes with these parameters were sampled uniformly (using rejection sampling).
The average value of $\defect(\phi_r)$ among these samples is displayed, and also its $99\%$ distribution interval (which means at most $0.5\%$ fall below and at most $0.5\%$ above).

Here we have $\rcrit\approx 4.86$.
Then, in accordance with Experimental fact~\ref{Betti_dmin_dduale}, we can check $\defect(\phi_r)>0$ for $d^\perp\leq r\leq 4$ and for $5\leq r\leq 17-d$ (when applicable);
while $\defect(\phi_r)=0$ with higher and higher probability as we move away from these intervals,
in accordance with Experimental fact~\ref{converse}.

Experiments with other parameters display similar results.

\begin{figure}\vspace{-\baselineskip}
\begin{minipage}[b]{0.09\textwidth}
\hfill
\end{minipage}
\begin{minipage}[b]{0.22\textwidth}
\begin{equation*}
d=11
\end{equation*}
\end{minipage}
\begin{minipage}[b]{0.22\textwidth}
\begin{equation*}
d=12
\end{equation*}
\end{minipage}
\begin{minipage}[b]{0.22\textwidth}
\begin{equation*}
d=13
\end{equation*}
\end{minipage}
\begin{minipage}[b]{0.22\textwidth}
\begin{equation*}
d=14
\end{equation*}
\end{minipage}

\begin{minipage}[b][.18\textheight][c]{0.09\textwidth}
\begin{equation*}
d^\perp\!=\!3
\end{equation*}
\end{minipage}
\begin{minipage}[b][.18\textheight][c]{0.22\textwidth}
\begin{equation*}
\begin{array}{c|c|c}
r\, & \;\,\textrm{mean}\,\; & \;\;99\%\;\;\\
\hline
2 & 0.000 & [0,0]\\
3 & 1.269 & [1,3]\\
4 & 23.821 & [15,55]\\
5 & 6.927 & [5,21]\\
6 & 1.341 & [1,7]\\
7 & 0.042 & [0,1]\\
8 & 0.000 & [0,0]
\end{array}
\end{equation*}
\end{minipage}
\begin{minipage}[b][.18\textheight][c]{0.22\textwidth}
\begin{equation*}
\begin{array}{c|c|c}
r\, & \;\,\textrm{mean}\,\; & \;\;99\%\;\;\\
\hline
2 & 0.000 & [0,0]\\
3 & 1.245 & [1,3]\\
4 & 23.171 & [15,52]\\
5 & 1.948 & [1,8]\\
6 & 0.086 & [0,1]\\
7 & 0.001 & [0,0]\\
8 & 0.000 & [0,0]
\end{array}
\end{equation*}
\end{minipage}
\begin{minipage}[b][.18\textheight][c]{0.22\textwidth}
\begin{equation*}
\begin{array}{c|c|c}
r\, & \;\,\textrm{mean}\,\; & \;\;99\%\;\;\\
\hline
2 & 0.000 & [0,0]\\
3 & 1.201 & [1,3]\\
4 & 21.975 & [15,48]\\
5 & 0.345 & [0,5]\\
6 & 0.006 & [0,1]\\
7 & 0.000 & [0,0]\\
8 & 0.000 & [0,0]
\end{array}
\end{equation*}
\end{minipage}
\begin{minipage}[b][.18\textheight][c]{0.22\textwidth}
\begin{equation*}
\begin{array}{c|c|c}
r\, & \;\,\textrm{mean}\,\; & \;\;99\%\;\;\\
\hline
2 & 0.000 & [0,0]\\
3 & 1.164 & [1,3]\\
4 & 20.902 & [14,47]\\
5 & 0.067 & [0,1]\\
6 & 0.000 & [0,0]\\
7 & 0.000 & [0,0]\\
8 & 0.000 & [0,0]
\end{array}
\end{equation*}
\end{minipage}

\begin{minipage}[b][.18\textheight][c]{0.09\textwidth}
\begin{equation*}
d^\perp\!=\!4
\end{equation*}
\end{minipage}
\begin{minipage}[b][.18\textheight][c]{0.22\textwidth}
\begin{equation*}
\begin{array}{c|c|c}
r\, & \;\,\textrm{mean}\,\; & \;\;99\%\;\;\\
\hline
2 & 0.000 & [0,0]\\
3 & 0.000 & [0,0]\\
4 & 6.178 & [1,14]\\
5 & 6.514 & [5,20]\\
6 & 1.263 & [1,7]\\
7 & 0.035 & [0,1]\\
8 & 0.000 & [0,0]
\end{array}
\end{equation*}
\end{minipage}
\begin{minipage}[b][.18\textheight][c]{0.22\textwidth}
\begin{equation*}
\begin{array}{c|c|c}
r\, & \;\,\textrm{mean}\,\; & \;\;99\%\;\;\\
\hline
2 & 0.000 & [0,0]\\
3 & 0.000 & [0,0]\\
4 & 5.963 & [1,14]\\
5 & 1.882 & [1,8]\\
6 & 0.090 & [0,1]\\
7 & 0.000 & [0,0]\\
8 & 0.000 & [0,0]
\end{array}
\end{equation*}
\end{minipage}
\begin{minipage}[b][.18\textheight][c]{0.22\textwidth}
\begin{equation*}
\begin{array}{c|c|c}
r\, & \;\,\textrm{mean}\,\; & \;\;99\%\;\;\\
\hline
2 & 0.000 & [0,0]\\
3 & 0.000 & [0,0]\\
4 & 5.525 & [1,12]\\
5 & 0.357 & [0,5]\\
6 & 0.010 & [0,1]\\
7 & 0.001 & [0,0]\\
8 & 0.000 & [0,0]
\end{array}
\end{equation*}
\end{minipage}
\begin{minipage}[b][.18\textheight][c]{0.22\textwidth}
\begin{equation*}
\begin{array}{c|c|c}
r\, & \;\,\textrm{mean}\,\; & \;\;99\%\;\;\\
\hline
2 & 0.000 & [0,0]\\
3 & 0.000 & [0,0]\\
4 & 4.885 & [1,11]\\
5 & 0.053 & [0,1]\\
6 & 0.000 & [0,0]\\
7 & 0.000 & [0,0]\\
8 & 0.000 & [0,0]
\end{array}
\end{equation*}
\end{minipage}

\begin{minipage}[b][.18\textheight][c]{0.09\textwidth}
\begin{equation*}
d^\perp\!=\!5
\end{equation*}
\end{minipage}
\begin{minipage}[b][.18\textheight][c]{0.22\textwidth}
\begin{equation*}
\begin{array}{c|c|c}
r\, & \;\,\textrm{mean}\,\; & \;\;99\%\;\;\\
\hline
2 & 0.000 & [0,0]\\
3 & 0.000 & [0,0]\\
4 & 0.000 & [0,0]\\
5 & 5.847 & [5,15]\\
6 & 1.153 & [1,6]\\
7 & 0.020 & [0,1]\\
8 & 0.000 & [0,0]
\end{array}
\end{equation*}
\end{minipage}
\begin{minipage}[b][.18\textheight][c]{0.22\textwidth}
\begin{equation*}
\begin{array}{c|c|c}
r\, & \;\,\textrm{mean}\,\; & \;\;99\%\;\;\\
\hline
2 & 0.000 & [0,0]\\
3 & 0.000 & [0,0]\\
4 & 0.000 & [0,0]\\
5 & 1.485 & [1,6]\\
6 & 0.055 & [0,1]\\
7 & 0.001 & [0,0]\\
8 & 0.000 & [0,0]
\end{array}
\end{equation*}
\end{minipage}
\begin{minipage}[b][.18\textheight][c]{0.22\textwidth}
\begin{equation*}
\begin{array}{c|c|c}
r\, & \;\,\textrm{mean}\,\; & \;\;99\%\;\;\\
\hline
2 & 0.000 & [0,0]\\
3 & 0.000 & [0,0]\\
4 & 0.000 & [0,0]\\
5 & 0.197 & [0,2]\\
6 & 0.002 & [0,0]\\
7 & 0.000 & [0,0]\\
8 & 0.000 & [0,0]
\end{array}
\end{equation*}
\end{minipage}
\begin{minipage}[b][.18\textheight][c]{0.22\textwidth}
\begin{equation*}
\begin{array}{c|c|c}
r\, & \;\,\textrm{mean}\,\; & \;\;99\%\;\;\\
\hline
2 & 0.000 & [0,0]\\
3 & 0.000 & [0,0]\\
4 & 0.000 & [0,0]\\
5 & 0.033 & [0,1]\\
6 & 0.000 & [0,0]\\
7 & 0.000 & [0,0]\\
8 & 0.000 & [0,0]
\end{array}
\end{equation*}
\end{minipage}
\vspace{-\baselineskip}\caption{some experimental data on $\defect(\phi_r)$ for random $[56,16]_2$-codes}\label{statdef}
\vspace{-\baselineskip}\end{figure}

\vspace{\baselineskip}

Consequently, we postulate the following relaxation of the minimal resolution conjecture:

\begin{heuristic}\label{heurdef}
Fix a field cardinality $q$,
assume $n$ is not too close to $k$ in order to stay away from the counterexamples to the minimal resolution conjecture (e.g. $n>k+\sqrt{2k}+2$ should suffice),
and $n<\binom{k+1}{2}$ in order to ensure regularity $2$.
Then for random $[n,k]_q$-codes, with high probability:
\begin{enumerate}
\item if $d>k+1-\rcrit$ we expect $\beta_{r-1,r}=0$ for $r>\rcrit$
\item if $d^\perp>\rcrit$ we expect $\beta_{r-1,r}=\left(\frac{k(k+1)}{r}-n\right)\binom{k-1}{r-2}$ for $r<\rcrit$
\end{enumerate}
where $\rcrit=\frac{k(k+1)}{n}$.
\end{heuristic}
\begin{remark}\label{heurdefasympt}
Consider this Heuristic in the asymptotic regime.
Setting $R=k/n$,
we can take $d=\dist_{GV}(q,n,k)\approx H_q^{-1}(1-R)n$ and $d^\perp=\dist_{GV}(q,n,n-k)\approx H_q^{-1}(R)n$ the corresponding Gilbert-Varshamov distances,
where $H_q$ is the $q$-ary entropy function.
Then the condition in~1. translates as $H_q^{-1}(1-R)>R(1-R)$, and the condition in~2. translates as $H_q^{-1}(R)>R^2$,
both of which are satisfied when $R$ is small enough.
In particular for $q=2$, we find that~1. is satisfied for $R<0.277$ and~2. is satisfied for $R<0.141$.
\end{remark}

\section{The distinguisher}\label{paramdist}

\subsection*{Basic version (without shortening)}

Distinguishers typically work by computing certain code invariants.
We might have theoretical bounds on the values of these invariants, that are essential for an asymptotic analysis.
However these bounds need not be tight. Hence for a given set of finite parameters, we can also adopt a more empirical
approach: sample a certain number of codes, compute their invariants, and observe when the distinguisher ``just works''.

So for given $q,m,t$ and a type of codes (alternant or Goppa),
we set $n=q^m$ and we compute the Betti numbers of a certain number of dual codes.
Most of the time, it turns out that these numbers are the same for all samples. We will denote
by $\beta_{r-1,r}^*$ these common values.
By Corollary~\ref{puncture}, these $\beta_{r-1,r}^*$ still provide lower bounds on $\beta_{r-1,r}$ for smaller $n$.

For a given $r$ with $\beta_{r-1,r}^*>0$, our distinguisher then works as follows.
Taking as input a (dual) code $\code$ of dimension $k=mt$, compute $\beta_{r-1,r}(\code)$ with Algorithm~\ref{algobases}.
Then if $\beta_{r-1,r}(\code)\geq\beta_{r-1,r}^*$ declare $\code$ is of the special type in question.
Otherwise declare $\code$ is random.

Observe that for random codes of dimension $k$,
if the conditions in Heuristic~\ref{heurdef} are satisfied,
we expect $\beta_{r-1,r}=\left(\frac{k(k+1)}{r}-n\right)^+\binom{k-1}{r-2}$.
Thus the distinguisher succeeds when this value is smaller than $\beta_{r-1,r}^*$,
or equivalently when
\beq\label{seuil_dist}
n\geq \left\lceil\frac{k(k+1)}{r}-\frac{\beta_{r-1,r}^*-1}{\binom{k-1}{r-2}}\right\rceil.
\eeq
When $r=2$, this gives the distinguishability threshold of the square code distinguisher of~\cite{FGOPT13}.
Using syzygies of higher degree $r$ allows to make the term $\frac{k(k+1)}{r}$ smaller and reach
a broader range of parameters.


In the following paragraphs we compare our new distinguisher (in this basic version) with the one from~\cite{AC:CouMorTil23}, on the very same examples
taken from this reference: $\Gop^{\text{irr}}_{4,4,4}$ and $\Gop^{\text{irr}}_{2,6,3}$.
In both cases, we improve the distinguishability threshold.
Interestingly, for $\Gop^{\text{irr}}_{4,4,4}$, this improved threshold is the one given by \eqref{seuil_dist} (with $r=\rmax=4$),
i.e. the conditions in Heuristic~\ref{heurdef} are always met.
On the other hand, for $\Gop^{\text{irr}}_{2,6,3}$, the conditions in Heuristic~\ref{heurdef} are the limiting factor
(but still improving over \cite{AC:CouMorTil23}).

A general description of the set of distinguishable parameters is complicated, as it has to encompass
both \eqref{seuil_dist} and the conditions in the heuristic.
However, a very interesting phenomenon will appear in the asymptotic analysis: all these constraints will be
automatically satisfied, i.e. asymptotically all parameters will become distinguishable.

\subsection*{Distinguishing Goppa codes with $q=4$, $m=4$, $t=4$, irreducible Goppa polynomial}

For these parameters, the square code distinguisher works down to $n_{\text{square}}=\mathbf{97}$,
while the distinguisher of~\cite{AC:CouMorTil23} works down to $n_{\text{CMT}}=\mathbf{80}$.

We experimentally find that dual Goppa codes $\code$ with these parameters, with $n=q^m=256$,
consistently have $r_{\max}=4$, matching the nonbinary case at the end of Remark~\ref{improvedcorlowerboundGop}
($a=\lceil\log_q(t/(q-1))\rceil=1$, $r^*=(a+1)t-q^a=4$).
Moreover, their Betti numbers are:
\beq
\beta_{1,2}^*=40,\qquad\beta_{2,3}^*=80,\qquad\beta_{3,4}^*=12.
\eeq
We then observe that the $\beta_{2,3}$-distinguisher works down to $n_{\beta_{2,3}}=\mathbf{86}$,
and the $\beta_{3,4}$-distinguisher works down to $n_{\beta_{3,4}}=\mathbf{68}$,
both of which coincide with \eqref{seuil_dist}.
More precisely, computing the Betti numbers of dual Goppa and random codes around these values of $n$ consistently yields:
\begin{equation*}
\arraycolsep=5pt\def\arraystretch{1.2}
\begin{array}{|c|ccccccccccccc|}
\hline
n & \cdots & 88 & 87 & \mathbf{86} & 85 & 84 & \cdots & 70 & 69 & \mathbf{68} & 67 & 66 & \cdots\\
\hline
\beta_{2,3}^{\Gop^{\text{irr},\perp}} & & 80 & 80 & \mathbf{80} & 85 & 100 & & 310 & 325 & 340 & 355 & 370 & \\
\beta_{2,3}^{\text{random}} & & 40 & 55 & \mathbf{70} & 85 & 100 & & 310 & 325 & 340 & 355 & 370 & \\
\hline
\beta_{3,4}^{\Gop^{\text{irr},\perp}} & & 12 & 12 & 12 & 12 & 12 & & 12 & 12 & \mathbf{12} & 105 & 210 & \\
\beta_{3,4}^{\text{random}} & & 0 & 0 & 0 & 0 & 0 & & 0 & 0 & \mathbf{0} & 105 & 210 & \\
\hline
\end{array}
\end{equation*}
We see that they stick to their ``predicted'' values: $\beta_{r-1,r}=\max(\beta_{r-1,r}^*,\ind(\phi_r)^+)$
for dual Goppa, and $\beta_{r-1,r}=\ind(\phi_r)^+$ for random codes.

\subsection*{Distinguishing Goppa codes with $q=2$, $m=6$, $t=3$, irreducible Goppa polynomial}

For these parameters, the square code distinguisher works down to $n_{\text{square}}=\mathbf{62}$,
while the distinguisher of~\cite{AC:CouMorTil23} works down to $n_{\text{CMT}}=\mathbf{59}$.

We experimentally find that dual Goppa codes $\code$ with these parameters, with $n=q^m=64$,
consistently have $r_{\max}=8$, matching the value given in Remark~\ref{improvedcorlowerboundGop}
($a=\lceil\log_2(t))\rceil=2$, $r^*=(a+2)t-2^a=8$).
Moreover, their top Betti numbers are:
\beq
\beta_{5,6}^*=1020,\qquad\beta_{6,7}^*=288,\qquad\beta_{7,8}^*=42.
\eeq
From~\eqref{seuil_dist} we expect to distinguish
at $\beta_{5,6}$ for $n\geq57$,
at $\beta_{6,7}$ for $n\geq49$,
and at $\beta_{7,8}$ for $n\geq43$.

And indeed at $n_{\beta_{5,6}}=\mathbf{57}$ we consistently find $\beta_{5,6}\geq 1020$ for dual Goppa codes, while $\beta_{5,6}<500$ for random codes with quite high probability.

For smaller $n$ we have to pass to $\beta_{6,7}$.
The $\beta_{6,7}$-distinguisher works well for $n=56$, but the quality gradually falls
(distinguishing errors occur more frequently) as $n$ becomes smaller.
It is difficult to point a precise threshold where the distinguisher ceases to work.
Arguably we still have a positive advantage at $n_{\beta_{6,7}}=\mathbf{50}$, but not anymore at $n=49$.
We could then try with $\beta_{7,8}$, but this fails too.

What happens? It turns out the conditions in Heuristic~\ref{heurdef} are not satisfied anymore,
so we cannot ensure $\defect(\phi_r)=0$, or even $\defect(\phi_r)$ small, for these values of $r$ and $n$.
Indeed, in order to have $\beta_{r-1,r}=0$, we need $r>k+1-d$,
where $d=\dmin(\code)$. For $r=7$ and $k=mt=18$ this gives $d\geq 13$.
Then for smaller $d$, from Experimental fact~\ref{Betti_dmin_dduale},
we expect a loose link between $\beta_{r-1,r}$ and the weight distribution of $\code$.
Recall the distinguisher still works as long as random codes satisfy $\beta_{r-1,r}<\beta_{r-1,r}^*$ with high enough probablity.
Experimentally, for $r=7$, we find this inequality is satisfied for random codes of minimum distance $d\geq10$,
while $d=9$ is a borderline case, and $d\leq8$ fails invariably.
Now what happens is that, as $n$ decreases from $56$ to $50$, the proportion of random codes with $d\geq10$ also decreases,
and they become minority for $n=49$.

\begin{remark}
The sheer fact that the minimum distance is easily computable for codes with such small parameters actually
provides a much more efficient alternative distinguisher. Indeed, turning back to the primal codes, the designed distance
for binary Goppa codes with $t=3$ is $7$. Let us thus consider a distinguisher that decides that a given $[n,n-18]$-code
is Goppa if it has $\dmin\geq 7$, and that it is random otherwise. Experiments then show that, as long as $n\geq32$,
less than $2\%$ of random codes with these parameters have $\dmin\geq 7$. This means that this distinguisher has more
than $99\%$ success rate! And it remains remarkably effective even for smaller $n$.
For instance, as low as $n=24$, less than half random codes have $\dmin\geq 7$, so this distinguisher
still has more than $75\%$ success rate.

In these experiments, random codes were defined by taking a uniformly random $(n-18)\times n$ generator matrix,
which might quite often produce degenerate codes as $n$ approaches $18$.
Other probability distributions preventing this problem could be used, which would slightly alter the success rate,
but not the qualitative behaviour of this distance-based distinguisher.
\end{remark}

\subsection*{Full version (with shortening)}

We can also use \emph{shortened} dual codes as input in our distinguisher.
In general, shortening $s$ times a $[n,k]$-code $\code$ produces a $[n_s,k_s]$-code $\code_s$ with $n_s=n-s$, $k_s=k-s$.
Having smaller parameters will decrease the complexity. Moreover the (dual) rate $R_s=k_s/n_s$
also decreases when we shorten; as observed in Remark~\ref{heurdefasympt}, this will help satisfying
the conditions in Heuristic~\ref{heurdef}.

\begin{example}
Consider the class $\Gop^{\text{irr},\perp}_{2,10,10}$ of dual binary Goppa codes with $m=10$, $n=1024$, $t=10$, irreducible Goppa polynomial.
Without shortening, these codes are out of reach of our distinguisher:
we would have to run Algorithm~\ref{algobases} up to $r=10$
and do linear algebra in dimension roughly $10^{20}$, which is infeasible.
On the other hand, we can effectively distinguish after shortening $40$ times:
using the code in \cite{githubsyzygies} we experimentally find that $40$-shortened $\Gop^{\text{irr},\perp}_{2,10,10}$ have $\beta_{3,4}=30$ (Figure~\ref{Gop21010}),
while $\beta_{3,4}=0$ for random codes with the same parameters.
Computational cost (a couple of weeks each, on a workstation with several hundreds of GB of RAM)
prevented the author to run more than a few tests, but no deviation was observed.
In particular $\beta_{3,4}=0$ consistently in the random case confirms Heuristic~\ref{heurdef} for these parameters.
\end{example}

\begin{proposition}\label{propdistrac}
Fix a class $\mathcal{T}_{q,m,n,t}$ of dual alternant or Goppa codes with some given parameters $q,m,n,t$,
and assume that for some integer $r^*$ we have a lower bound of the form
\beq\label{hypothesermax}
\rmax(\code_s)\geq r^*-s
\eeq
for all $\code\in\mathcal{T}_{q,m,n,t}$ and all $s$-shortened subcodes $\code_s$ of $\code$.

Set $k=mt$, and let $\code$ be either a random $[n,k]_q$-code, or an element of $\mathcal{T}_{q,m,n,t}$.

Choose an integer $s$ such that
\beq\label{raccourci_admissible}
\rcrit(\code_s)=\frac{k_s(k_s+1)}{n_s}<r^*-s
\eeq
where $n_s=n-s$ and $k_s=k-s$.
Then set
\beq
r=r^*-s.
\eeq
\begin{enumerate}
\item Let $d_s\approx\dist_{GV}(q,n_s,k_s)$ be the typical minimum distance of random $[n_s,k_s]_q$-codes,
and assume $d_s>k_s+1-\rcrit(\code_s)$. Then, under part 1 of Heuristic~\ref{heurdef}, computing
\beq
\beta_{r-1,r}(\code_s)
\eeq
where $\code_s$ is any $s$-shortened subcode of $\code$,
allows to distinguish whether $\code$ is a random code (in which case $\beta_{r-1,r}(\code_s)=0$ w.h.p.)
or an element of $\mathcal{T}_{q,m,n,t}$ (in which case $\beta_{r-1,r}(\code_s)>0$).
\item Moreover, let $d_s^\perp\approx\dist_{GV}(q,n_s,n_s-k_s)$ be the typical dual distance of random $[n_s,k_s]_q$-codes,
and assume $d_s^\perp>\rcrit(\code_s)$. Then, under Heuristic~\ref{heurdef},
for large enough parameters,
the complexity of this distinguisher when Betti numbers are computed with the iterative Algorithm~\ref{algobases}
is dominated by the sum
\beq\label{kappasomme}
\kappa_{\operatorname{iter}}=\sum_{i\leq\left\lfloor\frac{k_s(k_s+1)}{n_s}\right\rfloor+1}\max\left(k_s\ind(\phi_{i-1}),\binom{k_s+1}{2}\ind(\phi_{i-2})\right)^\omega
\eeq
where $\ind(\phi_i)=\left(\frac{k_s(k_s+1)}{i}-n_s\right)\binom{k_s-1}{i-2}$
and $\omega\approx2.372$ is the exponent of linear algebra.
\end{enumerate}
\end{proposition}
\begin{proof}
If $\code$ is random, so is $\code_s$.
Then, under part 1 of Heuristic~\ref{heurdef}, condition~\eqref{raccourci_admissible}
and $d_s>k_s+1-\rcrit(\code_s)$ indeed imply $\beta_{r-1,r}(\code_s)=0$ for $r=r^*-s$.
On the other hand, for $\code$ in $\mathcal{T}_{q,m,n,t}$, hypothesis~\eqref{hypothesermax}
precisely means $\beta_{r-1,r}(\code_s)>0$.

Now using Algorithm~\ref{algobases} we obtain $\beta_{r-1,r}(\code_s)$ after computing the left kernel
of the matrices $\mat{M}_i$ for $2\leq i\leq r$.
The contributions for $i<4$ are negligible.
For $i\geq4$, the matrix $\mat{M}_i$ has size $k_s\beta_{i-2,i-1}(\code_s)\times\binom{k_s+1}{2}\beta_{i-3,i-2}(\code_s)$
and the contributions for $i>\lfloor\rcrit(\code_s)\rfloor+1$ are negligible because of part 1.
Then, as $d_s^\perp>\rcrit(\code_s)$, under part 2 of Heuristic~\ref{heurdef} we expect
$\beta_{i-1,i}(\code_s)=\ind(\phi_i)$ for $2\leq i\leq\rcrit(\code_s)$
for random $\code$; and
we can assume this holds also for $\code\in\mathcal{T}_{q,m,n,t}$, otherwise computing this dimension
readily provides a distinguisher with lower complexity. From this we conclude.
\end{proof}

Algorithm~\ref{algobases}, that iteratively computes the linear strand of the minimal resolution,
can be optimized using methods from sparse linear algebra and structured linear algebra,
as is done in~\cite{AlbLS01}. This might procure practical improvements, both on memory usage
and on time complexity, but the theoretical analysis of these improvements remains to be done.
A tentative implementation of the method of~\cite{AlbLS01} can be found in~\cite{githublinstrand}.
It would be very nice if this project could be continued, and maybe eventually incorporated in
some computer algebra systems.

\subsection*{An alternative approach through Koszul cohomology and the (block) Wiedemann algorithm}

Koszul cohomology \cite{Green84} allows to directly compute one syzygy space, with no need to iteratively
compute all the intermediate steps in the resolution.

Let $\code$ be a $[n,k]$-code, $S=\F[X_1,\dots,X_k]$
the polynomial ring in $k$ indeterminates,
and $\ev:S\longto\code\deux[\cdot]$ the graded evaluation map associated
to some basis $\mot{c}_1,\dots,\mot{c}_k$ of $\code$.

\begin{proposition}
For any $r$, the syzygy space $M_{r-1,r}(\code)$ can be computed as the cohomology of the sequence
\beq\label{courte}
\bigwedge\nolimits^rS_1\overset{\partial}{\longto}\left(\bigwedge\nolimits^{r-1}S_1\right)\tens\code\overset{\partial'}{\longto}\left(\bigwedge\nolimits^{r-2}S_1\right)\tens\code\deux
\eeq
i.e. as
\beq\label{kerd/imd'}
M_{r-1,r}(\code)=\ker(\partial')/\im(\partial)
\eeq
where the linear maps $\partial,\partial'$ are defined on the standard basis elements as
\beq\label{defd}
\begin{array}{cccc}
\partial: & \bigwedge^rS_1 & \longto & (\bigwedge^{r-1}S_1)\tens\code\\
 & X_{i_1}\wedge\cdots\wedge X_{i_r} & \mapsto & \sum_{j=1}^r(-1)^j(X_{i_1}\wedge\cdots\widehat{X}_{i_j}\cdots\wedge X_{i_r})\tens\mot{c}_{i_j}
\end{array}
\eeq
and
\beq\label{defd'}
\begin{array}{cccc}
\partial': & (\bigwedge^{r-1}S_1)\tens\code & \longto & (\bigwedge^{r-2}S_1)\tens\code\deux\\
 & (X_{i_1}\wedge\cdots\wedge X_{i_{r-1}})\tens\mot{c}_l & \mapsto & \sum_{j=1}^{r-1}(-1)^j(X_{i_1}\wedge\cdots\widehat{X}_{i_j}\cdots\wedge X_{i_{r-1}})\tens(\mot{c}_{i_j}\mot{c}_l)
\end{array}
\eeq
where notation $\widehat{X}_{i_j}$ means that $X_{i_j}$ is omitted.
\end{proposition}
\begin{proof}
Apply \cite[Th.(1.b.4)]{Green84} with $B=\code\deux[\cdot]$, $p=r-1$, $q=1$
(or \cite[Prop.~2.7]{Eisenbud05} with $M=\code\deux[\cdot]$, $i=r-1$, $j=r$).
\end{proof}

By construction $\partial,\partial'$ satisfy $\im(\partial)\subset\ker(\partial')$,
and elementary linear algebra allow to rephrase \eqref{kerd/imd'} as
\beq
M_{r-1,r}(\code)\simeq\ker(\partial'|_T)
\eeq
where $T$ is any complementary subspace to $\im(\partial)$ in $(\bigwedge^{r-1}S_1)\tens\code$.

Let us denote the standard basis of $(\bigwedge^{r-1}S_1)\tens\code$ by
\beq\label{basestandard}
\mot{e}_{i_1,\dots,i_{r-1};l}=(X_{i_1}\wedge\cdots\wedge X_{i_{r-1}})\tens\mot{c}_l
\eeq
for $1\leq i_1<\cdots<i_{r-1}\leq k$ and $1\leq l\leq k$.

\begin{lemma}\label{baseT}
The subfamily of the basis \eqref{basestandard} made of
those $\mot{e}_{i_1,\dots,i_{r-1};l}$
with $l\leq i_{r-1}$
is a basis of a complementary subspace $T$ to $\im(\partial)$ in $(\bigwedge^{r-1}S_1)\tens\code$.
\end{lemma}
\begin{proof}
First, a subfamily of a basis is linearly independent, hence a basis of its linear span $T$.

Next we show $\im(\partial)\cap T=0$. An element of $\im(\partial)$ is of the form
\beq
\sum_{1\leq i_1<\cdots<i_r\leq k}\lambda_{i_1,\dots,i_r}\sum_{1\leq j\leq r}(-1)^j\mot{e}_{i_1,\dots\widehat{i_j}\dots,i_r;i_j}.
\eeq
If such an element is in $T$, then the coefficient of each $\mot{e}_{i_1,\dots,i_{r-1};i_r}$ is $0$
(because $i_r>i_{r-1}$). However the coefficient of $\mot{e}_{i_1,\dots,i_{r-1};i_r}$ is $(-1)^r\lambda_{i_1,\dots,i_r}$.
Hence all $\lambda_{i_1,\dots,i_r}=0$.

To conclude, we show that $\im(\partial)+T$ contains all the $\mot{e}_{i_1,\dots,i_{r-1};l}$.
If $l\leq i_{r-1}$ we already have $\mot{e}_{i_1,\dots,i_{r-1};l}\in T$.
On the other hand, if $l>i_{r-1}$, then we can set $i_r=l$ and we find
$\mot{e}_{i_1,\dots,i_{r-1};l}=\partial(\mot{s})-\mot{t}$ where $\mot{s}=(-1)^rX_{i_1}\wedge\cdots\wedge X_{i_r}$
and where $\mot{t}=\sum_{j=1}^{r-1}(-1)^{r-j}\mot{e}_{i_1,\dots\widehat{i_j}\dots,i_r;i_j}\in T$
(because $i_j\leq i_r$ for $j\leq r-1$).
\end{proof}

\begin{lemma}We have
\beq\label{dimT}
\dim(T)=(r-1)\binom{k+1}{r}=\frac{k(k+1)}{r}\binom{k-1}{r-2}.
\eeq
\end{lemma}
\begin{proof}
We give two proofs.
For the first proof, we count the number of indices $i_1,\dots,i_{r-1};l$ in Lemma~\ref{baseT}.
Conditioning on $i_{r-1}=u$,
the number of $i_1,\dots,i_{r-2}$ with $1\leq i_1<\cdots<i_{r-2}<u$ is $\binom{u-1}{r-2}$
and the number of $l$ with $1\leq l\leq u$ is $u$. It follows $\dim(T)=\sum_{u\leq k}u\binom{u-1}{r-2}$,
and an easy manipulation with binomal coefficients gives \eqref{dimT}.

For the second proof, we observe that $\partial$ is injective, so that
$\dim(T)=\dim((\bigwedge^{r-1}S_1)\tens\code)-\dim(\bigwedge^rS_1)=k\binom{k}{r-1}-\binom{k}{r}$,
and we conclude likewise.
\end{proof}


\begin{remark}
The sequence \eqref{courte} continues with
\beq\label{longue}
\left(\bigwedge\nolimits^{r-2}S_1\right)\tens\code\deux\overset{\partial''}{\longto}\left(\bigwedge\nolimits^{r-3}S_1\right)\tens\code\deux[3]\longto\cdots\longto C\deux[r]\longto0
\eeq
and we have $\im(\partial')\subset U=\ker(\partial'')$.
Also identifying $T$ with $\coker(\partial)$, the map $\partial'$ then induces a map
\beq
\overline{\partial'}:T\longto U
\eeq
with $\dim\ker(\overline{\partial'})=\beta_{r-1,r}(\code)$ and $\dim\coker(\overline{\partial'})=\beta_{r-2,r}(\code)$.
Moreover, if $\code$ has regularity $2$, then:
\begin{itemize}
\item in \eqref{longue} we can replace $\code\deux,\code\deux[3],\dots,\code\deux[r]$ with $\F^n$
\item this sequence \eqref{longue} is acyclic.
\end{itemize}
This allows to compute
\beq
\dim(U)=n\sum_{i=2}^r(-1)^i\binom{k}{r-i}=n\binom{k-1}{r-2}.
\eeq
We then find
$\ind(\overline{\partial'})=\dim(T)-\dim(U)=\left(\frac{k(k+1)}{r}-n\right)\binom{k-1}{r-2}$
which, as expected, is the same as $\ind(\phi_r)$ given in Theorem~\ref{indphi}.
\end{remark}

\begin{remark}
On the other hand, if we compute the index of $\partial'|_T$ without corestriction of the target space,
we find, for $\code$ of regularity $2$,
\beq
\begin{split}
\ind(\partial'|_T)&=\dim(T)-\dim\left(\left(\bigwedge\nolimits^{r-2}S_1\right)\tens\F^n\right)\\
&=\frac{k(k+1)}{r}\binom{k-1}{r-2}-n\binom{k}{r-2}\;<\;\ind(\phi_r).
\end{split}
\eeq
In most applications we will have $\ind(\phi_r)<0$, hence also $\ind(\partial'|_T)<0$.
\end{remark}

\vspace{.5\baselineskip}

In the simplified version of our distinguisher, we do not need to actually compute the syzygy space $M_{r-1,r}=\ker(\partial'|_T)$,
nor its dimension $\beta_{r-1,r}$: we only want to decide whether it is $0$ or not.

Wiedemann's algorithm~\cite{Wiedemann86}, in its basic version, probabilistically finds a preimage of a nonsingular linear map $A$,
with only black-box queries to $A$.
This means that the overall cost of the algorithm does not involve the cost of general linear algebra,
but only the cost $c(A)$ of evaluation of $A$.
Several reductions provide variants of the algorithm that find a nonzero element in the kernel of a singular $A$,
at essentially the same cost.

This allows us to propose a ``Koszul-Wiedemann'' (KW) alternative implementation of the distinguisher.
We keep the setting of part 1 of Proposition~\ref{propdistrac}, in particular the integer $s$ should
satisfy \eqref{raccourci_admissible}, and the typical minimum distance $d_s$ should satisfy $d_s>k_s+1-\rcrit(\code_s)$
where $\rcrit(\code_s)=\frac{k_s(k_s+1)}{n_s}$.
Then we apply Wiedemann's algorithm to find a nonzero element in the kernel
of the linear map $\partial'|_T$ associated with $\code_s$ and $r=r^*-s$:
\begin{enumerate}[(a)]
\item if Wiedemann's algorithm succeeds, we declare $\code$ is in $\mathcal{T}_{q,m,n,t}$
\item otherwise, we declare $\code$ is probably random.
\end{enumerate}
\begin{proposition}\label{propcomplexiteKW}
Under part 1 of Heuristic~\ref{heurdef}, this procedure is correct in the following sense:
\begin{itemize}
\item in case (a), the code $\code$ indeed is in $\mathcal{T}_{q,m,n,t}$ with probability very close to $1$
\item in case (b), the code $\code$ is random with nonnegligible probability (and we can then iterate the procedure in order to make this probability become very close to $1$).
\end{itemize}
Each application of this procedure has complexity
\beq\label{kappaKW}
\kappa_{KW}=O\left(n_s^2\binom{k_s}{r-2}\left((r-1)^2\binom{k_s+1}{r}+\binom{k_s}{r-2}\right)\right).
\eeq
\end{proposition}
\begin{proof}
Correctness follows from the discussion just above, and part 1 of Proposition~\ref{propdistrac}.

Now consider the map $\partial'|_T$ associated with $\code_s$ and $r=r^*-s$:
\beq
\partial'|_T:T\longto(\bigwedge\nolimits^{r-2}S_1)\tens\F^{n_s}
\eeq
where $S$ now is the polynomial ring in $k_s$ indeterminates.
By~\eqref{dimT} the source space of this map has dimension
\beq
M=\dim(T)=(r-1)\binom{k_s+1}{r}
\eeq
while on the other end the target space has dimension
\beq
N=n_s\binom{k_s}{r-2}.
\eeq
Moreover, by formula \eqref{defd'}, this map sends each basis vector of $T$
to a vector of weight at most $(r-1)n_s$ in the standard basis of the target space.
Thus the cost of evaluation of $\partial'|_T$ satisfies
\beq
c(\partial'|_T)\leq M(r-1)n_s.
\eeq
Now the estimate for the complexity of Wiedemann's algorithm is
\beq
O(N(c(\partial'|_T)+N))
\eeq
and we conclude.
\end{proof}

It is also possible to use the block version of Wiedemann's algorithm \cite{Coppersmith94}.
This definitely improves the complexity, but does not qualitatively change the theoretical
analysis~\eqref{kappaKW} as this improvement can be incorporated in the constant in the big O notation.

Two tentative implementations of the KW version of the distinguisher can be found in~\cite{githubkoszul},
one with the original Wiedemann algorithm, one with the block Wiedemann algorithm.
It was especially tempting to leverage the very optimized implementation of block Wiedemann
that can be extracted from~\cite{cadonfs}. However several technical difficulties appear,
such as the fact that this implementation of block Wiedemann was not designed to make black-box queries to the
linear map, or also the fact that it uses heuristics adapted to the specifics of the linear
systems appearing in the number field sieve, while these heuristics can lead to an unexpected
behaviour in our use case.

\subsection*{Application to the \Classic parameters}

The following two tables show how the distinguishers from Propositions~\ref{propdistrac} and~\ref{propcomplexiteKW}
apply to \Classic parameters.

In Figure~\ref{tablef} we use for $r^*$ the lower bound $\widehat{f}$ proven in Corollary~\ref{corlowerboundGop},
and then choose the maximum possible shortening order $s$ satisfying~\eqref{raccourci_admissible}.
We observe that the KW approach provides a noticeable complexity improvement over the iterative approach.

In Figure~\ref{tabler} we do the same thing but using the tight lower bound on $r^*$ stated in Remark~\ref{improvedcorlowerboundGop}.
This provides a further improvement, for both approaches.

The reader should keep in mind that the
expression used for $\kappa_{\operatorname{iter}}$ assumes that linear algebra in dimension $n$ can be done
in complexity $n^\omega$, which only holds asymptotically.
Likewise the value used for $\kappa_{KW}$ in the tables is the inside of the big O in formula~\eqref{kappaKW}
so that some constant factor is discarded.
More fundamentally, we avoided defining a precise model of computation.
As a consequence, the exact numbers given here should not be taken too literally.
But still, they give a reasonably faithful idea of what is happening.
In particular we see that these complexity estimates significantly improve those from~\cite{AC:CouMorTil23},
although they remain practically unreacheable and well beyond security levels.

\begin{figure}[h!]
\begin{equation*}
\renewcommand*{\arraystretch}{1.1}
\begin{array}{|c|c|c|c|c|c|}
\hline
(n,m,t) & (3488,12,64) & (4608,13,96) & (6688,13,128) & (6960,13,119) & (8192,13,128)\\
\hline
\widehat{f}\rule{0pt}{10pt} & 427 & 683 & 939 & 867 & 939\\
\hline
s & 377 & 568 & 816 & 769 & 848\\
\hline
r & 50 & 115 & 123 & 98 & 91\\
\hline
[n_s,k_s] & [3111,391] & [4040,680] & [5872,848] & [6191,778] & [7344,816]\\
\hline
\rcrit & 49.27 & 114.62 & 122.61 & 97.89 & 90.78\\
\hline
\dist_{GV} & 921 & 1069 & 1650 & 1828 & 2256\\
\hline
\dist_{GV}^\perp & 55 & 102 & 122 & 108 & 110\\
\hline
\kappa_{\operatorname{iter}}\rule{0pt}{10pt} & 2^{528} & (2^{1080}) & (2^{1224}) & 2^{1030} & 2^{997}\\
\hline
\kappa_{KW}\rule{0pt}{10pt} & 2^{452} & 2^{916} & 2^{1038} & 2^{874} & 2^{847}\\
\hline
\end{array}
\end{equation*}
\vspace{-1.2\baselineskip}
\caption{complexity estimates using the lower bound $\widehat{f}$ for $r^*$}\label{tablef}
\end{figure}

\begin{figure}[h!]
\begin{equation*}
\renewcommand*{\arraystretch}{1.1}
\begin{array}{|c|c|c|c|c|c|}
\hline
(n,m,t) & (3488,12,64) & (4608,13,96) & (6688,13,128) & (6960,13,119) & (8192,13,128)\\
\hline
r^* & 448 & 736 & 1024 & 943 & 1024\\
\hline
s & 405 & 643 & 930 & 866 & 954\\
\hline
r & 43 & 93 & 94 & 77 & 70\\
\hline
[n_s,k_s] & [3083,363] & [3965,605] & [5758,734] & [6094,681] & [7238,710]\\
\hline
\rcrit & 42.86 & 92.47 & 93.69 & 76.21 & 69.74\\
\hline
\dist_{GV} & 933 & 1092 & 1693 & 1870 & 2306\\
\hline
\dist_{GV}^\perp & 50 & 89 & 103 & 92 & 93\\
\hline
\kappa_{\operatorname{iter}}\rule{0pt}{10pt} & 2^{469} & (2^{909}) & 2^{982} & 2^{841} & 2^{801}\\
\hline
\kappa_{KW}\rule{0pt}{10pt} & \mathbf{2^{401}} & \mathbf{2^{772}} & \mathbf{2^{834}} & \mathbf{2^{716}} & \mathbf{2^{683}}\\
\hline
\end{array}
\end{equation*}
\vspace{-1.2\baselineskip}
\caption{complexity estimates using the tight $r^*$}\label{tabler}
\end{figure}

In all instances we observe that the condition $\dist_{GV}(n_s,k_s)>k_s+1-\rcrit(\code_s)$ is satisfied
with a large margin.
This means that if the reader is reluctant to use part 1 of Heuristic~\ref{heurdef} with $r$ so close to $\rcrit$,
then it would actually be possible to choose a smaller $s$ and still distinguish (at the cost of a slightly worse complexity).

Concerning part 2 of Heuristic~\ref{heurdef}, we see that the condition $\dist_{GV}(n_s,n_s-k_s)>\rcrit(\code_s)$
is satisfied in both tables for the parameter sets $(3488,12,64)$, $(6960,13,119)$ and $(8192,13,128)$.
For $(6688,13,128)$ the condition holds when we use the tight estimate for $r^*$.
For $(4608,13,96)$ it fails, so there might be some doubt on the validity of $\kappa_{\operatorname{iter}}$ in this particular case.
On the other end we recall that Proposition~\ref{propcomplexiteKW} does not rely on this part of the heuristic,
hence $\kappa_{KW}$ is unaffected.

\subsection*{Asymptotics}

Fix a base field cardinality $q$, for instance $q=2$, and a (dual) rate $R$.
In \cite{rationale22} it is suggested to take a primal code of rate between $0.7$ and $0.8$,
so passing to the dual gives $0.2\leq R\leq 0.3$. However here we allow \emph{any} $R$.
Then for $n\to\infty$ set:
\begin{itemize}
\item $m=\lceil\log_q(n)\rceil=\log_q(n)+O(1)$
\item $k\approx Rn\;$ such that:
\item $t=\frac{k}{m}\;$ is an integer.
\end{itemize}
A key observation then is that the lower bound $f$ on $\rmax$ of dual alternant codes in Theorem~\ref{thlowerboundalt},
or equivalently, the number $f$ of columns of the matrix $\mat{\Phi}$ in \eqref{defPhi},
is very close to $k$:
\begin{lemma}\label{fgrand}
We have
\beq
f=\left(1-\frac{\log_q\log_q(n)}{\log_q(n)}+O\left(\frac{1}{\log_q(n)}\right)\right)k.
\eeq
\end{lemma}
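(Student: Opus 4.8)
The plan is to unwind the definition of $f$ and estimate each piece. Recall $f = (e+1)t - \frac{q^{e+1}-1}{q-1}$ where $e = \lfloor \log_q(t-1)\rfloor$. First I would note that with $k \approx Rn$ and $m = \log_q(n) + O(1)$, we have $t = k/m \approx \frac{Rn}{\log_q(n)}$, so $\log_q(t) = \log_q(n) - \log_q\log_q(n) + O(1)$. Hence
\begin{equation}
e = \lfloor \log_q(t-1)\rfloor = \log_q(n) - \log_q\log_q(n) + O(1).
\end{equation}

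Next I would write $f = (e+1)t - \frac{q^{e+1}-1}{q-1}$ and compare the two terms to $k = mt$. For the leading term, $(e+1)t = \frac{e+1}{m}\,k$, and since $m = \log_q(n) + O(1)$ while $e+1 = \log_q(n) - \log_q\log_q(n) + O(1)$, we get
\begin{equation}
\frac{e+1}{m} = 1 - \frac{\log_q\log_q(n)}{\log_q(n)} + O\!\left(\frac{1}{\log_q(n)}\right).
\end{equation}
So $(e+1)t = \left(1 - \frac{\log_q\log_q(n)}{\log_q(n)} + O(1/\log_q(n))\right) k$, which is already the claimed expression. It then remains to check that the subtracted term $\frac{q^{e+1}-1}{q-1}$ is absorbed into the error term, i.e. that it is $O\!\left(\frac{k}{\log_q(n)}\right)$.

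For this last point I would observe $q^{e+1} \leq q \cdot q^{\log_q(t-1)} = q(t-1) < qt$, so $\frac{q^{e+1}-1}{q-1} < \frac{qt}{q-1} = O(t)$; and since $t = k/m = O(k/\log_q(n))$, this term is indeed $O(k/\log_q(n))$, hence swallowed by the error term already present. Combining the three estimates gives the stated formula for $f$.

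The main obstacle, such as it is, is bookkeeping: one must be careful that the $O(1)$ in $m = \lceil \log_q(n)\rceil$ and in $e$ do not, after division, produce an error worse than $O(1/\log_q(n))$ — but since both $e+1$ and $m$ are $\log_q(n)(1+o(1))$, an additive $O(1)$ discrepancy between them contributes only $O(1/\log_q(n))$ to the ratio, which is exactly the error term claimed. There is no deep difficulty; the statement is essentially a direct computation from the definitions of $e$, $f$, $t$, $m$.
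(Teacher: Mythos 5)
Your proof is correct and follows essentially the same route as the paper's: the paper's (very terse) argument is exactly the two estimates $e=\log_q(n)-\log_q\log_q(n)+O(1)$ and $f=et+O(t)$, which you derive and combine with a bit more bookkeeping. Nothing to add.
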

\begin{proof}
Direct consequence of:
\begin{itemize}
\item $e\;=\;\lfloor\log_q(t-1)\rfloor\;=\;\log_q(n)-\log_q\log_q(n)+O(1)$
\item $f\;=\;(e+1)t-\frac{q^{e+1}-1}{q-1}\;=\;et+O(t)$.
\end{itemize}
\end{proof}
In the case of binary Goppa codes, the $\widehat{f}$ from Corollary~\ref{corlowerboundGop}
and the $r^*$ from Remark~\ref{improvedcorlowerboundGop} improve
the $f$ from Theorem~\ref{thlowerboundalt}.
However one could show that they all have the same asymptotics. So, in order to distinguish these codes
from random codes, we will not care about the extra structure: treating Goppa codes as alternant codes will suffice.
(Still, this could be used to distinguish Goppa from ``general'' alternant codes.)


Now, under Heuristic~\ref{heurdef}, we have:
\begin{theorem}\label{thcomplexiteasympt}
Asymptotically, $q$-ary alternant (including Goppa) codes of dual rate $R$ can be distinguished from random codes,
with complexity at most
\beq\label{complexite}
\kappa=q^{\left(\frac{2R^2}{1-R}+o(1)\right)\frac{(\log_q\log_q(n))^3}{(\log_q(n))^2}n}.
\eeq
\end{theorem}
\begin{proof}
We use part 1 of Proposition~\ref{propdistrac} with $\mathcal{T}=\Alt^\perp$, and $r^*=f$ as provided by Theorem~\ref{thlowerboundalt}.
The analysis will be easier if rephrased in terms of $k_{r^*}=k-r^*$
and $r=r^*-s$.
First, Lemma~\ref{fgrand} gives
\beq
k_{r^*}\;=\;k-f\;\sim\; R\frac{\log_q\log_q(n)}{\log_q(n)}n.
\eeq
Let us then turn to condition~\eqref{raccourci_admissible}. Fix an $\epsilon>0$,
and set
\beq
r=\left\lceil(1+\epsilon)\frac{R^2}{1-R}\left(\frac{\log_q\log_q(n)}{\log_q(n)}\right)^2n\right\rceil\ll k_{r^*}
\eeq
and $s=r^*-r$. Then $k_s=k_{r^*}+r$ and $n_s=n-k+k_{r^*}+r$, so for $n$ large enough we get
\beq\textstyle
\frac{k_s(k_s+1)}{n_s}=\frac{(k_{r^*}+r)(k_{r^*}+r+1)}{n-k+k_{r^*}+r}\approx\frac{k_{r^*}^2}{n-k}\approx\frac{R^2}{1-R}\left(\frac{\log_q\log_q(n)}{\log_q(n)}\right)^2n<r=r^*-s.
\eeq
Thus, for any $\epsilon>0$, condition~\eqref{raccourci_admissible} is satisfied with this value of $r$ as soon as $n$ is large enough, say larger than some $n(\epsilon)$.
But this means precisely that as $n\to\infty$, we can let $\epsilon=\epsilon(n)\to0$ while still satisfying the condition.
We then get
\beq
r\;\sim\;\frac{R^2}{1-R}\left(\frac{\log_q\log_q(n)}{\log_q(n)}\right)^2n.
\eeq
Moreover the shortened code $\code_s$ has rate $\frac{k_s}{n_s}=\frac{k_{r^*}+r}{n-k+k_{r^*}+r}=o(1)$,
so by Remark~\ref{heurdefasympt} both conditions in Heuristic~\ref{heurdef} are satisfied.
Thus the hypotheses of Proposition~\ref{propdistrac} are satisfied, and both the iterative and the KW version of the distinguisher
work with these parameters. It turns out the KW approach gives a slightly better asymptotic complexity.

Now in $\eqref{kappaKW}$ we see that, up to factors of lesser order, this complexity is essentially $\binom{k_s}{r}^2=\binom{k_{r^*}+r}{r}^2$.
Using $r\ll k_{r^*}$ and Stirling's formula we find $\log_q\binom{k_{r^*}+r}{r}\sim\log_q\binom{k_{r^*}}{r}\sim r\log_q\left(\frac{k_{r^*}}{r}\right)\sim r\log_q\log_q(n)$, and we conclude.
\end{proof}

In \cite[\S3.4]{security22}, the security level $2^b$
--- or less formally, the number $b$ of targeted ``bits of security'' ---
of the \Classic system, essentially defined as the complexity of the currently best attack (namely, by information set decoding),
is shown asymptotically to be linear in $t\propto n/\log(n)$.
Then in~\eqref{complexite} we have
\beq
\frac{\log(\kappa)}{n/\log(n)}\propto\frac{(\log\log(n))^3}{\log(n)}\to 0
\eeq
which means that the complexity of our distinguisher is subexponential in $b$
(although, admittedly, only very slightly so).

\vspace{.5\baselineskip}

As already mentioned, the existence of our subexponential distinguisher does not logically result
in a downgrade of the security of the McEliece cryptosystem and its variants.
All we can conclude is that the assumption that the original McEliece cryptosystem is OW-CPA
is an ad hoc assumption, certainly backed by decades of practical experience, but that cannot
easily be reduced to more fundamental theoretical justifications.

It also leaves open the question of whether some of the ideas introduced here,
possibly combined with other arguments, could lead to a subexponential key recovery algorithm.

\begin{credits}
\subsubsection{\ackname}
The author learned a lot from the code-based cryptography seminar organized by Jean-Pierre Tillich at Inria Paris.
More important perhaps, without his friendly pressure, probably this work would have stayed eternally in limbo.
He should be thanked for all that.
The very favourable environment towards research at ANSSI in general,
and at LCR in particular, was also highly helpful.
Henri~Gilbert's advices led to improvements in several aspects of the presentation.
Anonymous members of LCR unsuccessfully suggested the title ``The Dyztynguysher'' for this work.

The author is part of the ANR-21-CE39-0009~BARRACUDA and ANR-21-CE39-0008~KLEPTOMANIAC research projects.
Some of the computations were done on the \texttt{lame} servers in Télécom Paris.

The use of Koszul cohomology in this updated version of the paper originates from a suggestion by Wouter Castryck.
The author also would like to thank the Eurocrypt 2025 anonymous referees for their comments.
\end{credits}

\bibliography{abbrev2,crypto,MR,perso}
\bibliographystyle{splncs04}

\end{document}